%% file: main.tex
\documentclass[11pt,letterpaper]{article}
\nonstopmode
\usepackage[margin=1in]{geometry}

\usepackage{amsmath,amssymb,amsthm}
\usepackage{tikz}
\usetikzlibrary{arrows.meta,calc}
\usepackage{caption}
\usepackage{placeins,booktabs,tabularx}

\usepackage{graphicx}
\usepackage{microtype}
\usepackage{algorithm}
\usepackage{algorithmicx}
\usepackage{algpseudocode}
\usepackage{braket}
\usepackage{quantikz}
\definecolor{circuitInk}{HTML}{18334A}
\definecolor{circuitSource}{HTML}{25618D}
\definecolor{circuitAux}{HTML}{7151A0}
\definecolor{circuitClean}{HTML}{14786E}
\tikzset{
  circuit wire/.style={draw=circuitInk,line width=.65pt},
  circuit gate/.style={draw=circuitSource,fill=circuitSource!6,
    rounded corners=2pt,line width=.85pt,align=center,font=\normalsize},
  circuit auxiliary/.style={circuit gate,draw=circuitAux,fill=circuitAux!6},
  circuit tag/.style={draw=circuitClean,fill=circuitClean!5,
    rounded corners=2pt,inner xsep=5pt,inner ysep=4pt,
    text=circuitClean,font=\small},
  circuit label/.style={font=\small,text=circuitInk}
}
\usepackage{hyperref}
\usepackage{cleveref}
\hypersetup{colorlinks=true,linkcolor=circuitInk,citecolor=circuitInk,urlcolor=circuitSource,
  pdftitle={Unitary RQL Equals RQL},pdfauthor={Quinten Tupker},
  pdfsubject={Technical manuscript accompanying the QIP extended abstract}}

\newtheorem{theorem}{Theorem}[section]
\newtheorem{lemma}[theorem]{Lemma}
\newtheorem{proposition}[theorem]{Proposition}
\newtheorem{corollary}[theorem]{Corollary}
\newtheorem{definition}[theorem]{Definition}
\newtheorem{remark}[theorem]{Remark}

\newcommand{\RL}{\mathsf{RL}}

\newcommand{\DSPACE}{\mathsf{DSPACE}}

\newcommand{\BQL}{\mathsf{BQL}}
\newcommand{\BQUL}{\mathsf{BQ_UL}}

\newcommand{\RQL}{\mathsf{RQL}}
\newcommand{\RQUL}{\mathsf{RQ_UL}}
\newcommand{\QTM}{\mathsf{QTM}}
\newcommand{\UQTM}{\mathsf{UQTM}}

\newcommand{\Lclass}{\mathsf{L}}
\newcommand{\BPL}{\mathsf{BPL}}
\newcommand{\PL}{\mathsf{PL}}
\newcommand{\QL}{\mathsf{QL}}

\DeclareMathOperator{\Tr}{Tr}
\DeclareMathOperator{\poly}{poly}

\newcommand{\norm}[1]{\left\lVert #1 \right\rVert}

\providecommand{\proj}[1]{\ket{#1}\!\bra{#1}} 
\DeclareMathOperator{\vecf}{vec}

\title{Unitary $\RQL$ Equals $\RQL$}
\author{
  Quinten Tupker\thanks{CWI, Amsterdam, The Netherlands. Email: qmtupker@gmail.com}
}
\date{}

\input{presentation-options}
\begin{document}

\maketitle

\begin{abstract}
Intermediate measurements let quantum computations discard information
and reuse their workspace. Deferring all measurements can require storing
their entire history, which need not preserve logarithmic space.
Fefferman and Remscrim proved that measurements can nevertheless be
eliminated with two-sided bounded error, and asked whether the same holds
with one-sided error~\cite{FeffermanRemscrim21}.

We prove $\RQL_\Gamma=\RQUL_\Gamma$ for the standard gate set
$\Gamma=\{H,T,\mathrm{CNOT}\}$, preserving polynomial time, logarithmic
space, and exactly zero acceptance on no-instances. Our proof extends
the density-matrix doubling method of Girish, Raz and Zhan
\cite{GirishRazZhan21} to general channels, including resets and classical
memory. We also prove measurement elimination for broader finite gate
sets with exact inverses, including suitable gates with transcendental
entries, and establish gate-set independence for a specified family of
algebraic gate sets.

A separate history-checking construction proves
$\mathsf{QMAL}_{1,\mathcal G}=\mathsf{Q_UMAL}_{1,\mathcal G}$
under explicit gate assumptions: measurements can also be eliminated
from logarithmic-space quantum verification while preserving perfect
completeness.
\end{abstract}

\input{intro}
\input{prelims}
\input{result}
\input{result-qmal}
\input{conclusion}

\bibliographystyle{alpha}
\bibliography{bibliography,submission-references}

\clearpage
\appendix
\input{appendix-doubling}
\input{result-bql}
\input{appendix-background}
\input{appendix-formal}

\end{document}

%% file: presentation-options.tex
\newif\ifStandardGateLead
\StandardGateLeadtrue

%% file: intro.tex
\section{Introduction}
Measurements allow a quantum computation to extract classical information,
discard information, and reuse its workspace. In polynomial time, these
operations can be simulated by a unitary circuit that retains the discarded
information in ancillary qubits and measures only at the end
\cite{AharonovKitaevNisan98,NielsenChuang10}. This construction need not
preserve a space bound: recording $T$ single-qubit measurement outcomes
can require $T$ additional qubits. We ask whether measurements can be
eliminated while retaining logarithmic workspace and a one-sided error
guarantee.

\subsection{Background}
\label{sec:intro-background}
Watrous established the basic complexity-theoretic framework for
space-bounded quantum computation and its classical simulation bounds
\cite{watrous1999space,Watrous03}. We consider polynomial-time computations
with $O(\log n)$ quantum work qubits and classical work bits, and read-only
access to an input of length $n$.

The class $\BQL$ allows intermediate measurements and classical control.
It requires acceptance probability at least $2/3$ on yes-instances and at
most $1/3$ on no-instances. Its unitary counterpart, $\BQUL$, allows only
a final measurement. Fefferman and Remscrim proved
$\BQL=\BQUL$, preserving logarithmic space and polynomial time
\cite[Theorem~1]{FeffermanRemscrim21}.

The one-sided class $\RQL$ instead requires acceptance probability exactly
zero on no-instances; $\RQUL$ is its unitary counterpart. Both models here
are absolutely halting: every branch on every input halts within a common
polynomial bound. An additive approximation preserves a two-sided gap but
can change zero acceptance into a small positive probability. Fefferman
and Remscrim therefore left $\RQL=\RQUL$ as a separate question
\cite[Open Problem~1]{FeffermanRemscrim21}.

Girish, Raz and Zhan independently eliminated measurements for unital
computations, whose channels preserve the maximally mixed state
\cite[Section~5]{GirishRazZhan21}. Their density-matrix doubling method
represents a mixed state by a vector on two registers. Unitality bounds
the norms of the resulting operators, but excludes a qubit reset. We
extend this construction to general channels while preserving zero
acceptance on no-instances.

Exact acceptance conditions also make the gate convention important.
Approximate gate synthesis need not preserve a zero probability
\cite{MelkebeekWatson12,LeGallLiuWang23}. We therefore state the gate
assumptions explicitly. Figure~\ref{fig:logspace-map} summarizes prior
logarithmic-space results under the cited gate conventions;
Appendix~\ref{app:space-background} gives further background.
\input{space-complexity-map}
\FloatBarrier

\input{intro-main-results}

\subsection{Proof idea: preserve zero acceptance and retain a signal}
Density-matrix doubling represents $X$ by
$\vecf(X)=\sum_{i,j}X_{ij}\ket i\ket j$.
For a channel $\Phi_t(X)=\sum_aK_{t,a}XK_{t,a}^{\dagger}$, its doubled
operator is $\sum_aK_{t,a}\otimes\overline{K_{t,a}}$.
The two registers use twice the original number of qubits.

The one-sided argument starts from a simple observation. If $K_h$ is
the product of Kraus operators along history $h$, and $\Pi$ projects onto
acceptance, then
\[
 p=\sum_h\|\Pi K_h\ket0\|^2=0
 \quad\Longrightarrow\quad \Pi K_h\ket0=0\quad\text{for every }h.
\]
Consequently, changing only the right factor in each doubled term cannot
create acceptance: every accepting term still contains the same zero
left vector. Accuracy is needed to retain acceptance on yes-instances.

The difficulty is normalization. A doubled qubit reset has operator
norm $\sqrt2$, so normalizing every step separately can lose an
exponential factor. We use reference states that follow the source
channels, with a small maximally mixed contribution added at each step.
Their square roots and inverse square roots reweight the right factors.
Each reweighted step has norm at most one, consecutive weights cancel,
and the final accepting probability remains inverse polynomial.
These weights are related to Petz recovery
\cite{Petz86,Petz88,BarnumKnill02,GilyenLloydMarvianQuekWilde22};
Section~\ref{sec:doubling-framework} explains the relation precisely.

A unitary circuit implements an approximation to each reweighted step
in the part selected by zero scratch registers, while retaining the
left Kraus factors exactly. A logarithmic counter selects the product
of those parts without storing all outcomes. Unitary amplification then
raises the accepting probability to a constant.
Figure~\ref{fig:rql-roadmap} summarizes the construction.
\input{figure-rql-roadmap}
\FloatBarrier

Section~\ref{sec:rql-standard} proves the Clifford+$T$ case by following
this algorithm. Sections~\ref{sec:rql-adjoints} and
\ref{sec:rql-algebraic} then address broader gate sets.
The implementation of the right factors uses the established two-sided
measurement-elimination theorem: reference-entry estimates are converted
into unitary circuits before the one-sided simulation is constructed.
There is no dependence on the one-sided conclusion. The bounded-error
application in Appendix~\ref{sec:doubling-bql} uses the same implementation
and is not an independent proof of $\BQL=\BQUL$.

Section~\ref{sec:qmal-one} gives a separate construction for quantum
witnesses with perfect completeness. It tests a history of rejection
observables supplied by Merlin and uses neither reweighting nor two-sided
measurement elimination. Supporting mathematical lemmas are checked in
Lean; Appendix~\ref{app:formal-verification} states the scope.

\paragraph{Use of generative AI.}
AI tools were used in drafting this article. They were also used in
developing and checking parts of the mathematical arguments, searching
and checking the literature,
revising the exposition, and constructing formal proofs of
supporting lemmas.
The Lean proofs of supporting mathematical lemmas are available at
\url{https://github.com/ExampleOne/unitary-rql-lean}.


%% file: space-complexity-map.tex
\begingroup
\hypersetup{hidelinks}
\captionsetup{font=footnotesize}
\tikzset{
  space node/.style={draw=circuitInk!60,fill=black!3,rounded corners=3pt,
    line width=.65pt,inner xsep=6pt,inner ysep=5pt,align=center,
    font=\small,text=circuitInk},
  space quantum/.style={space node,draw=circuitSource,fill=circuitSource!5},
  space edge/.style={-{Stealth[length=4.5pt,width=3.5pt]},
    draw=circuitInk!85,line width=.7pt},
  space cite/.style={font=\scriptsize,fill=white,inner sep=2pt,
    text=circuitInk,align=center},
  space note/.style={font=\footnotesize,align=center,text=circuitInk},
  space crossing/.style={preaction={draw=white,line width=3.4pt}}
}

\begin{figure}[!htbp]
\centering
\begin{tikzpicture}[x=.97cm,y=.80cm]
  \node[space node] (upper) at (5.7,11.4)
    {$\mathsf P\cap\DSPACE(O(\log^2 n))$};
  \node[space quantum] (pl) at (5.7,9.6)
    {$\PL=\mathsf{PrQL}$\quad{\scriptsize\cite{watrous1999space,Watrous03}}\\[2pt]
     $=\mathsf{PostBQL}=\mathsf{PostBQ_UL}$\\[-1pt]
     {\scriptsize\cite{LeGallNishimuraYakaryilmaz21,Tani22}}};
  \node[space quantum] (nql) at (2.1,7.35)
    {$\mathsf{NQL}=\mathsf{NQ_UL}$\\
     $=\mathsf{coC_{=}L}$\\[-1pt]
     {\scriptsize\cite{watrous1999space,FeffermanRemscrim21}}};
  \node[space quantum] (bql) at (9.0,7.35)
    {$\BQL=\BQUL$\\
     $=\mathsf{Q_UMAL}$\\[-1pt]
     {\scriptsize\cite{FeffermanRemscrim21}}};
  \node[space node] (nl) at (1.45,5.0)
    {$\mathsf{NL}=\mathsf{coNL}$\\[-1pt]
     {\scriptsize\cite{Immerman88,Szelepcsenyi88}}};
  \node[space quantum] (qma) at (9.0,5.25)
    {$\mathsf{coQ_UMAL}_1$};
  \node[space node] (bpl) at (4.0,3.45) {$\BPL$};
  \node[space quantum] (rql) at (8.3,3.45) {$\RQL$};
  \node[space node] (rl) at (2.8,1.6) {$\RL$};
  \node[space quantum] (rqul) at (8.3,1.5)
    {$\RQUL=\mathsf{RQ_UMAL}$\\[-1pt]
     {\scriptsize\cite{FeffermanRemscrim21}}};
  \node[space node] (l) at (5.55,0.0) {$\Lclass$};

  \draw[space edge] (l) -- (rl);
  \draw[space edge] (l) -- (rqul);
  \draw[space edge] (rl) -- (bpl);
  \draw[space edge] (rl) -- (nl);
  \draw[space edge] (rl) -- (rql);
  \draw[space edge] (rqul) -- (rql);
  \draw[space edge] (nl) -- (nql);
  \draw[space edge] (rql) -- (nql);
  \draw[space edge,space crossing] (bpl) -- (bql);
  \draw[space edge] (rql) -- node[space cite,right=3pt]
    {\cite{FeffermanRemscrim21}} (qma);
  \draw[space edge] (qma) -- node[space cite,right=3pt]
    {\cite{FeffermanRemscrim21}$^{\dagger}$} (bql);
  \draw[space edge] (nql) -- (pl);
  \draw[space edge] (bql) -- (pl);
  \draw[space edge] (pl) -- node[space cite,right=3pt]
    {\cite{BorodinCookPippenger83}} (upper);

  \node[space note,text width=11.7cm] at (5.7,-1.35) {
    Exact decision: $\Lclass\subseteq\mathsf{EQ_UL}\subseteq\mathsf{EQL}
       \subseteq\RQL\cap\mathsf{coRQL}$;\\[2pt]
    $\mathsf{EQ_UL}\subseteq\RQUL\cap\mathsf{coRQ_UL}$.
    These are definitional containments.};
\end{tikzpicture}
\caption{\textbf{Ordinary logarithmic space.}
An arrow $A\to B$ means $A\subseteq B$, not necessarily strict.
Unlabelled arrows follow from the definitions and displayed equalities;
crossing lines are not junctions. Transitive edges and some classes are omitted.
The subscript $U$ means unitary until the final measurement.
$R$ means bounded error with perfect soundness, $N$ means positive acceptance
on yes-instances and zero on no-instances, and $\mathsf{PrQL}$ means
two-sided unbounded error (often denoted $\QL$).
$\mathsf{Post}$ allows postselection; $\mathsf{C_{=}L}$ is the zero-test class for
$\mathsf{GapL}$ functions. $\mathsf{Q_UMAL}_1$ has perfect completeness.
$\dagger$: use $\mathsf{Q_UMAL}_1\subseteq\mathsf{Q_UMAL}=\BQL$
and closure of $\BQL$ under complement.
We use polynomial-time, logspace-uniform models and the standard algebraic
gate conventions of the cited results.
In particular, no arbitrary-gate-set invariance for exact or one-sided error is claimed.
This figure summarizes prior results. The reverse inclusion
$\RQL_{\mathcal G}\subseteq\RQUL_{\mathcal G}$ proved in
Theorems~\ref{thm:rql-adjoints} and~\ref{thm:rql-cm} holds under
their stated gate assumptions and is not added to this prior-work diagram.
The layout follows the containment-map convention of Watrous~\cite[Fig.~9]{watrous1999space}.}
\label{fig:logspace-map}
\end{figure}


\endgroup

%% file: intro-main-results.tex
\subsection{Main results}
\label{sec:main-results}
Put $\Gamma=\{H,T,\mathrm{CNOT}\}$, where
$T=\operatorname{diag}(1,e^{i\pi/4})$.
For the two RQL results below, we use fixed finite sets of
one- and two-qubit gates;
all computations have polynomial time and logarithmic classical and
quantum workspace, with every branch on every input halting within a
common polynomial bound. An exact gate implementation is a fixed finite
circuit whose extra qubits start and end in zero on every input. Unless
we explicitly allow a global phase, its action is exactly the specified
matrix $U$, rather than a scalar multiple $e^{i\theta}U$.
We write $M^\dagger=\overline M^{\mathsf T}$ for the Hermitian conjugate
(conjugate transpose) of a matrix. For a unitary gate,
$U^\dagger=U^{-1}$ is its inverse.
\ifStandardGateLead
For the standard gate set, Theorem~\ref{thm:rql-standard} proves
\[
  \boxed{\RQL_\Gamma=\RQUL_\Gamma.}
\]
Thus an adaptive Clifford+$T$ computation with one-sided bounded error
can be simulated with only a final measurement, preserving polynomial
time, logarithmic space, and exactly zero acceptance on no-instances.
The two theorems below extend this conclusion to broader gate sets.
\else
We prove two complementary statements.
\fi

\begin{enumerate}
 \item \textbf{Gate sets with exact inverses
 (Theorem~\ref{thm:rql-adjoints}).}
 Suppose the entries of every gate in $\mathcal G$ can be approximated
 to error $2^{-b}$ in deterministic space $O(b)$ and time $2^{O(b)}$.
 This condition concerns computing numerical descriptions of the gates;
 the primitive gates themselves are applied exactly.
 If circuits over $\mathcal G$ implement every gate of $\Gamma$ and
 every inverse $U^\dagger$, for $U\in\mathcal G$, exactly in this sense,
 then
 \[
   \boxed{\RQL_{\mathcal G}=\RQUL_{\mathcal G}.}
 \]
 Exact conjugates and controlled versions of the additional gates are
 unnecessary. In particular, this covers
 $\Gamma\cup\{U_1,U_1^\dagger,\ldots,U_k,U_k^\dagger\}$ for any
 fixed family whose entries satisfy these bounds, even when they are
 transcendental.
 This statement eliminates measurements within the given gate set;
 it does not identify the classes defined by different gate sets.

 \item \textbf{CM-field gate-set independence
 (Theorem~\ref{thm:rql-cm}).}
 A \emph{number field} is a field obtained from $\mathbb Q$ by adjoining
 finitely many algebraic numbers. It is \emph{totally real} if every
 element has only real algebraic conjugates, meaning that all roots of
 its minimal polynomial over $\mathbb Q$ are real. For example,
 $\mathbb Q(\sqrt2)$ is totally real. A \emph{CM field} has the form
 $K(\sqrt{-a})$, where $K$ is totally real and every conjugate of
 $a\in K$ is positive. For example, $\mathbb Q(\sqrt2,i)$ is a CM field
 containing the Clifford+$T$ entries. Section~\ref{sec:rql-algebraic}
 gives the equivalent formal definition.
 Suppose that, after multiplying each primitive gate by its own fixed
 phase, all entries lie in one fixed CM field. If circuits over
 $\mathcal G$ implement the gates of $\Gamma$ exactly with extra qubits
 returned to zero, now allowing a global phase, then
 \[
   \boxed{\RQL_{\mathcal G}=\RQUL_{\mathcal G}
       =\RQL_{\Gamma}=\RQUL_{\Gamma}.}
 \]
 The additional gates need not have exact implementations of their
 inverses, complex conjugates, or controlled versions.
 The field and all phases are fixed
 independently of input length.
\end{enumerate}
The first result places no algebraic restriction on the entries but requires exact
inverses; the second removes that requirement under a field restriction
and identifies the resulting class with the Clifford+$T$ class. Both preserve zero acceptance
\emph{exactly}, despite approximation in the circuits implementing the
right Kraus factors.
The proofs use the established two-sided measurement-elimination theorem
to implement the right Kraus factors. Accordingly, the bounded-error application
in Appendix~\ref{sec:doubling-bql}
 is conditional on that implementation
 and is not an independent proof of $\BQL=\BQUL$.

\paragraph{Perfect completeness with quantum witnesses
(Theorem~\ref{thm:qmal-one}).}
There is also an exact history-checking construction for quantum
Merlin--Arthur verification. Here Merlin supplies an arbitrary quantum
witness, and its size counts towards the total logarithmic space bound.
For any fixed finite gate set $\mathcal G$ of constant arity whose
entries satisfy the bounds in item~1, implementing $H$, $X$, and Toffoli
exactly, and each primitive inverse
up to a global phase, we prove
\[
 \boxed{\mathsf{QMAL}_{1,\mathcal G}
       =\mathsf{Q_UMAL}_{1,\mathcal G}.}
\]
The subscript $1$ means that some witness is accepted with certainty on
each yes-instance; every witness is accepted with probability at most
$1/3$ on no-instances. All fixed gate implementations return their
extra qubits to zero. Section~\ref{sec:qmal-one} defines the models
and proves the theorem. The construction propagates the rejection observable backwards
and tests a history certificate supplied by Merlin. Paired calls to a
gate and its inverse give the necessary doubled coefficients exactly,
without controlled or conjugated versions of the gate. This construction
requires neither reweighting nor the two-sided measurement-elimination
theorem, but retains a quantum witness.

%% file: figure-rql-roadmap.tex
\begin{figure}[tbp]
  \centering
  \begin{tikzpicture}[x=1cm,y=1cm,
    roadmap/.style={circuit gate,text width=13.2cm,minimum height=1.25cm,
      inner sep=6pt,font=\small},
    route/.style={-{Stealth[length=5pt]},draw=circuitInk,line width=.8pt}]
    \node[roadmap] (double) at (0,0)
      {\textbf{1. Write the original computation on two registers}\\
       $\Phi_t(X)=\sum_a K_{t,a}XK_{t,a}^{\dagger}
       \quad\longrightarrow\quad M_t=\sum_a K_{t,a}\otimes\overline{K_{t,a}}$};
    \node[roadmap] (weight) at (0,-1.65)
      {\textbf{2. Reweight the right Kraus factors}\\
       $\|C_t\|\leq1$; acceptance on yes-instances remains inverse polynomial.};
    \node[roadmap,draw=circuitAux,fill=circuitAux!6] (compile) at (0,-3.3)
      {\textbf{3. Implement each step with a unitary circuit}\\
       Exact left factors; approximate right factors using
       $\BQL=\BQUL$~\cite{FeffermanRemscrim21}.};
    \node[roadmap] (compose) at (0,-4.95)
      {\textbf{4. Record failures coherently and amplify}\\
       Reuse the scratch registers; use the implemented circuit and its inverse.};
    \node[roadmap,draw=circuitClean,fill=circuitClean!5] (finish) at (0,-6.6)
      {\textbf{5. Measure once at the end: a $\RQUL_{\mathcal G}$ computation}\\
       Constant acceptance on yes-instances; zero acceptance on no-instances.};
    \foreach \a/\b in {double/weight,weight/compile,compile/compose,compose/finish}
      \draw[route] (\a.south)--(\b.north);
  \end{tikzpicture}
  \caption{The direct RQL simulation under the gate assumptions of
Theorem~\ref{thm:rql-standard} and its extension in Theorem~\ref{thm:rql-adjoints}.
The first two stages describe matrices;
    the last three implement the simulation in logarithmic space and
    polynomial time. Exact left Kraus factors preserve perfect soundness
    throughout, while reweighting and accurate right factors retain
    acceptance on yes-instances. The use of two-sided measurement
    elimination is a dependency of the construction.
    The CM-field theorem first converts the source computation to the
    standard gate set, then applies this simulation.}
  \label{fig:rql-roadmap}
\end{figure}

%% file: prelims.tex
\section{Preliminaries}
\input{qtm}

We define the following complexity classes.

Throughout, a machine solves a promise problem $(L_{\rm yes},L_{\rm no})$
with one-sided error if its acceptance probability is at least $2/3$ on
$L_{\rm yes}$ and is exactly zero on $L_{\rm no}$.
All machines below are \emph{absolutely halting}: there is a polynomial
$P$ such that, for every input $x$, every computation branch halts within
$P(|x|)$ steps. This requirement also applies outside the promise, where
no correctness condition is imposed.

\begin{definition}
  The class $\RQL_{\mathcal{G}}$ consists of promise problems solved with
  the one-sided error specified above by an absolutely halting,
  polynomial-time $\QTM_{\mathcal{G}}$ using $O(\log n)$ qubits and
  $O(\log n)$ classical work bits.
\end{definition}

\begin{definition}
  The class $\RQUL_{\mathcal{G}}$ is the class of promise problems that can be solved
  with the one-sided error specified above by an absolutely halting,
  polynomial-time $\UQTM_{\mathcal{G}}$ using $O(\log n)$ qubits and
  $O(\log n)$ classical work bits.
\end{definition}


%% file: qtm.tex
\subsection{Quantum Turing machines}

We use the \emph{classically controlled} model of space-bounded quantum computation, in which a classical Turing machine directs operations on a tape of qubits~\cite{TaShma13,FeffermanRemscrim21}.
The cited works use this model. Watrous' earlier model allows the
entire machine configuration to be in superposition~\cite{watrous1999space}.

\begin{definition}[Quantum Turing machine over $\mathcal{G}$]
\label{def:qtm}
Let $\mathcal{G}$ be a finite set of one- and two-qubit unitaries, for example $\mathcal{G} = \{H, T, \mathrm{CNOT}\}$.
A \emph{quantum Turing machine over $\mathcal{G}$}, or $\QTM_{\mathcal{G}}$,
is a deterministic Turing machine $M$ with a finite set of states $Q$,
a read-only input tape, and a classical work tape. It also has a
\emph{quantum work tape}, scanned by two heads. Each cell of this tape
holds a qubit initialized to $\ket0$.
The set $Q$ contains a halting accepting state $q_{\mathrm{acc}}$, a halting rejecting state $q_{\mathrm{rej}}$, a state $q_g$ for each gate $g \in \mathcal{G}$, and a measurement state $q_{\mathrm{meas}}$.
In each step $M$ behaves as an ordinary Turing machine, except that:
\begin{itemize}
  \item on entering $q_g$, the gate $g$ is applied to the qubit(s) under the quantum heads (a one-qubit gate acts on the qubit under the first head);
  \item on entering $q_{\mathrm{meas}}$, the qubit under the first quantum head is measured in the computational basis, and the outcome $b \in \{0,1\}$ is made available to the finite control, which determines the next step.
\end{itemize}
Measurement outcomes are the only way information flows from the quantum tape to the classical control.
$M$ \emph{accepts} if it halts in $q_{\mathrm{acc}}$; the acceptance probability $\Pr[M(x)\text{ accepts}]$ is taken over the measurement outcomes.

$M$ runs in \emph{space} $s(n)$ if on every input of length $n$ and every sequence of measurement outcomes it visits at most $s(n)$ cells of each work tape.
We require \emph{absolute halting}: every computation branch halts.
A time bound $t(n)$ means that every branch on every input of length $n$
halts within $t(n)$ steps, including inputs outside the promise.
Thus time is bounded in the worst case over both inputs and measurement
outcomes. Throughout this paper, space-$s$ machines satisfy such a bound
$t(n)=2^{O(s(n))}$; for logarithmic space this is polynomial.
\end{definition}

We distinguish two kinds of $\QTM_{\mathcal{G}}$ according to how they may use measurements.

\begin{definition}[General and unitary QTMs]
\label{def:general-unitary-qtm}
\leavevmode
\begin{enumerate}
  \item A \emph{general} $\QTM_{\mathcal{G}}$ may enter $q_{\mathrm{meas}}$ any number of times, so its subsequent classical and quantum operations may depend on intermediate measurement outcomes.
  \item A \emph{unitary} $\QTM_{\mathcal{G}}$, or $\UQTM_{\mathcal{G}}$, enters $q_{\mathrm{meas}}$ exactly once, as the last step of its computation: it measures a designated output qubit and then immediately halts, in $q_{\mathrm{acc}}$ if the outcome is $1$ and in $q_{\mathrm{rej}}$ otherwise.
\end{enumerate}
Every $\UQTM_{\mathcal{G}}$ is in particular a general $\QTM_{\mathcal{G}}$.
When we write $\QTM_{\mathcal{G}}$ without qualification we mean a general one.
\end{definition}

The classical control of a $\UQTM_{\mathcal{G}}$ never depends on the quantum tape, so on input $x$ it deterministically applies a fixed sequence of $2^{O(s)}$ gates to $O(s)$ qubits, followed by a single measurement.
In other words, space-$s$ $\UQTM_{\mathcal{G}}$s are equivalent to $\DSPACE(s)$-uniform families of unitary circuits of width $O(s)$ and size $2^{O(s)}$~\cite{FeffermanLin18,FeffermanRemscrim21}.
Likewise, general $\QTM_{\mathcal{G}}$s correspond to uniform circuits with intermediate measurements whose outcomes classically control later gates.
Under the usual universality and gate-computability assumptions, the
space-efficient Solovay--Kitaev theorem makes two-sided bounded-error
classes such as $\BQL$ independent of the gate set~\cite{FeffermanRemscrim21}.
This argument does not extend directly to one-sided error, since approximating gates can make a rejecting computation accept with small nonzero probability, which is why we make the gate set explicit.

%% file: result.tex
\section{Removing measurements from \texorpdfstring{$\RQL$}{RQL} computations}

\input{result-doubling}

\input{result-standard}
\input{result-adjoints}

\input{result-algebraic}

%% file: result-doubling.tex
\subsection{The reweighted doubling framework}
\label{sec:doubling-framework}

We first develop the linear-algebraic construction shared by the bounded-
and one-sided-error simulations. Let $T\ge1$, let $d$ be the dimension
of the source configuration register, and, for $1\le t\le T$, write
\begin{equation}\label{eq:doubling-source}
 \rho_0=\ket0\bra0,\qquad
 \rho_t=\Phi_t(\rho_{t-1}),\qquad
 \Phi_t(X)=\sum_a K_{t,a}XK_{t,a}^\dagger.
\end{equation}
Each $\Phi_t$ is trace preserving. The source acceptance probability is
$p=\Tr(\Pi\rho_T)$ for a computational-basis acceptance projector $\Pi$.
For logarithmic-space, polynomial-time computations, the register
includes the retained classical configuration and quantum workspace,
and $d,T\le n^{O(1)}$.

\paragraph{Vectorization and normalization.}
Use row vectorization
\begin{gather*}
 \vecf(X)=\sum_{i,j}X_{ij}\ket i\ket j,
 \qquad
 M_t=\sum_a K_{t,a}\otimes\overline{K_{t,a}},\\
 \vecf(\rho_t)=M_t\vecf(\rho_{t-1}).
\end{gather*}
This doubles the number of qubits used to index a matrix. It is a
linear representation of $\rho$, rather than a purification of it:
$\norm{\vecf(\rho)}^2=\Tr(\rho^2)$, and its normalized first-register reduced state
is $\rho^2/\Tr(\rho^2)$. The vector $\vecf(\sqrt\rho)$ is the
canonical purification.

The matrices $M_t$ need not be contractions. For example, resetting a
qubit gives $M=\ket{00}(\bra{00}+\bra{11})$, with operator norm
$\sqrt2$. Normalizing every step separately to obtain a unitary clean
block can lose an exponential factor over $T$ steps. This loss concerns
the implementation normalization; the density-vector norm itself is
always at least $1/\sqrt d$.

\paragraph{Reference states and contractive steps.}
We start the reference computation in the maximally mixed state.
At each step, we apply the original channel and then mix in a small
maximally mixed contribution. Following the channel adapts the reference
to the computation; the added contribution keeps every eigenvalue
bounded away from zero, so the inverse square roots below are controlled.
Set
\begin{equation}\label{eq:doubling-reference}
 \eta=1/T,\qquad \kappa=1+\eta,\qquad \sigma_0=I/d,
 \qquad
 \sigma_t=\frac{\Phi_t(\sigma_{t-1})+\eta I/d}{\kappa}.
\end{equation}
These are density operators satisfying
$\sigma_t\succeq I/(d(T+1))$; every eigenvalue is at least
$1/(d(T+1))$. Define
\begin{align}
 J_{t,a}&=\kappa^{-1/2}\sigma_{t-1}^{1/2}
             K_{t,a}^\dagger\sigma_t^{-1/2},
 &J_{t,*}&=\sqrt{\frac{\eta}{\kappa d}}\,\sigma_t^{-1/2},\nonumber\\
 W_t\ket v&=\sum_a J_{t,a}\ket v\ket a+J_{t,*}\ket v\ket *,
 &C_t&=\sum_a K_{t,a}\otimes\overline{J_{t,a}^\dagger}.
 \label{eq:doubling-operators}
\end{align}
The reference recurrence gives
\[
 W_t^\dagger W_t
 =\kappa^{-1}\sigma_t^{-1/2}
    \bigl(\Phi_t(\sigma_{t-1})+\eta I/d\bigr)\sigma_t^{-1/2}
 =I.
\]
Thus $W_t$ is an isometry: it preserves vector lengths. Choose a unitary dilation $D_t$ of
the Kraus operators and an abstract unitary extension
$U_{\overline W_t}$ of $\overline W_t$, padding the label space as
needed and setting $K_{t,*}=0$. On doubled registers $A,B$ and label
register $E$,
\[
 C_t=\bra0_E (U_{\overline W_t}^{\dagger})_{BE}(D_t)_{AE}\ket0_E.
\]
It is therefore a compression of a unitary, and $\norm{C_t}\le1$ in
Euclidean operator norm. This identity does not assume exact synthesis
of the abstract extension over a finite gate set.

\paragraph{Keeping the left factors exact.}
Trace preservation ensures that the original left preparation preserves
vector lengths: $\sum_a K_{t,a}^{\dagger}K_{t,a}=I$.
The preparation for the unweighted right factor would use Hermitian
conjugates, which need not satisfy the same condition:
$\sum_a K_{t,a}K_{t,a}^{\dagger}=\Phi_t(I)$ may differ from $I$.
The factors $J_{t,a}$ and the extra branch $J_{t,*}$ correct this
second normalization, as the identity $W_t^{\dagger}W_t=I$ shows.
Figure~\ref{fig:kraus-factors}(a) places both reference-state weights
on the right register. In the implementation of
Section~\ref{sec:rql-standard}, the original circuit $D_t$ remains exact
while the circuit for the right factors is approximated, as shown in
Figure~\ref{fig:kraus-factors}(b).
\input{circuit-kraus-factors}


\paragraph{Cancellation of weights and acceptance bounds.}
Vectorization gives
\[
 C_t\vecf(X)=\kappa^{-1/2}\vecf\!\left(
   \Phi_t(X\sigma_{t-1}^{1/2})\sigma_t^{-1/2}\right).
\]
Consecutive square roots cancel, yielding
\begin{equation}\label{eq:doubling-endpoint}
 C_T\cdots C_1\vecf(\rho_0)
 =\frac{\vecf(\rho_T\sigma_T^{-1/2})}{\sqrt{d\kappa^T}}.
\end{equation}
Let
\begin{equation}\label{eq:doubling-signal}
 q_*:=\norm{(\Pi\otimes I)C_T\cdots C_1\ket{00}}^2
     =\frac{\norm{\Pi\rho_T\sigma_T^{-1/2}}_F^2}{d\kappa^T}.
\end{equation}
In the unitary implementation, this is the joint probability that the
first register indicates acceptance and the scratch registers and
failure counter are zero. This probability satisfies
\begin{equation}\label{eq:doubling-signal-bounds}
 \frac{p^2}{ed}\le\frac{p^2}{d\kappa^T}\le q_*\le p,
 \qquad e=\exp(1).
\end{equation}
For the lower bound, use the Frobenius inner product:
\[
 \bigl\langle\Pi\sigma_T^{1/2},
     \Pi\rho_T\sigma_T^{-1/2}\bigr\rangle_F=p,
 \qquad
 \norm{\Pi\sigma_T^{1/2}}_F^2=\Tr(\Pi\sigma_T)\le1.
\]
Cauchy--Schwarz and $(1+1/T)^T\le e$ give the result.

For the upper bound, positivity of the channels and $\sigma_0\succeq
\rho_0/d$ imply $\sigma_T\succeq\rho_T/(d\kappa^T)$.
Put $a=d\kappa^T$ and $X=\sigma_T^{-1/2}\rho_T\sigma_T^{-1/2}$.
Then $0\preceq X\preceq aI$, so $X^2\preceq aX$. Congruence by
$\sigma_T^{1/2}$ gives
$\rho_T\sigma_T^{-1}\rho_T\preceq a\rho_T$.
Taking the trace against $\Pi$ proves $q_*\le p$.
Neither bound requires the operators to commute.

\paragraph{Example: repeated qubit resets.}
Figure~\ref{fig:reset-reweighting} makes the normalization issue
explicit for $\Phi(X)=\Tr(X)\ket0\bra0$.
Starting in $\ket0$ and accepting $\ket0$ after $T$ resets gives $p=1$.
Dividing each unweighted doubled step by $\sqrt2$ leaves acceptance
probability $2^{-T}$. In contrast, our reference recurrence gives
$q_*=[(2+1/T)(1+1/T)^{T-1}]^{-1}\geq1/(2e)$.
Indeed, $\sigma_t=\operatorname{diag}(2+\eta,\eta)/(2\kappa)$ for
every $t\geq1$, so $C_1\ket{00}=\ket{00}/\sqrt{2+\eta}$ and
$C_t\ket{00}=\kappa^{-1/2}\ket{00}$ for $t\geq2$.
The stated lower bound follows from $2+\eta\leq2\kappa$ and
$\kappa^T\leq e$.
\input{figure-reset-reweighting}

\input{recovery-map-connection}

Section~\ref{sec:rql-standard} implements the approximate clean blocks
coherently in logarithmic space while keeping the left Kraus factors
exact. Each accepting history remains zero on no-instances, and
unitary amplification raises the inverse-polynomial acceptance probability to a constant.
Appendix~\ref{sec:doubling-bql} records the separate conditional
bounded-error application.


%% file: circuit-kraus-factors.tex
\begin{figure}[tbp]
  \centering
  {\small\textbf{(a) Matrix factors for one Kraus label $a$}}\par\smallskip
  \begin{tikzpicture}[x=1cm,y=.85cm]
    \foreach \y in {0,1.35}{\draw[circuit wire] (0,\y)--(11.5,\y);}
    \node[circuit label,anchor=east] at (-.15,1.35) {left $A$};
    \node[circuit label,anchor=east] at (-.15,0) {right $B$};
    \node[circuit gate,minimum width=1.6cm,minimum height=.7cm]
      at (5.75,1.35) {$K_{t,a}$};
    \node[circuit auxiliary,minimum width=2cm,minimum height=.8cm]
      at (1.6,0) {$\overline{\sigma_{t-1}^{1/2}}$};
    \node[circuit auxiliary,minimum width=1.6cm,minimum height=.8cm]
      at (5.75,0) {$\overline{K_{t,a}}$};
    \node[circuit auxiliary,minimum width=2cm,minimum height=.8cm]
      at (9.9,0) {$\overline{\sigma_t^{-1/2}}$};
    \node[circuit label] at (5.75,-.92)
      {Common scalar factor: $\kappa^{-1/2}$};
  \end{tikzpicture}
  \[
    C_t=\sum_a K_{t,a}\otimes\overline{J_{t,a}^{\dagger}},
    \qquad
    \overline{J_{t,a}^{\dagger}}
      =\kappa^{-1/2}\overline{\sigma_t^{-1/2}}\,
        \overline{K_{t,a}}\,\overline{\sigma_{t-1}^{1/2}}.
  \]
  {\small\textbf{(b) Unitary implementation with approximate right factors}}\par\smallskip
  \begin{tikzpicture}[x=1cm,y=.8cm]
    \foreach \y in {0,1.2,2.4,3.6}{\draw[circuit wire] (0,\y)--(11.5,\y);}
    \node[circuit label,anchor=east] at (-.15,3.6) {left $A$};
    \node[circuit label,anchor=east] at (-.15,2.4) {$E:\ket0$};
    \node[circuit label,anchor=east] at (-.15,1.2) {right $B$};
    \node[circuit label,anchor=east] at (-.15,0) {$Z:\ket0$};
    \node[circuit gate,minimum width=1.9cm,minimum height=1.65cm]
      at (2,3) {$D_t$\\\small exact};
    \node[circuit auxiliary,minimum width=2.6cm,minimum height=2.7cm]
      at (6.25,1.2) {$Q_t^{-1}$\\\small approximate\\\small right factors};
    \node[circuit tag] at (9.8,2.4) {$E=0$};
    \node[circuit tag] at (9.8,0) {$Z=0$};
    \node[circuit label,anchor=west] at (11.65,3.6) {$A$};
    \node[circuit label,anchor=west] at (11.65,1.2) {$B$};
  \end{tikzpicture}
  \[
    \widehat C_t
      =\bra{0_E0_Z}(Q_t^{-1})_{BEZ}(D_t)_{AE}\ket{0_E0_Z}
      =\sum_a K_{t,a}\otimes B_{t,a}.
  \]
  \caption{Keeping the left Kraus factors exact while approximating the
    right factors. Read both diagrams from left to right.
    Panel (a) is an algebraic factorization: its matrix boxes need not be
    unitary. Both reference-state weights act on $B$, and summing over
    $a$ gives $C_t$. Panel (b) is a unitary circuit.
    The exact source circuit $D_t$ writes the Kraus label into $E$;
    $Q_t$ acts only on $B,E,Z$ and supplies the approximate right factors.
    Its implemented inverse is exact. In particular,
    $B_{t,a}=\bra{0_E0_Z}Q_t^{-1}\ket{a_E0_Z}$, so approximation
    leaves every left factor $K_{t,a}$ unchanged.
    The green conditions select the component with zero scratch;
    they are not intermediate measurements. The full simulation uses
    the failure counter described in Section~\ref{sec:rql-adjoints}.
    Appendix~\ref{app:implementing-doubling} gives the detailed circuits.}
  \label{fig:kraus-factors}
\end{figure}

%% file: figure-reset-reweighting.tex
\begin{figure}[tbp]
  \centering
  {\small\textbf{(a) One reset: growth before reweighting, contraction after it}}\par\smallskip
  \begin{tikzpicture}[x=1cm,y=1cm]
    \node[circuit label] at (.4,0) {$\ket\Omega$};
    \node[circuit label] at (.4,-1.2) {$\ket\Omega$};
    \foreach \y in {0,-1.2}{
      \draw[-{Stealth[length=4.5pt]},circuit wire] (1.1,\y)--(2.25,\y);
      \draw[-{Stealth[length=4.5pt]},circuit wire] (6.4,\y)--(8.0,\y);
    }
    \node[circuit gate,minimum width=4cm,minimum height=.72cm] at (4.3,0)
      {$M=\ket{00}(\bra{00}+\bra{11})$};
    \node[circuit auxiliary,minimum width=4cm,minimum height=.72cm] at (4.3,-1.2)
      {$C_1=M/\sqrt{2+\eta}$};
    \node[circuit label,anchor=west] at (8.1,0)
      {$\sqrt2\ket{00}\qquad\|M\|=\sqrt2$};
    \node[circuit label,anchor=west] at (8.1,-1.2)
      {$\sqrt{\dfrac{2}{2+\eta}}\ket{00}\qquad\|C_1\|<1$};
    \node[circuit label] at (6.2,-2.02)
      {$\ket\Omega=(\ket{00}+\ket{11})/\sqrt2$ has unit length.};
  \end{tikzpicture}
  \par\medskip
  {\small\textbf{(b) Repeated resets: probability before amplification}}\par\smallskip
  \begin{tikzpicture}[x=.5cm,y=.58cm]
    \foreach \j/\lab in {0/1,2/{10^{-2}},4/{10^{-4}},6/{10^{-6}}}{
      \draw[circuitInk!15] (0,-\j)--(21,-\j);
      \node[circuit label,anchor=east] at (-.2,-\j) {$\lab$};
    }
    \draw[circuit wire] (0,.15)--(0,-6.25)--(21,-6.25);
    \foreach \j in {1,5,10,15,20}{
      \draw[circuit wire] (\j,-6.25)--(\j,-6.38);
      \node[circuit label,anchor=north] at (\j,-6.45) {$\j$};
    }
    \draw[draw=circuitSource,line width=1.1pt,dashed]
      plot[domain=1:20,samples=20] (\x,{-\x*ln(2)/ln(10)});
    \draw[draw=circuitAux,line width=1.2pt]
      plot[domain=1:20,samples=60]
        (\x,{-ln((2+1/\x)*(1+1/\x)^(\x-1))/ln(10)});
    \node[circuit label,anchor=west,text=circuitAux] at (7,-1.35)
      {Reference-state reweighting};
    \node[circuit label,anchor=west,text=circuitSource,fill=white,inner sep=2pt]
      at (1.5,-4.65) {Normalize each step by $\sqrt2$};
    \node[circuit label] at (10.5,-7.5) {Number of resets $T$};
    \node[circuit label,rotate=90] at (-2.5,-3) {Probability (logarithmic scale)};
  \end{tikzpicture}
  \caption{Reweighting a qubit reset
    $\Phi(X)=\operatorname{Tr}(X)\ket0\bra0$.
    In (a), $\sigma_0=I/2$ and
    $\sigma_1=\operatorname{diag}(2+\eta,\eta)/(2\kappa)$,
    where $\kappa=1+\eta$. Substitution into the definition of $C_1$
    gives the displayed contraction.
    In (b), the source starts in $\ket0$, applies $T$ resets, and accepts
    $\ket0$ with certainty. Normalizing each doubled step as $M/\sqrt2$
    gives probability $2^{-T}$. With $\eta=1/T$, the reference state is
    the same after every reset; for $t\geq2$,
    $C_t\ket{00}=\kappa^{-1/2}\ket{00}$. Thus reweighting gives
    $q_*=[(2+1/T)(1+1/T)^{T-1}]^{-1}$, tending to $1/(2e)$.
    The curves compare exact matrix products before compilation and
    amplification. The vector $\ket\Omega$ in (a) tests the operator
    norm; it is not the initial vector used in (b).}
  \label{fig:reset-reweighting}
\end{figure}

%% file: recovery-map-connection.tex
\paragraph{Relation to Petz recovery.}
The square-root reweighting has a precise connection with the Petz
(or transpose) recovery map~\cite{Petz86,Petz88,BarnumKnill02}.
For a channel $\mathcal N$ and full-rank reference states
$\sigma$ and $\tau=\mathcal N(\sigma)$, this map is
\begin{equation}\label{eq:petz-map}
 \mathcal R_{\sigma,\mathcal N}(Y)
 =\sigma^{1/2}\mathcal N^\dagger
   (\tau^{-1/2}Y\tau^{-1/2})\sigma^{1/2},
\end{equation}
where $\mathcal N^\dagger$ is the Hermitian conjugate of $\mathcal N$,
acting backwards on observables:
$\mathcal N^\dagger(Y)=\sum_b L_b^\dagger YL_b$ for Kraus operators
$L_b$ of $\mathcal N$. The recovery Kraus operators are
$\sigma^{1/2}L_b^\dagger\tau^{-1/2}$.
Their squared products sum to $I$, and
$\mathcal R_{\sigma,\mathcal N}(\tau)=\sigma$.
Recovering arbitrary other states is an additional property, not a
consequence of this identity.

For the references in~\eqref{eq:doubling-reference}, set
\[
 \Psi_t(X)=\frac{\Phi_t(X)+\eta\Tr(X)I/d}{\kappa},
 \qquad \Psi_t(\sigma_{t-1})=\sigma_t.
\]
This is a channel; the trace in its replacement term is essential.
Its Kraus family consists of $K_{t,a}/\sqrt\kappa$ and
$\sqrt{\eta/(\kappa d)}\ket i\bra j$.
The original-channel Kraus operators of
$\mathcal R_{\sigma_{t-1},\Psi_t}$ are exactly the $J_{t,a}$ in
\eqref{eq:doubling-operators}. Its replacement-branch Kraus operators
are instead
\[
 R_{t,ij}=\sqrt{\frac{\eta}{\kappa d}}\,
          \sigma_{t-1}^{1/2}\ket j\bra i\sigma_t^{-1/2}.
\]
Since $\Tr\sigma_{t-1}=1$, they satisfy
\begin{equation}\label{eq:petz-completion}
 \sum_{i,j}R_{t,ij}^\dagger R_{t,ij}
 =\frac{\eta}{\kappa d}\sigma_t^{-1}
 =J_{t,*}^\dagger J_{t,*}.
\end{equation}
Thus our $W_t$ retains the Petz original-channel factors and uses a
single Kraus operator with the same remaining normalization contribution.
It generally induces a \emph{different} channel: the two completion
maps are respectively
\[
 Y\longmapsto \frac{\eta}{\kappa d}
                   \Tr(\sigma_t^{-1}Y)\sigma_{t-1},
 \qquad
 Y\longmapsto \frac{\eta}{\kappa d}
                   \sigma_t^{-1/2}Y\sigma_t^{-1/2}.
\]
One can instead use the full Petz isometry by replacing the label $*$
with the $d^2$ labels $(i,j)$ and assigning all their left Kraus factors
zero. The doubled operator $C_t$ is unchanged. This requires only
$O(\log d)$ additional label qubits, so the total remains $O(\log n)$. Neither version asserts that $C_t$ itself is a
recovery channel: it is a contraction on vectorized matrices.

Gily\'en, Lloyd, Marvian, Quek and Wilde give an algorithm for Petz
recovery using block encodings, quantum singular value transformation,
and oblivious amplitude amplification~\cite{GilyenLloydMarvianQuekWilde22}.
Their isometry construction combines reference square roots, the inverse
of a unitary dilation of the channel, and an inverse square root of the
output reference;
these are also the ingredients of Appendix~\ref{app:aux-synthesis}.
Their access assumptions do not by themselves establish the
logarithmic-space uniform reference access required here, nor exact
preservation of a zero accepting amplitude. We establish those
properties separately.

%% file: result-standard.tex
\subsection{The Clifford+\texorpdfstring{$T$}{T} case}
\label{sec:rql-standard}

We first give the complete simulation for the standard gate set
$\Gamma=\{H,T,\mathrm{CNOT}\}$. This case contains the main argument:
keep the original Kraus operators exact, approximate only the right
factors, and amplify without introducing acceptance on no-instances.
The later subsections extend the available gate sets.

\begin{theorem}[Measurement elimination over Clifford+$T$]
\label{thm:rql-standard}
For absolutely halting, polynomial-time computations with logarithmic
space,
\[
 \RQL_{\Gamma}=\RQUL_{\Gamma}.
\]
\end{theorem}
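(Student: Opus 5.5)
The inclusion $\RQUL_\Gamma\subseteq\RQL_\Gamma$ holds by definition, since every $\UQTM_\Gamma$ is a general $\QTM_\Gamma$. For the converse we take an absolutely halting $\RQL_\Gamma$ machine and build a $\UQTM_\Gamma$ that realises the reweighted doubling of Section~\ref{sec:doubling-framework}.

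\textbf{Setting up the channels.} First we write the source computation as a sequence of channels $\Phi_1,\dots,\Phi_T$ on a register of dimension $d=n^{O(1)}$. This register holds the classical configuration (state, head positions, work tape) and the quantum tape. Each step has Kraus operators $K_{t,a}$: a gate controlled classically on the configuration, or a measurement that writes its outcome into the classical part. We then fix a unitary dilation $D_t$ over $\Gamma$ that writes the label $a$ into a scratch register $E$. For a measurement, $D_t$ is a CNOT copying the measured qubit into $E$ followed by a classically controlled update of the configuration. Classical bits that are overwritten are copied into $E$ rather than erased. With the Toffoli and controlled gates synthesised exactly from $H,T,\mathrm{CNOT}$, every step has a logspace-uniform exact circuit. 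As a result, the left factors $K_{t,a}$ are realised exactly.

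\textbf{Implementing the right factors.} We define $\sigma_t$, $J_{t,a}$ and $J_{t,*}$ as in~\eqref{eq:doubling-operators}, so that $\|C_t\|\le1$ and, by~\eqref{eq:doubling-signal-bounds}, $q_*\ge p^2/(ed)$. The main obstacle is to implement on $B,E,Z$ a unitary $Q_t$ whose inverse satisfies $\bra{0_E0_Z}Q_t^{-1}\ket{a_E0_Z}=B_{t,a}$ with $B_{t,a}\approx\overline{J_{t,a}^\dagger}$ to within $1/\poly$ in norm, using only $O(\log n)$ qubits and polynomial time. The entries of $\sigma_t$ are acceptance-type quantities of a logspace mixed-state computation, so we plan to estimate them by invoking the two-sided $\BQL=\BQUL$ theorem of Fefferman and Remscrim. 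The resulting block-encodings of $\overline{\sigma_{t-1}^{1/2}}$, $\overline{K_{t,a}}$ and $\overline{\sigma_t^{-1/2}}$ are then combined using logspace matrix-function techniques such as polynomial approximation or QSVT-style transforms. These are well conditioned because $\sigma_t\succeq I/(d(T+1))$. The total error must be kept at $o(p^2/(ed))$, i.e.\ at most $\epsilon=1/\poly$ per step, with $T\epsilon$ small. The accuracy of this approximation affects only completeness.

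\textbf{Soundness.} For the composed product, write $\widehat C_t=\sum_aK_{t,a}\otimes B_{t,a}$. Then
\[
(\Pi\otimes I)\widehat C_T\cdots\widehat C_1\ket{00}=\sum_h \Pi K_h\ket0\otimes B_h\ket0 ,
\]
and on no-instances each $\Pi K_h\ket0=0$, so the accepting amplitude is exactly zero whatever the $B_{t,a}$ are. To select the product of the clean parts without storing a history, after each step we coherently increment a saturating $O(\log T)$-qubit failure counter whenever $E,Z\neq0$, and then uncompute or reuse the scratch. Acceptance requires counter $=0$ and $\Pi$ on $A$. On yes-instances this yields a probability at least $q_*-O(T\epsilon)\ge1/\poly$.

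\textbf{Amplification.} Finally we amplify to $2/3$ by exact amplitude amplification, using the circuit, its exact inverse (available since $\Gamma$ is closed under inverses up to exact Clifford+$T$ synthesis of $T^\dagger=T^7$), and reflections about the accepting subspace. Every iterate is a polynomial in the circuit applied to a state whose accepting component is zero. This component stays zero, because the no-instance accepting amplitude is identically $0$ and the reflections preserve that. Choosing the number of rounds in logspace from a lower bound on the success amplitude, or using fixed-point amplification, gives a $\UQTM_\Gamma$ with acceptance $\ge2/3$ on yes-instances and exactly $0$ on no-instances, in polynomial time and logarithmic space.
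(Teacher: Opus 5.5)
Your proposal follows the paper's proof essentially step for step: exact left dilations over $\Gamma$, right-factor circuits built from $\BQL=\BQUL$-based reference-entry estimates together with square and inverse-square roots (the content of Lemma~\ref{lem:rql-auxiliary}), history-by-history perfect soundness, a failure counter, and exact amplitude amplification using the actual inverse. The details you leave loose are settled in the paper as follows: the combined per-step block carries a $1/\poly$ subnormalization that is removed by oblivious amplitude amplification (valid because $\overline W_t$, including its $*$ branch, is an isometry); the counter is modular with modulus $N>T$ (a saturating increment is not unitary, though saturation is never reached, so the fix is harmless); and since only a lower bound on the success probability is known, a fixed Grover round count could overshoot, so the paper adds a coin to make $q\le1/2$, uses a coherently retained uniform round count $j<4d$, and ORs four copies, while your fixed-point alternative is also viable.
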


We use the following implementation result. Its proof in
Appendix~\ref{app:aux-synthesis} uses the established two-sided
measurement-elimination theorem~\cite{FeffermanRemscrim21} and unitary
error reduction~\cite{FeffermanKLMN16}, independently of our one-sided
theorem. Thus the result below isolates the numerical work needed to
implement the reweighted operators.

\begin{lemma}[Uniform implementation of the right Kraus factors]
\label{lem:rql-auxiliary}
Let $T,d\leq n^{O(1)}$. Suppose the source channels have exact unitary
dilations $D_t$, given by polynomial-size, logspace-uniform circuits
over $\Gamma$ on $O(\log n)$ qubits, including their environment.
For each inverse-polynomial $\varepsilon>0$, there are
polynomial-size, logspace-uniform unitary circuits $Q_t$ over $\Gamma$
on $O(\log n)$ qubits such that
\[
 H_t=\sum_a H_{t,a}\otimes\ket a,
 \qquad H_{t,a}=\bra{a_E0_Z}Q_t\ket{0_E0_Z},
 \qquad \norm{H_t-\overline W_t}\leq\varepsilon.
\]
The data register is implicit. The labels include $*$ and any unused
labels in~\eqref{eq:doubling-operators}; there are polynomially many.
The same statement holds with $\Gamma$ replaced throughout by any
fixed finite constant-arity gate set satisfying conditions (1)--(4)
of Appendix~\ref{app:controlled-interface}.
\end{lemma}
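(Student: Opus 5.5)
We would build $Q_t$ by composing unitary blocks that implement, approximately and with zero-scratch selection, the three factors of $\overline{J_{t,a}}=\kappa^{-1/2}\,\overline{\sigma_{t-1}^{1/2}}\,\overline{K_{t,a}^\dagger}\,\overline{\sigma_t^{-1/2}}$, together with the extra branch $J_{t,*}$. Concretely, $\overline W_t\ket v=\sum_a\overline{J_{t,a}}\ket v\ket a+\overline{J_{t,*}}\ket v\ket *$. We would therefore aim for a circuit with three stages. First, apply a block encoding of $\overline{\sigma_t^{-1/2}}$, suitably rescaled. Second, apply an exact circuit for $\overline{D_t}^{\,\dagger}$ restricted to the zero-environment input, and coherently split off the $*$ branch. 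Third, apply a block encoding of $\overline{\sigma_{t-1}^{1/2}}$ on the $a$-branches. Since $\overline W_t$ is an isometry, the final step is to restore normalization by oblivious amplitude amplification, using the circuit and its inverse. This gives an $H_t$ with $\norm{H_t-\overline W_t}\le\varepsilon$. The entries of $D_t$ over $\Gamma$ lie in $\mathbb Q(\sqrt2,i)$, and the complex conjugate of a $\Gamma$-circuit is again exactly a $\Gamma$-circuit: $\overline T=T^7$, while $H$ and $\mathrm{CNOT}$ are real. Likewise $D_t^{\dagger}$ is obtained exactly by reversing the gate order and inverting each gate. Hence $\overline{K_{t,a}^\dagger}$ is available exactly, and no approximation enters the Kraus data itself.

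The numerical core is access to the reference states. We would show that the entries of $\sigma_t$ can be estimated to inverse-polynomial additive precision by a $\BQL$ computation, using the recurrence $\sigma_t=\Psi_t(\sigma_{t-1})$ with $\Psi_t$ the channel from the Petz paragraph. Indeed, $\sigma_t$ is the output of a logspace quantum process: with probability $\kappa^{-t}$ it follows the source channels from the maximally mixed state, and otherwise it is reset to $I/d$ at a random step. Entries $\bra i\sigma_t\ket j$ are then expectation values of observables such as $\ket j\bra i+\ket i\bra j$ (and the corresponding imaginary part) in a state preparable with intermediate measurements. By the two-sided theorem $\BQL=\BQUL$~\cite{FeffermanRemscrim21}, this estimate is available within a unitary logspace circuit. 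Given this entry access, we would build a block encoding of $\sigma_t$ itself, for example via a purification of the mixed-state process, which is exactly unitary once measurements are eliminated. We would then apply polynomial (QSVT-style) approximations of $x^{1/2}$ and $x^{-1/2}$ on the interval $[1/(d(T+1)),1]$. The eigenvalue lower bound $\sigma_t\succeq I/(d(T+1))$ ensures that polynomial degree suffices for inverse-polynomial error. Unitary error reduction~\cite{FeffermanKLMN16} keeps every intermediate estimate sufficiently accurate. All registers stay $O(\log n)$ qubits, and uniformity follows because each stage is a logspace-describable polynomial-length gate sequence.

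The main obstacle is this middle step. We need a \emph{unitary}, logspace-uniform block encoding of $\sigma_t$ whose error is controlled in operator norm. We must also convert the additive errors in the $x^{\pm1/2}$ polynomials, which get amplified by $\norm{\sigma_t^{-1/2}}\le\sqrt{d(T+1)}$, into an error of at most $\varepsilon$ for $H_t$. We would handle this by demanding precision $\varepsilon/\poly(d,T)$ at each stage; such precision still costs only polynomial degree. For general gate sets, the same argument goes through once conditions (1)--(4) supply the exact conjugates and inverses used above together with the entry-computability bounds used in the estimation step.
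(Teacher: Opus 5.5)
\textbf{Gap: turning the two-sided theorem into a block encoding of $\sigma_t$.} Your overall plan matches the paper's architecture. You estimate reference entries as $\BQL$ quantities via the randomly reset recurrence and convert them by $\BQL=\BQUL$ and unitary error reduction. You then block-encode $\sigma_t$, apply $x^{\pm1/2}$ using the eigenvalue floor, compose with the exact $D_t^\dagger$ and a separate $*$ branch, and finish with oblivious amplitude amplification. The gap is the step you yourself call the main obstacle: the proposal names it but does not carry it out.

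Your concrete suggestion fails. The Fefferman--Remscrim theorem replaces a measured computation by a unitary one with approximately the same \emph{acceptance probability}. It does not give a logarithmic-space purification of the intermediate state $\sigma_t$. Purifying the reference process would require keeping every measurement and reset record, which is exactly what the space bound forbids. Nor does the theorem put an ``estimate inside a unitary circuit.'' For each threshold $z$ it gives only a unitary comparator for $y\ge z+h$ versus $y\le z-h$; its behavior inside the promise gap is arbitrary and it leaves garbage. Demanding precision $\varepsilon/\poly(d,T)$ controls how errors propagate, but it does not build this step.

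The missing idea in the paper is to use each comparator only through an exact scalar block. Compute the comparator, apply the sign $\pm1$ according to its output, and uncompute with its actual inverse. The zero-workspace amplitude is then exactly $2b_z-1$, whatever the comparator does. Averaging over a uniform grid of thresholds gives a scalar block $a(y)$ with $|a(y)-y|\le2h+4/N+2\zeta$, because $\mathrm{sgn}(y-z)$ averages to $y$. Entry blocks are then assembled into $B/(2d)$ using a uniform row register and a swap. A selector between this circuit and its inverse gives an exact Hermitian block $S_t/(2d)$ with $\norm{S_t-\sigma_t}\le\mu$. Your plan does not address this Hermiticity, which the paper's eigenvalue-based calculus uses. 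A coherent estimate followed by compute--rotate--uncompute, with a Lipschitz bound that also charges the estimate's failure probability, would be an acceptable alternative. But some such argument has to be given.

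\textbf{Smaller steps that need repair.}
\begin{itemize}
\item Standard QSVT needs phase angles for a polynomial of polynomial degree, and computing them in logarithmic space is not standard. Either use a space-efficient variant built from linear combinations of Chebyshev polynomials, or do as the paper does: run phase estimation on the walk $(2P'-I)R$ and rotate a flag by the clipped Lipschitz functions $f_\pm$.
\item You need a perturbation bound for the matrix roots themselves, not only the polynomial-approximation error. For example, $\norm{S_t^{-1/2}-\sigma_t^{-1/2}}\le2\mu/\lambda^{3/2}$ with $\lambda=1/(d(T+1))$.
\item Oblivious amplitude amplification acts on a block $s_0(W_t+\Delta)$, which is not a scaled isometry. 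The paper first reduces the known factor $s_0$ exactly to $\sin(\pi/(4m+2))$. It then writes $W_t+\Delta=VH$ in polar form and shows the amplified block $Vf(H)$ is within $5\gamma$ of $W_t$.
\item The map $\sum_aK_{t,a}^\dagger\otimes\ket a$ is obtained by loading a uniform label into $E$, applying $D_t^\dagger$, and selecting $E=0$ on the \emph{output}; this gives $F_t/\sqrt r$. Restricting the \emph{input} environment to zero instead gives blocks of the arbitrary extension of $D_t$ beyond $\ket0_E$.
\end{itemize}
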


\begin{proof}[Proof of Theorem~\ref{thm:rql-standard}]
The inclusion $\RQUL_\Gamma\subseteq\RQL_\Gamma$ is immediate.
For the converse, fix a source machine and an input $x$ of length $n$.
We describe its unitary simulation in five steps.

\paragraph{1. Implement each original step exactly.}
Include the retained classical configuration, head positions and
quantum work tape in the $\log d=O(\log n)$-qubit data register.
Pad unused basis configurations so $d$ is a power of two, and pad
halting configurations with idle steps to a polynomial time bound
$T\geq1$. This gives the channels and acceptance projector of
Section~\ref{sec:doubling-framework}.
Each channel has an exact dilation
\begin{equation}\label{eq:rql-forward}
 D_t(\ket v\ket0_E)=\sum_a K_{t,a}\ket v\ket a_E.
\end{equation}
For a unitary step, apply the gate. For a computational-basis
measurement, copy the measured bit into $E$ using CNOT. For a reset,
copy the old bit into $E$ and XOR that copy back into the data qubit.
Classically controlled steps are handled by copying the old classical
configuration and any measurement outcome into $E$, then reversibly
computing the new configuration. These saved registers ensure that
the update is reversible; tracing them out gives the original channel
on every reachable source state.
Enumerating the polynomially many configurations and tape positions
gives polynomial-size, logspace-uniform circuits. One step needs only
$O(\log n)$ environment qubits.

All gates used here have exact Clifford+$T$ circuits. In particular,
$H^{-1}=H$, $\mathrm{CNOT}^{-1}=\mathrm{CNOT}$ and $T^{-1}=T^7$;
complex-conjugated gates also have exact circuits over $\Gamma$. Toffoli is exact:
the identity
\[
 4abc=a+b+c-(a\mathbin\oplus b)-(a\mathbin\oplus c)
 -(b\mathbin\oplus c)+(a\mathbin\oplus b\mathbin\oplus c)
\]
implements CCZ using parity computations and $T$ gates, after which
conjugating the target by $H$ gives Toffoli. Controlled-$T$ computes the
AND of control and target into zero scratch, applies $T$, and
uncomputes. Controlled-CNOT is Toffoli. Finally, with $S=T^2$ and
$W=SHTHS^{-1}$, the identity $WZW^{-1}=H$ gives an exact controlled-$H$
using controlled-$Z$. Further controls reduce to a reversible AND.
Every extra qubit in these fixed gate circuits is returned to zero.

\paragraph{2. Approximate the right Kraus factors.}
Use the reference states from~\eqref{eq:doubling-reference} and apply
Lemma~\ref{lem:rql-auxiliary} with $\varepsilon=1/(12Td)$.
On doubled data registers $A,B$, first apply $D_t$ to $A,E$, and then
apply the inverse of the actual circuit $Q_t$ to $B,E,Z$.
The block with $E,Z$ initialized and returned to zero is exactly
\begin{equation}\label{eq:rql-actual-step}
 \widehat C_t=\sum_a K_{t,a}\otimes H_{t,a}^{\dagger},
 \qquad \norm{\widehat C_t}\leq1,
 \qquad \norm{\widehat C_t-C_t}\leq\varepsilon.
\end{equation}
Set $K_{t,*}=0$ and use zero left Kraus factors on all padding labels.
The contraction bound holds because this is a block of a unitary.
The error bound follows by composing the difference between the
implemented right block and its target isometry with the exact left
isometry and projecting onto zero scratch. No implementation of an exact matrix square root is
required. Figure~\ref{fig:kraus-factors} shows this separation.

\paragraph{3. Preserve perfect soundness and enough acceptance.}
For a history $h=(a_1,\ldots,a_T)$, write
$K_h=K_{T,a_T}\cdots K_{1,a_1}$. The source probability satisfies
\[
 p=\sum_h\norm{\Pi K_h\ket0}^2.
\]
On a no-instance $p=0$, so every vector $\Pi K_h\ket0$ is zero.
Expanding the product in~\eqref{eq:rql-actual-step} therefore gives
\begin{equation}\label{eq:rql-perfect-soundness}
 (\Pi\otimes I)\widehat C_T\cdots\widehat C_1\ket{00}=0.
\end{equation}
This conclusion holds for any right factors, with no accuracy
assumption; Figure~\ref{fig:perfect-soundness} illustrates why.
\input{figure-perfect-soundness}

For yes-instances, the ideal signal from
Section~\ref{sec:doubling-framework} satisfies
\begin{equation}\label{eq:rql-ideal-success}
 q_*\geq\frac{p^2}{ed},\qquad e=\exp(1).
\end{equation}
Telescoping the products of contractions gives
\[
 \norm{\widehat C_T\cdots\widehat C_1-C_T\cdots C_1}
 \leq T\varepsilon=\frac1{12d}.
\]
Since $p\geq2/3$ and $e<4$, the actual accepting norm is at least
$1/(3\sqrt d)-1/(12d)\geq1/(4\sqrt d)$. Its square is therefore at
least $1/(16d)$.

\paragraph{4. Compose the steps with logarithmic space.}
Let $V_t$ be the full unitary used in step 2 and let $P$ project its
scratch registers onto zero. Reuse these registers at every step.
Add a counter $F$ of modulus $N>T$, where $N$ is a power of two, and
write $X_N\ket f=\ket{f+1\bmod N}$. Apply
\begin{equation}\label{eq:adjoint-counter-step}
 \mathcal V_t=
 \bigl(P\otimes I_F+(I-P)\otimes X_N\bigr)
 (V_t\otimes I_F).
\end{equation}
Thus we apply $V_t$ unconditionally, then increment the counter if
its scratch is nonzero. These operations are unitary and the
increment uses only a computational-basis predicate.
To verify the selected block, put $P_0=P$, $P_1=I-P$ and expand:
\[
 \mathcal V_T\cdots\mathcal V_1
 =\sum_{\delta\in\{0,1\}^T}
 P_{\delta_T}V_T\cdots P_{\delta_1}V_1
 \otimes X_N^{\delta_1+\cdots+\delta_T}.
\]
Starting and ending with counter zero selects exactly the term
$\delta_1=\cdots=\delta_T=0$, since $0\leq\sum_t\delta_t\leq T<N$.
With initial scratch zero, the selected operation is exactly
$\widehat C_T\cdots\widehat C_1$, with final scratch zero.
The final good predicate requires counter zero, scratch zero and
acceptance in $A$. Its probability $\widehat q$ is zero on no-instances
and at least $1/(16d)$ on yes-instances. All failure components remain
in the circuit; no intermediate measurement or postselection occurs.

\paragraph{5. Amplify with exact reflections.}
Append a Hadamard coin and include coin one in the good predicate.
Call the resulting preparation $A$ and its good projector
$P_{\rm good}$. Its success probability is $q=\widehat q/2\leq1/2$.
Use the Grover iterate~\cite{BrassardHoyerMoscaTapp02}
\[
 R_0=2\ket0\bra0-I,\qquad R_{\rm good}=I-2P_{\rm good},
 \qquad Q=A R_0 A^{\dagger}R_{\rm good}.
\]
The inverse $A^{\dagger}$ reverses the actual implemented circuit,
including its approximations. Both reflections are exact Boolean
predicates. Moreover, for an external control,
\begin{equation}\label{eq:adjoint-controlled-grover}
 \operatorname{ctrl}(Q)=
 (I\otimes A)\operatorname{ctrl}(R_0)
 (I\otimes A^{\dagger})\operatorname{ctrl}(R_{\rm good}).
\end{equation}
On control zero the preparations cancel; on control one this applies
$Q$. Thus only the reflections need controls, including the displayed
sign of $R_0$.

Prepare and retain a uniform choice $j\in\{0,\ldots,M-1\}$ with
$M=4d$, a power of two. Apply $Q$ conditioned on $j\geq k$ for
$k=1,\ldots,M-1$, using~\eqref{eq:adjoint-controlled-grover}.
The choice and comparison registers remain outside $R_0$.
For $q=\sin^2\theta>0$, success is
\[
 \frac1M\sum_{j=0}^{M-1}\sin^2((2j+1)\theta)
 =\frac12-\frac{\sin(4M\theta)}{4M\sin(2\theta)}.
\]
On yes-instances $q\in[1/(32d),1/2]$, so
$\sin(2\theta)\geq1/(4\sqrt d)$ and the result is at least $1/4$.
On no-instances every iterate has zero good amplitude.
Four independent coherent copies, followed by a reversible OR of
their good predicates, accept with probability at least
$1-(3/4)^4=175/256>2/3$ on yes-instances and zero on no-instances.
Only this final OR qubit is measured.

Each register has $O(\log n)$ qubits, and the number of complete
copies is constant. Circuit lengths and loop bounds are polynomial.
The lemma supplies logspace-uniform right-factor circuits; the
remaining gates, reversed circuits and loops also have
logspace-uniform descriptions. The simulator is absolutely halting.
\end{proof}

Steps 3--5 apply to any exact left dilations and approximate right
circuits satisfying~\eqref{eq:rql-actual-step}, provided their actual
inverses and the Clifford+$T$ gates are available exactly. They use
neither controlled versions nor complex conjugates of the additional
source gates. The next subsection reuses these steps after constructing
the required left and right circuits for more general gate sets.

%% file: figure-perfect-soundness.tex
\begin{figure}[tbp]
  \centering

\begin{tikzpicture}[x=1cm,y=.80cm]
    \node[circuit label] at (3.1,1.05) {Exact left history};
    \node[circuit label,text=circuitAux] at (8.9,1.05) {Right history may change};
    \foreach \y/\h in {0/1,-1.15/2,-2.8/m}{
      \node[circuit label,anchor=east] at (.4,\y) {$h_{\h}$};
      \node[circuit gate,minimum width=4.1cm,minimum height=.74cm]
        at (3.1,\y) {$\Pi K_{h_{\h}}\ket0=0$};
      \node[circuit label] at (6,\y) {$\otimes$};
      \node[circuit auxiliary,minimum width=4.1cm,minimum height=.74cm]
        at (8.9,\y) {$B_{h_{\h}}\ket0$};
      \draw[-{Stealth[length=4.5pt]},circuit wire]
        (11.3,\y)--(12.2,\y);
      \node[circuit tag,minimum width=.75cm] at (12.9,\y) {$0$};
    }
    \foreach \x in {3.1,8.9,12.9}{\node[circuit label] at (\x,-1.97) {$\vdots$};}
  \end{tikzpicture}
  \[
    (\Pi\otimes I)\sum_h K_h\ket0\otimes B_h\ket0=0.
  \]
\caption{Perfect soundness holds history by history.
The histories are $h_1,\ldots,h_m$, with
$B_h=B_{T,a_T}\cdots B_{1,a_1}$.
Zero original acceptance forces every accepting left vector to vanish.
Changing its right factor therefore leaves the whole tensor-product term
zero. The final scratch and counter tests select the displayed sum.}
  \label{fig:perfect-soundness}
\end{figure}

%% file: result-adjoints.tex
\subsection{Exact inverses suffice}
\label{sec:rql-adjoints}

We now extend Theorem~\ref{thm:rql-standard} to gate sets whose
entries can be computed within the bounds below and that implement
Clifford+$T$ and each gate's inverse exactly. An exact implementation
is a fixed finite circuit over the gate set, with a constant number of
extra qubits initialized and returned to zero on every input; the circuit
must implement the specified matrix exactly, including its phase.

\begin{theorem}[Measurement elimination with exact inverses]
\label{thm:rql-adjoints}
Let $\mathcal G$ be a fixed finite set of one- and two-qubit unitaries.
Suppose that:
\begin{enumerate}
  \item every entry of every $U\in\mathcal G$ can be approximated to
  error $2^{-b}$ in deterministic space $O(b)$ and time $2^{O(b)}$;
  \item circuits over $\mathcal G$ implement $H,T,\mathrm{CNOT}$ exactly;
  \item for every $U\in\mathcal G$, a circuit over $\mathcal G$
  implements its inverse $U^\dagger$ exactly.
\end{enumerate}
Then $\RQL_{\mathcal G}=\RQUL_{\mathcal G}$.
Neither exact conjugation nor exact controlled versions of the additional
gates are required.
\end{theorem}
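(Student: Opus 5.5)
The plan is to reuse the proof of Theorem~\ref{thm:rql-standard} almost verbatim, since the remark after it notes that Steps 3--5 apply to any exact left dilations and approximate right circuits satisfying~\eqref{eq:rql-actual-step}, provided their actual inverses and the Clifford+$T$ gates are available exactly. So the work is to supply, over $\mathcal G$, the three ingredients of Steps 1 and 2. These are an exact left dilation $D_t$ of each source step, an approximate right-factor circuit $Q_t$ with $\norm{H_t-\overline W_t}\le\varepsilon$, and an exact implementation of $Q_t^{-1}$ and of all reflections and counters. The inclusion $\RQUL_{\mathcal G}\subseteq\RQL_{\mathcal G}$ is immediate.

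\textbf{Step 1 (exact left dilations).} I would build $D_t$ exactly as in Step~1 of the standard proof. Each source step is either a primitive gate $g\in\mathcal G$ applied under classical control, or a measurement, or a classical update. The classical and measurement parts use only CNOT, Toffoli and controlled Clifford+$T$ gates, and these are exact over $\mathcal G$ by hypothesis~2 and the exact Toffoli/controlled-$H$ constructions already given.

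The one subtle point is that a primitive $g\in\mathcal G$ may be applied only when the classical configuration has a given value, and we have no exact controlled-$g$. I would avoid this by writing the classically controlled gate application unconditionally on \emph{positions}. Classical control is copied into $E$ and the configuration is in a basis state on each history. So it suffices to apply $g$ to the qubits selected by the head positions, conditioned on a basis-state predicate. This can be implemented by controlled-SWAPs (Toffoli-based, exact) that route the addressed qubits to a fixed pair of slots. We then apply $g$ there and swap back, and this is repeated for every candidate gate in each step.

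Conditioning on whether $g$ acts at all is the issue. For this I would route the target qubits either into the slot where $g$ acts or into a dummy slot, and apply $g$ unconditionally to the active slot. Alternatively, apply $g$ and then $g^\dagger$ (exact by hypothesis~3) on the dummy branch. This ensures the left factors are exact products of Kraus operators, with extra qubits returned to zero.

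\textbf{Step 2 (right factors).} I would invoke Lemma~\ref{lem:rql-auxiliary} in its general form, checking conditions (1)--(4) of Appendix~\ref{app:controlled-interface} for $\mathcal G$. The computability hypothesis~1 supplies the logspace numerical estimates of the reference entries needed by the two-sided elimination. Hypothesis~2 gives a universal exact Clifford+$T$ subset, so the lemma's circuits $Q_t$ can be built over $\Gamma\subseteq\langle\mathcal G\rangle$. The right factors are only approximate, so no exact conjugates of the extra gates are needed; $\overline W_t$ is synthesized numerically.

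Then $Q_t^{-1}$ is obtained by reversing the circuit and replacing each gate by its exact inverse circuit from hypothesis~3. This gives~\eqref{eq:rql-actual-step} with $\widehat C_t$ a genuine block of a unitary.

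\textbf{Steps 3--5 and the main obstacle.} Perfect soundness, telescoping, the failure counter and amplification then go through unchanged. The Grover iterate needs $A^\dagger$, which again uses only exact inverses of the primitives. The controlled iterate~\eqref{eq:adjoint-controlled-grover} controls only reflections, which are Clifford+$T$/Toffoli.

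I expect the main obstacle to be Step~1: making classically controlled applications of arbitrary $\mathcal G$ gates exact without controlled-$g$. Alongside this, we must verify that the conditions of Appendix~\ref{app:controlled-interface} hold under hypotheses~1--3, in particular that the approximation there uses only $\Gamma$ and numerical data. I would first try the routing-into-a-slot trick with the $g\,g^\dagger$ compensation.
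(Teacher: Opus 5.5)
Your dummy-port routing in Step~1 and your reuse of Steps~3--5 match the paper's proof. Step~2, however, has a genuine gap, and it is where most of the new work lies.

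You cannot ``check conditions (1)--(4) of Appendix~\ref{app:controlled-interface} for $\mathcal G$''. Conditions (3) and (4) require exact conjugates and exact controlled versions of every $U\in\mathcal G$, which is precisely what this theorem does not assume. Nor does the $\Gamma$ version of Lemma~\ref{lem:rql-auxiliary} apply as stated. It assumes the dilations $D_t$ are circuits over $\Gamma$, but yours contain the additional gates of $\mathcal G$ (e.g.\ $R_\theta$ with $\theta=1$), which have no exact $\Gamma$ circuit.

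This is not a formality. The right-factor construction of Appendix~\ref{app:aux-synthesis} contains $D_t^\dagger$ itself, which produces the block $F_t/\sqrt r$. It uses this inside a selector-controlled linear combination, and it ends by conjugating the whole circuit gate by gate. So it needs controlled and conjugated copies of the source gates.

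``Synthesizing $\overline W_t$ numerically'' cannot mean classical synthesis. The entries of $\sigma_t$ are not available to a classical logspace generator; they come only from quantum estimation and the $\BQL=\BQUL$ comparators. Hypothesis~1 concerns gate entries, not reference entries.

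The missing idea is the following, and it is the paper's:
\begin{enumerate}
\item Run the appendix construction over a \emph{virtual} gate set. It contains $\Gamma$, the gates $U,U^\dagger,\overline U,U^{\mathsf T}$ for each $U\in\mathcal G$, and their singly controlled versions. Obtain accuracy $\varepsilon/2$ with some polynomial length $L(n)$.
\item Replace each virtual gate $B$ by a $\Gamma$ circuit approximating $\mathcal E(B)=\ket0\bra0\otimes B+\ket1\bra1\otimes B^\dagger$ to error $\varepsilon/(2L(n))$, using one shared helper qubit.
\item Since $\det\mathcal E(B)=1$, phase-sensitive space-efficient Solovay--Kitaev approximates it with its phase intact. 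This is essential: these gates are controlled and occur in superposed branches, where a global-phase error becomes a relative phase.
\item Retain the helper coherently and add it to the zero-scratch condition. Telescoping then bounds the total error by $\varepsilon$.
\end{enumerate}
Hypothesis~1 is what makes this synthesis, and the approximation of source gates in the reference experiment, logspace-uniform.

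There is also a smaller error in Step~1: the dummy is not returned to zero. On an inactive configuration it ends in $g\ket0$. Your $g\,g^\dagger$ compensation would need $g^\dagger$ applied conditionally on the branch, which is a controlled source gate. Done unconditionally through the same routing, it only moves a leftover $g^{\pm1}\ket0$ to the other branch. The paper instead keeps every dummy in the environment as part of the Kraus label. This is harmless, because only a dilation of the chosen channel on classically diagonal states is needed. You must then define $\Phi_t$ and the reference states $\sigma_t$ from this dilation.
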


\begin{proof}
Put $\Gamma=\{H,T,\mathrm{CNOT}\}$. We construct exact left
circuits $D_t$ over $\mathcal G$ and approximate right circuits $Q_t$
over $\Gamma$. Steps 3--5 of Theorem~\ref{thm:rql-standard} then give
perfect soundness, coherent composition and amplification. The only
new issues are implementing adaptive source steps without controlled
source gates and obtaining the right circuits without exact conjugation.
\paragraph{Implementing the original channel without controlled source gates.}
Pad the quantum tape to at least two cells, and copy the old classical
configuration $c$ into the environment.
For each gate symbol $U\in\mathcal G$, allocate a separate dummy
register of the same arity as $U$, initially zero. There is only one
dummy per gate symbol, not one per classical configuration, so their
total width is constant. Reversible Boolean routing, controlled by the
copied configuration, moves the target qubit or ordered pair of target
qubits to a fixed port. On configurations not requesting $U$, choose
any fixed valid target port. Let $S$ be the swap of this port with its
dummy, and put $S_0=I$, $S_1=S$. If $b(c)=1$ precisely when $U$ is
inactive, apply
\[
 S_{b(c)}(U\otimes I)S_{b(c)}.
\]
Here $U$ is applied unconditionally to the port. For $b(c)=0$ this is
$U\otimes I$; for $b(c)=1$ it is $I\otimes U$.
Undo the routing and retain the dummy in the environment. Thus the
active branch applies $U$ to the data, and every inactive branch leaves
the data unchanged while placing the normalized state $U\ket0$ in
the dummy. The controlled swaps and routing permutations are exact
reversible Boolean circuits over $\Gamma$.

Perform this operation for each of the finitely many gate symbols.
For a measurement step, copy the measured basis bit into an environment
register. Reversible updates compute the new classical configuration
while retaining the old configuration and measurement outcome in the
environment. For example, the working configuration can be cleared by
XOR with its saved copy and then filled with the new configuration.
Polynomial enumeration of configurations and wire positions supplies
logspace-uniform routing and input access. Unused configurations receive
any fixed valid extension of these reversible operations.

Tracing out the environment gives exactly the source transition on
states whose classical configuration is diagonal, which includes every
reachable source state. Copying $c$ deliberately dephases different
classical configurations: we have constructed the required channel,
not a coherent controlled-$U$ gate. Use this chosen channel extension
when defining $\Phi_t$ and its reference states. Its exact dilation
$D_t$ acts on $O(\log n)$ qubits and uses no controlled source gate.
All registers used to retain discarded information, including the
dummies, are included in the Kraus label.
The source acceptance probability $p$ is unchanged.

\paragraph{Right Kraus factors over the fixed standard gate set.}
We claim that Lemma~\ref{lem:rql-auxiliary} still supplies $Q_t$, now
as a circuit over $\Gamma$, for the original chosen dilations
$D_t$ and their original Kraus labels. To prove this, introduce a
\emph{virtual} finite constant-arity gate set containing $\Gamma$,
the gates
\[
 U,\quad U^\dagger,\quad\overline U,\quad U^{\mathsf T}
 \qquad(U\in\mathcal G),
\]
and their singly controlled versions. This virtual gate set has the
exact inverse and conjugation operations used in
Appendix~\ref{app:aux-synthesis}. Further controls reduce to computing
an AND into scratch and using a singly controlled gate. Controls of
the standard gates have exact circuits over $\Gamma$ by
step 1 of Theorem~\ref{thm:rql-standard}. The appendix construction
therefore produces a circuit over this virtual gate set for the right
Kraus factors, with labelled-block error
at most $\varepsilon/2$ and length at most a polynomial $L(n)$.
This is an intermediate circuit description, not an enlargement of
the source machine's gate set.

The entries of each gate $B$ in this virtual set can also be computed
within the stated time and space bounds. To approximate
it with its phase intact, specify the constant-dimensional matrix
\[
 \mathcal E(B)=\ket0\bra0\otimes B+
               \ket1\bra1\otimes B^\dagger,
 \qquad \det\mathcal E(B)=1.
\]
This is a target matrix for synthesis; its definition assumes no
physical controlled-$B$ gate. Space-efficient Solovay--Kitaev synthesis
for fixed dimension~\cite[Theorem~4.5]{MelkebeekWatson12} approximates
$\mathcal E(B)$ in operator norm over $\Gamma$, including its phase.
For completeness, the determinant-one subgroup of the group generated
by fixed-width Clifford+$T$ circuits has finite index and is finitely
generated. Projective universality~\cite{BoykinEtAl99} and connectedness
imply that its closure is the full special unitary group. A fixed finite
inverse-closed generating set of gate sequences therefore meets that theorem's
hypotheses: compactness supplies its fixed constant-accuracy net.
This extends the one-qubit argument in
Appendix~\ref{app:aux-synthesis}.
Only constant dimensions occur here.

Use error $\varepsilon/(2L(n))$ per gate in the virtual set and one shared
extra qubit for the embedding $\mathcal E(B)$, initially zero. The ideal
embedded gates preserve this qubit's zero state and act on that subspace
as the virtual circuit. Their approximations may leave this qubit nonzero.
We retain it coherently and include it among the registers required to
be zero in the selected block. A telescoping bound on the full circuits
bounds their difference
by $\varepsilon/2$, even with this reuse. Compression to the labelled
block cannot increase the error. Thus the resulting $\Gamma$
circuit $Q_t$ has error at most $\varepsilon$ relative to
$\overline W_t$, without changing the target reference states or
assuming exact conjugation over $\mathcal G$.
The required precision is $O(\log n)$ bits; entry computation and
synthesis take polynomial time and logarithmic space to generate the circuit.
The embedding uses only this one extra qubit.

Substitute the exact implementations of the $\Gamma$ gates only after
compiling the circuit for the right Kraus factors.
 Keep the original $D_t$ exact on the first doubled
register. The full step unitary $V_t$ is $D_t$ followed by the actual
$Q_t^\dagger$ on the second register and shared labels, so its clean
block remains exactly
\[
 \widehat C_t=\sum_a K_{t,a}\otimes H_{t,a}^\dagger.
\]
With $\varepsilon=1/(12Td)$ this is precisely the setting of
steps 3--5 of Theorem~\ref{thm:rql-standard}. The history argument
preserves zero accepting amplitude on no-instances, and the product
has squared accepting norm at least $1/(16d)$ on yes-instances.
The unconditional counter construction~\eqref{eq:adjoint-counter-step}
composes these blocks without controlling a source gate. Its inverse
is available by assumption (3). The identity
\eqref{eq:adjoint-controlled-grover} also avoids controlling the
preparation circuit: only exact Boolean reflections need controls.
The same amplification therefore gives acceptance at least
$175/256>2/3$ on yes-instances and zero on no-instances.

The fixed number of dummy registers, shared helper and reused scratch
require $O(\log n)$ qubits. Routing, entry approximation and synthesis
have polynomial running time and logarithmic generation space.
Thus the simulator satisfies the resource and absolute-halting
requirements of $\RQUL_{\mathcal G}$. The reverse inclusion follows
from the definitions.
\end{proof}

\begin{corollary}[Adding gates together with their inverses]
\label{cor:rql-effective-adjoints}
For any fixed finite family of one- and two-qubit unitaries
$U_1,\ldots,U_k$ whose entries satisfy condition~1 of
Theorem~\ref{thm:rql-adjoints}, the gate set
\[
 \mathcal G=\Gamma\cup
 \{U_j,U_j^\dagger:1\le j\le k\}
\]
satisfies $\RQL_{\mathcal G}=\RQUL_{\mathcal G}$.
The entries may be transcendental.
\end{corollary}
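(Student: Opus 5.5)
The plan is to verify the three hypotheses of Theorem~\ref{thm:rql-adjoints} for $\mathcal G=\Gamma\cup\{U_j,U_j^\dagger:1\le j\le k\}$ and then invoke that theorem directly. Since the statement allows transcendental entries, nothing beyond condition~1 is assumed about the $U_j$. The conclusion $\RQL_{\mathcal G}=\RQUL_{\mathcal G}$ then follows verbatim.

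\emph{Condition 1 (entry computability).} For $U_j$ this holds by hypothesis. For $U_j^\dagger$, each entry is the complex conjugate of an entry of $U_j$. Conjugating a $2^{-b}$-approximation gives a $2^{-b}$-approximation with the same space and time, so the bound transfers. For $H$, $T$ and $\mathrm{CNOT}$, the entries lie in $\{0,\pm1,\pm1/\sqrt2,e^{i\pi/4}\}$. Their binary expansions to $b$ bits are computable in space $O(b)$ and time $2^{O(b)}$, for instance by digit-by-digit integer square root of $2^{2b+1}$, which uses $O(b)$-bit arithmetic. Since the set is finite, the uniform constant in $O(b)$ is the maximum over finitely many gates.

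\emph{Conditions 2 and 3 (exact implementations).} Condition~2 is immediate, because $H,T,\mathrm{CNOT}\in\mathcal G$ are each implemented by the one-gate circuit with no extra qubits. For condition~3 we check each $U\in\mathcal G$:
\begin{itemize}
\item $U=U_j$: its inverse $U_j^\dagger$ is itself a primitive gate of $\mathcal G$.
\item $U=U_j^\dagger$: its inverse is $(U_j^\dagger)^\dagger=U_j\in\mathcal G$.
\item Standard gates: use the identities recorded in step~1 of Theorem~\ref{thm:rql-standard}, namely $H^{-1}=H$, $\mathrm{CNOT}^{-1}=\mathrm{CNOT}$ and $T^{-1}=T^\dagger=T^7$.
\end{itemize}
All of these are exact, with phase, and use no ancillas. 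Arity is preserved, since $U_j^\dagger$ acts on the same one or two qubits as $U_j$. Hence $\mathcal G$ is a fixed finite set of one- and two-qubit unitaries, as Theorem~\ref{thm:rql-adjoints} requires.

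There is no substantive obstacle; the only point needing care is condition~1. One must check that the complexity bound is stated per entry and is closed under complex conjugation, and that the fixed algebraic constants of $\Gamma$ meet the $O(b)$-space bound. The transcendence remark follows because Theorem~\ref{thm:rql-adjoints} places no algebraic restriction on entries.
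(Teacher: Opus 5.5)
Your proposal is correct and follows the paper's proof, which likewise just checks the hypotheses of Theorem~\ref{thm:rql-adjoints}: the inverses of the added gates are already primitives, $H$ and $\mathrm{CNOT}$ are self-inverse, and $T^\dagger=T^7$. You also verify conditions~1 and~2 explicitly, namely that the entry bound is preserved under complex conjugation and that the entries of $\Gamma$ are computable, which the paper leaves implicit.
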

\begin{proof}
The inverses of the added gates are already present, while $H$ and
CNOT are self-inverse and $T^\dagger=T^7$.
\end{proof}

Inverses need not be separate primitives. For example,
$\Gamma\cup\{R_\theta\}$ also satisfies the theorem when
\[
 R_\theta=\begin{pmatrix}\cos\theta&-\sin\theta\\
                         \sin\theta&\cos\theta\end{pmatrix}
\]
has entries satisfying condition~1: $R_\theta^\dagger=ZR_\theta Z$
and $Z=T^4$.
This includes $\theta=1$, whose entries are transcendental.
The conclusion here is an equality over the same gate set $\mathcal G$.
It does not by itself identify that class with the class over $\Gamma$;
the next subsection establishes such gate-set independence under a
different, algebraic hypothesis.

%% file: result-algebraic.tex
\subsection{Gate-set independence over CM fields}
\label{sec:rql-algebraic}

Put $\Gamma=\{H,T,\mathrm{CNOT}\}$.
A \emph{number field} is a finite extension of $\mathbb Q$.
An \emph{embedding} into $\mathbb C$ is an injective map preserving
addition, multiplication, and rational numbers; it consistently replaces
algebraic numbers by conjugates. A number field is \emph{totally real}
if every such embedding takes values in $\mathbb R$, or equivalently
if every element's minimal polynomial over $\mathbb Q$ has only real
roots. It is \emph{totally imaginary} if no such embedding takes all
its values in $\mathbb R$.

A \emph{CM field} $F$ is a totally imaginary quadratic extension of a
totally real field $K$: it has dimension two as a vector space over $K$.
Equivalently, $F=K(\sqrt{-a})$ for some $a\in K$ positive under every
embedding of $K$; for example, $\mathbb Q(\sqrt2,i)$ has
$K=\mathbb Q(\sqrt2)$ and $a=1$~\cite[Proposition~1.4]{Milne2020CM}.
Its complex conjugation satisfies
\begin{equation}\label{eq:cm-conjugation}
 \sigma(\overline a)=\overline{\sigma(a)}
 \quad\text{for every embedding }\sigma:F\hookrightarrow\mathbb C
 \text{ and every }a\in F.
\end{equation}
The same identity holds for totally real fields. CM fields are also closed
under finite composita: the smallest field containing finitely many CM
fields is again CM. See
\cite[Proposition~1.4 and Corollary~1.5]{Milne2020CM}.

\begin{theorem}[Algebraic gate-set independence]
\label{thm:rql-cm}
Let $\mathcal G$ be a fixed finite gate set in the local machine model
of the preliminaries. Suppose:
\begin{enumerate}
 \item there is a fixed CM field $F\subseteq\mathbb C$ such that, for
 each $U\in\mathcal G$, some fixed scalar $\lambda_U$ of modulus one
 makes every entry of $\lambda_U U$ belong to $F$;
 \item for each $g\in\Gamma$, there is a fixed finite circuit over
 $\mathcal G$ which, with a constant number of extra qubits initialized to zero,
 implements $\mu_g g$ and returns those ancillas to zero on every input,
 for some fixed scalar $|\mu_g|=1$.
\end{enumerate}
Then
\begin{equation}\label{eq:cm-class-equality}
 \RQL_{\mathcal G}=\RQUL_{\mathcal G}=\RQUL_{\Gamma}.
\end{equation}
The additional gates of $\mathcal G$ need not have exact implementations
of their inverses, complex conjugates, or controlled versions.
\end{theorem}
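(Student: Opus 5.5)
The plan is to prove the chain
\[
 \RQUL_\Gamma\subseteq\RQUL_{\mathcal G}\subseteq\RQL_{\mathcal G}\subseteq\RQUL_\Gamma .
\]
The first inclusion is a direct substitution: replace each $\Gamma$ gate by its fixed circuit over $\mathcal G$ from hypothesis~2. A global phase $\mu_g$ on a whole gate does not change any measurement probability. Ancillas return to zero on every input, so acceptance is unchanged, and in particular zero acceptance stays exactly zero. The second inclusion is definitional. All the work is in $\RQL_{\mathcal G}\subseteq\RQUL_\Gamma$. For that I will build a $\Gamma$-computation whose \emph{exact left factors} vanish on exactly the same histories as the source, and then reuse steps 2--5 of the proof of Theorem~\ref{thm:rql-standard}.

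\textbf{Exact algebraic left register.}
First strip the phases $\lambda_U$. A phase per gate application contributes a modulus-one scalar per history, so every $\norm{\Pi K_h\ket0}$ is unchanged. I may therefore assume all source Kraus operators, including the dilations built as in Theorem~\ref{thm:rql-adjoints}, have entries in $F$. Let $m=[F:\mathbb Q]$ and let $\sigma_1,\dots,\sigma_m$ be the embeddings. By \eqref{eq:cm-conjugation}, $\sigma_i(U)\sigma_i(U)^\dagger=\sigma_i(UU^\dagger)=I$, so each conjugate gate $\sigma_i(U)$ is again unitary. Moreover
\[
 \sigma_i(p)=\sum_h\norm{\Pi\,\sigma_i(K_h)\ket0}^2\ge0 .
\]
Hence the conjugate computation $\bigoplus_i\sigma_i$ is a unitary computation, on $\lceil\log m\rceil$ extra qubits, that has zero accepting amplitude on every history exactly when $p=0$. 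Its acceptance probability is $\frac1m\Tr_{F/\mathbb Q}(p)\ge p/m$.

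\textbf{Transfer to exact $\Gamma$ blocks.}
The operator $\bigoplus_i\sigma_i(U)$ is conjugate, by a fixed invertible matrix $B$ (the embedding matrix of a $\mathbb Q$-basis of $F$), to the regular representation $R(U)$, which is a rational $2m\times2m$ (or $4m\times4m$) matrix. After clearing a fixed denominator, $R(U)$ is an integer matrix. I will block-encode each $R(U)$ exactly over $\Gamma$ as a linear combination of signed permutations prepared with Hadamards. I pad the coefficients with cancelling pairs so the normalization is a power of two, which makes the block exactly $2^{-k}R(U)$. The conjugation by $B$ does not affect whether a vector is zero. So every history's accepting left vector is zero iff the source one is, and perfect soundness follows history by history exactly as in \eqref{eq:rql-perfect-soundness}. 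The right factors are then produced by Lemma~\ref{lem:rql-auxiliary} and the virtual-gate-set synthesis argument of Theorem~\ref{thm:rql-adjoints}. This is now legitimate because every left gate is an exact $\Gamma$ circuit, so its inverse and conjugate are available.

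\textbf{Main obstacle.}
The hard step is retaining an inverse-polynomial signal. The exact blocks $2^{-k}R(U)$ are not contractions matched to a channel. Even though $R(U)$ is an isometry for the positive-definite rational form $\langle x,y\rangle=\Tr_{F/\mathbb Q}(x\bar y)$, normalizing each of $T$ steps separately loses a factor $c^{T}$. My first attempt is to feed the \emph{channel} $\Phi_t$, viewed as a map on $\bigoplus_i\sigma_i$, into the reweighted doubling framework of Section~\ref{sec:doubling-framework}. Its reference states $\sigma_t$ then absorb the fixed Gram-form distortion at every step, the weights telescope as in \eqref{eq:doubling-endpoint}, and the bound \eqref{eq:doubling-signal-bounds} gives $q_*\ge(p/m)^2/(e\,d\,m)$.

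The delicate point is reconciling the exactness requirement with the fact that $W_t$ needs an isometric left dilation. To handle it, I would let the left register carry the \emph{rational} integer dilation and let the fixed change of basis $B$ together with the per-step scalars $2^{-k}$ be folded into the approximated right factors. I would also check that the resulting $C_t$ remain compressions of unitaries. If this works, steps 3--5 of Theorem~\ref{thm:rql-standard} amplify to $175/256$ with zero acceptance on no-instances, which completes \eqref{eq:cm-class-equality}.
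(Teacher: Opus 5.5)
Your inclusions $\RQUL_\Gamma\subseteq\RQUL_{\mathcal G}\subseteq\RQL_{\mathcal G}$ are fine and match the paper. The argument for $\RQL_{\mathcal G}\subseteq\RQUL_\Gamma$, however, has a genuine gap at exactly the step you flag, and the proposed repair cannot work.

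The regular representation $R(U)$ is orthogonal only for the trace form, not for the Euclidean inner product. Any exact $\Gamma$ block of it, such as your $2^{-k}R(U)$, therefore shrinks each step by a constant factor, and right factors cannot undo this. To see why, take any step $\widehat C_t=\sum_aL_{t,a}\otimes B_{t,a}$ whose right part $\sum_aB_{t,a}\otimes\bra a$ is a block of a unitary. Factor $\widehat C_t$ through the label register: first the left map $\sum_aL_{t,a}\otimes\ket a$, then that right contraction. With a fresh label register for each step, the whole product is a contraction applied after $\ket v\mapsto\sum_hL_h\ket v\ket h$. Since $\Pi$ commutes with that contraction,
\[
 \norm{(\Pi\otimes I)\widehat C_T\cdots\widehat C_1\ket{00}}^2
 \le\sum_h\norm{\Pi L_h\ket0}^2 .
\]
Take $L_h=2^{-kT}R(K_h)$, let $G$ be the trace-form Gram matrix of \eqref{eq:cm-trace-matrix}, and start from the coefficient vector of $1$. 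Then the right side is at most $2^{-2kT}\norm{G^{-1}}$, which is exponentially small. ``Folding'' $2^{-k}$ into the right factors would require $\norm{W_t}=2^k>1$, and no block of a unitary has that norm.

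There is a second, independent problem. Lemma~\ref{lem:rql-auxiliary} and the reference-state experiment of Appendix~\ref{app:aux-synthesis} presuppose a genuine trace-preserving channel with an exact dilation. The family $\{2^{-k}R(K_{t,a})\}$ is not one.

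The paper proves this inclusion without redesigning the doubling, using two ideas your proposal lacks.

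\begin{enumerate}
 \item Lemma~\ref{lem:cm-rational-lift} factors the positive-definite rational trace form as $E^{\mathsf T}E$ with a \emph{rectangular} rational $E$, via rational $LDL^{\mathsf T}$ elimination and Lagrange's four-square theorem. Each gate is replaced by the rational \emph{orthogonal} matrix $E_k\rho_k(U)L_k+(I-P_k)$, acting on one shared field register prepared in $Ec(1)$. This preserves norms on unnormalized histories, so there is no per-step loss. The acceptance becomes $\widetilde p=\frac1m\sum_\sigma p_\sigma$, which is zero if and only if $p=0$ and is at least $p/m$.
 \item Lemma~\ref{lem:cm-scalar-compiler} uses the orthogonality of each lifted gate. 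Oblivious amplitude amplification of its LCU block $szU$ gives a block \emph{exactly} equal to $hU$ with $|h|^2\ge1-\delta$. This works because all singular values are equal, so amplification acts as a scalar. For your non-orthogonal $R(U)$, amplification would reshape the singular values and break the history-by-history zero argument.
\end{enumerate}

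The compiled gates are then run in a \emph{measured} $\Gamma$ computation. It measures the compiler scratch after each gate and rejects if the scratch is nonzero, losing at most a factor $1/2$ overall. This gives $\RQL_{\mathcal G}\subseteq\RQL_\Gamma$, and Theorem~\ref{thm:rql-standard} is then applied as a black box.
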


The field and the gate set are fixed independently of input length.
The proof uses two lemmas. The first replaces field-valued gates by an
intermediate finite gate set of rational orthogonal gates of constant
arity; the second compiles those gates into the original one- and
two-qubit gate set $\Gamma$.

\begin{lemma}[Exact rational simulation of a fixed field]
\label{lem:cm-rational-lift}
Let $F$ be a fixed totally real or CM field of degree $m$.
A polynomial-time, logarithmic-space computation over a fixed finite
gate set with entries in $F$ can be simulated over a fixed
finite rational orthogonal gate set, using $O(1)$ additional qubits.
If its acceptance probability is $p$, the simulator's probability is
\begin{equation}\label{eq:cm-averaged-acceptance}
 \widetilde p=\frac1m\sum_{\sigma:F\hookrightarrow\mathbb C}p_\sigma,
 \qquad p=0\Longrightarrow\widetilde p=0,
 \qquad \widetilde p\ge\frac p m,
\end{equation}
where $p_\sigma$ is the acceptance probability with $\sigma$ applied
entrywise to the original gates. Computational-basis measurements and
classical control are retained in this intermediate simulation.
\end{lemma}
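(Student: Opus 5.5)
The plan is to represent each amplitude in $F$ by its rational coordinates and let the gates act by their regular representation. The $m$ embeddings of $F$ then appear as orthogonal blocks, which gives the average in~\eqref{eq:cm-averaged-acceptance}.

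\textbf{Setup.} Fix a $\mathbb Q$-basis $e_1,\ldots,e_m$ of $F$. Starting from $\ket0$ and applying gates with entries in $F$ (after removing the phases $\lambda_U$), every amplitude of every branch lies in $F$. So a branch state is a vector $x\in F^{d}=\mathbb Q^{d}\otimes_{\mathbb Q}F$. A gate $U$ with entries in $F$ acts $\mathbb Q$-linearly on $\mathbb Q^{d}\otimes F\cong\mathbb Q^{dm}$ through a rational matrix $L(U)$ on $\mathrm{arity}\cdot$qubits $\otimes\ \mathbb Q^m$. The field register has $\lceil\log m\rceil=O(1)$ qubits and is shared by all gates.

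Equip $F$ with the trace form $\langle x,y\rangle=\operatorname{Tr}_{F/\mathbb Q}(\overline{x}y)$. By~\eqref{eq:cm-conjugation}, this equals $\sum_\sigma\overline{\sigma(x)}\sigma(y)$, so it is rational and positive definite. For vectors it becomes $\langle x,y\rangle=\sum_\sigma\sigma(x)^\dagger\sigma(y)$. Since $U^\dagger U=I$ holds in $F$ with the field conjugation, applying $\sigma$ and using~\eqref{eq:cm-conjugation} shows that $\sigma(U)$ is unitary for every $\sigma$. Hence $L(U)$ is an isometry for the form $G_d=I_d\otimes G$, where $G$ is the Gram matrix of the trace form. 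This is the precise sense in which the doubled computation is the direct sum over $\sigma$ of the computations with $\sigma$ applied entrywise.

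\textbf{Acceptance probabilities.} Computational-basis projectors act as $\Pi\otimes I_m$, which commutes with everything above. The branching structure and the classical control are therefore unchanged, and each measurement outcome has probability $\sum_\sigma\|\sigma(\Pi x)\|^2/\sum_\sigma\|\sigma(x)\|^2$ for the appropriate normalization. Take the initial field-register vector to be the element $1\in F$. It has $\langle1,1\rangle=m$, and each embedding block receives weight $1$. With the normalization by $m$, the total acceptance probability of the simulator is $\widetilde p=\frac1m\sum_\sigma p_\sigma$.

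It remains to show that $p=0$ forces $\widetilde p=0$. Each accepting history contributes $\|\Pi K_h\ket0\|^2=\sum_i\overline{y_i}y_i$ with $y_i\in F$. By~\eqref{eq:cm-conjugation}, each $p_\sigma$ equals $\sigma(p)$, where $p$ is read in the totally real subfield. So $p=0$ forces $p_\sigma=0$ for all $\sigma$. Nonnegativity of each $p_\sigma$, together with the identity embedding being one of them, gives $\widetilde p\ge p/m$.

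\textbf{Main obstacle: making the gates rational orthogonal.} The maps $L(U)$ preserve $G$, not the standard form, and orthonormalizing by $G^{1/2}$ introduces square roots. I would proceed in three steps.

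1. Diagonalize $G$ rationally, $G=P^{\mathsf T}\Delta P$, with $\Delta=\operatorname{diag}(\delta_1,\ldots,\delta_m)$ positive rational.

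2. Write each $\delta_i=a_i^2+b_i^2+c_i^2+d_i^2$ with rationals, using Lagrange's four-square theorem. The rational map $E$ sending coordinate $i$ to $(a_i,b_i,c_i,d_i)\in\mathbb Q^4$ is an isometric embedding of $(\mathbb Q^m,\Delta)$ into $(\mathbb Q^{4m},I)$.

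3. Conjugate to obtain $EPL(U)P^{-1}E^{+}$, a rational isometry between subspaces of $\mathbb Q^{4m}$ (indeed onto the image of $E$). Extend it to a rational orthogonal matrix on the whole space by Witt's extension theorem for quadratic forms over $\mathbb Q$.

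The image of $E$ is then preserved by every gate, and the states always lie in it, so probabilities are unchanged. Since $\mathcal G$ is finite and $m$ is constant, only finitely many constant-arity rational orthogonal gates arise. They act on the gate's qubits together with $O(1)$ field-register qubits, and the uniformity, time and space of the original machine are preserved.
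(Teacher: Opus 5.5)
Apart from one step, your plan is the paper's construction: the regular representation, the trace form, rational $LDL^{\mathsf T}$ with four squares, and extension off the encoded subspace. The missing step is initializing the field register, and with the normalization you chose it cannot be done.

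The simulator starts in $\ket{0\cdots0}$, and rational orthogonal gates send rational vectors to rational vectors. You embed $(\mathbb Q^m,G)$ using the \emph{unnormalized} trace form. So the encoded element $1\in F$ is $EPc(1)$, with squared length $\operatorname{Tr}_{F/\mathbb Q}(1)=m$. The physical initial state would then have to be $\ket{0}\otimes EPc(1)/\sqrt m$. When $m$ is not a perfect square this vector has irrational entries; this already happens for $F=\mathbb Q(i)$, where $m=2$. No sequence of your rational orthogonal gates maps $\ket{0\cdots0}$ to it. Your phrase ``with the normalization by $m$'' therefore names no operation the simulator can actually perform. Rationality of every gate, including state preparation, is the point of this lemma, because the next lemma compiles each rational gate exactly.

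The repair is to run your steps 1--2 on $G/m$ instead; this is the paper's matrix $B$, with $B_{11}=1$ when the basis begins with $1$. The diagonal entries are still positive rationals, hence sums of four rational squares. Now $v=EPc(1)$ is a rational unit vector. The rational Householder reflection $I-2(e_0-v)(e_0-v)^{\mathsf T}/\norm{e_0-v}^2$, or the identity if $v=e_0$, prepares it exactly. The formula $\widetilde p=\frac1m\sum_\sigma p_\sigma$ then holds with no further rescaling.

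Three smaller remarks:
\begin{itemize}
\item Witt's theorem is unnecessary. Your conjugated map is an isometry of $\operatorname{im}E$ onto itself, so adding the rational projector onto $(\operatorname{im}E)^\perp$ already gives a rational orthogonal matrix. This is the paper's $\widetilde U=E_k\rho_k(U)L_k+(I-P_k)$, where $P_k$ denotes that projector, not your change of basis.
\item After the four-square embedding, the field register is $4m$-dimensional, padded to a power of two, rather than $m$-dimensional.
\item Your zero-preservation argument via $p\in F$ and $p_\sigma=\sigma(p)$ is a valid alternative to the paper's observation that every accepting amplitude is the zero element of $F$.
\end{itemize}
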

\begin{proof}
Choose a rational basis $b_1=1,b_2,\ldots,b_m$ of $F$, and write
$c(a)\in\mathbb Q^m$ for the coefficient vector of $a\in F$.
Define the normalized trace matrix
\begin{equation}\label{eq:cm-trace-matrix}
 B_{ij}=\frac1m\operatorname{Tr}_{F/\mathbb Q}(\overline{b_i}b_j)
       =\frac1m\sum_\sigma\overline{\sigma(b_i)}\sigma(b_j).
\end{equation}
It is rational and symmetric, with $B_{11}=1$.
It is positive definite because the embedding matrix
$(\sigma(b_i))_{\sigma,i}$ is nonsingular. In particular,
\begin{equation}\label{eq:cm-trace-norm}
 c(a)^{\mathsf T}B c(a)=\frac1m\sum_\sigma|\sigma(a)|^2.
\end{equation}

There is a fixed rectangular rational matrix $E$ with $E^{\mathsf T}E=B$.
Indeed, rational $LDL^{\mathsf T}$ elimination writes
$B=C^{\mathsf T}DC$ with $C$ rational and $D$ positive rational diagonal.
Each diagonal entry $p/q>0$ is a sum of four rational squares: apply
Lagrange's four-square theorem to the integer $pq$ and divide the four
integers by $q$. Combining these four-square vectors gives a rational
factor of $D$, hence of $B$.
Pad the row dimension of $E$ to a power of two $r<8m$ and put
\[
 L=B^{-1}E^{\mathsf T},\qquad P=EL.
\]
Then $LE=I_m$ and $P=P^{\mathsf T}=P^2$ is the orthogonal projector
onto the image of $E$. All these matrices are fixed constants; no
uniform four-square algorithm for growing inputs is being assumed.

For a $k$-dimensional gate $U$ over $F$, let $\rho_k(U)$ be its rational
regular representation: its $(i,j)$ block is multiplication by $U_{ij}$
on field coefficient vectors. This representation preserves products.
By~\eqref{eq:cm-conjugation}, each $\sigma(U)$ is unitary, so
\[
 \rho_k(U)^{\mathsf T}(I_k\otimes B)\rho_k(U)=I_k\otimes B.
\]
Set $E_k=I_k\otimes E$, $L_k=I_k\otimes L$, $P_k=I_k\otimes P$ and
\begin{equation}\label{eq:cm-lifted-gate}
 \widetilde U=E_k\rho_k(U)L_k+(I-P_k).
\end{equation}
This is rational orthogonal and
$\widetilde U E_k=E_k\rho_k(U)$.
For verification, if $X=E_k\rho_k(U)L_k$, then
$X^{\mathsf T}X=P_k$, while the cross terms between $X$ and
$I-P_k$ vanish. Hence $\widetilde U^{\mathsf T}\widetilde U=I$.

Use one \emph{shared} $r$-dimensional field register for the entire
computation, initialized to $v=Ec(1)$. This is a rational unit vector.
It can be prepared from $e_0$ by the fixed rational Householder gate
\[
 I-\frac{2(e_0-v)(e_0-v)^{\mathsf T}}{\norm{e_0-v}^2},
\]
with the identity used if $v=e_0$.
Replace each source gate by~\eqref{eq:cm-lifted-gate} on its physical
wires and this same field register.
Measurements, resets, and classical control act on the physical
registers as before. Each lifted primitive has constant arity.

For an \emph{unnormalized} source history $h$, write $a_{h,i}\in F$
for its amplitude at physical basis coordinate $i$.
Induction on gates and measurement outcomes shows that the corresponding
simulator coordinate is exactly $Ec(a_{h,i})$.
Thus
\[
 \widetilde p
 =\sum_{h,i\ \mathrm{accepting}}c(a_{h,i})^{\mathsf T}B c(a_{h,i})
 =\frac1m\sum_\sigma\sum_{h,i\ \mathrm{accepting}}
      |\sigma(a_{h,i})|^2.
\]
This proves~\eqref{eq:cm-averaged-acceptance}.
If $p=0$, every accepting amplitude is zero under the original embedding
$F\subseteq\mathbb C$, hence is the zero field element and vanishes
under every embedding. The summand for the original embedding also gives
$\widetilde p\ge p/m$.
The loss of $m$ occurs once for the entire computation.

The coefficient encoding is $\mathbb Q$-linear and preserves the
norm defined by the trace matrix $B$; it is not asserted to be a complex-linear isometry
on arbitrary unknown source states. The known initial basis state and
unnormalized histories are what make this simulation valid.
Since $F$, the basis, and the gate set are fixed, all rational lifted
gates and the preparation gate are finite data that can be hardwired.
The extra field register has $O(1)$ qubits, and gate substitution
preserves polynomial time and logarithmic-space uniformity.
\end{proof}

\begin{lemma}[Implementing a rational gate exactly up to a scalar]
\label{lem:cm-scalar-compiler}
Fix a rational orthogonal gate $U$ of constant arity. For every integer
$b\ge1$, put $\delta=2^{-b}$. There is a unitary circuit over $\Gamma$ whose clean
block is \emph{exactly} $hU$ for a scalar $h$ satisfying
$|h|^2\ge1-\delta$. Its size is polynomial in $b$;
its generator, given $b$, uses $O(b)$ space. The number of extra qubits is constant, depending on $U$.
\end{lemma}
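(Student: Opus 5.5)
The plan is to realize $hU$ as the top-left block of a fixed-size orthogonal matrix whose entries are dyadic rationals, and then to synthesize that matrix exactly over $\Gamma$. For the second step I would use exact synthesis of unitaries over $\mathbb Z[1/\sqrt2,i]$: a Giles--Selinger type theorem, or more elementarily the decomposition of a dyadic orthogonal matrix into two-level Householder/Givens factors over $\mathbb Z[1/2]$, each compiled with $H$, $\mathrm{CNOT}$ and Toffoli. Write $U=A/N$ with $A$ an integer matrix, $N\ge1$ and $A^{\mathsf T}A=N^2I$. For a scalar $h=mN/2^k$ with integers $m,k$, the block $hU=mA/2^k$ is dyadic. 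The task therefore reduces to two things: choosing $m,k$ so that $h$ is close to $1$, and finding a dyadic orthogonal completion.

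\textbf{Choice of $m$ and $k$.} Take $k=b+\lceil\log_2N\rceil+3$. Choose $m$ as the largest integer with $mN\le2^k$, lowered by at most a constant number of steps so that $n:=2^{2k}-m^2N^2$ is not of the form $4^a(8c+7)$. Among a few consecutive values of $m$ some such choice exists: for $N$ odd, consecutive $m$ alternate the parity of $n$, and the odd values cannot all be $\equiv7\pmod 8$. When $N$ is even, I would first factor its power of two into $k$. By Legendre's three-square theorem we then have $n=s_1^2+s_2^2+s_3^2$ with integers $s_i$, which gives
\[
 h^2+(s_1/2^k)^2+(s_2/2^k)^2+(s_3/2^k)^2=1,\qquad
 h=mN/2^k\ge1-O(N/2^k).
\]
Hence $|h|^2\ge1-\delta$. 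The generator finds $s_1,s_2$ by exhaustive search and then tests whether the remainder $n-s_1^2-s_2^2$ is a perfect square. This uses $O(b)$ space; the statement bounds only circuit size and generation space, so the exponential search time is acceptable.

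\textbf{Completion.} On two fresh ancilla qubits I would form the quaternion-pattern block matrix
\[
 G=\begin{pmatrix}
 hU&-\beta I&-\gamma I&-\epsilon I\\
 \beta I&hU^{\mathsf T}&-\epsilon I&\gamma I\\
 \gamma I&\epsilon I&hU^{\mathsf T}&-\beta I\\
 \epsilon I&-\gamma I&\beta I&hU
 \end{pmatrix},
 \qquad(\beta,\gamma,\epsilon)=(s_1,s_2,s_3)/2^k .
\]
The placement of $U$ versus $U^{\mathsf T}$ on the diagonal has to be chosen so that every cross term in $G^{\mathsf T}G$ cancels. I expect this bookkeeping to be the main obstacle. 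The scalar off-diagonal entries commute with everything, so the only non-cancelling terms have the form $\beta(X-Y^{\mathsf T})$ for two diagonal blocks $X$ and $Y$. If no consistent arrangement exists, I would fall back on a cleaner route: the columns $[hU;\beta I;\gamma I;\epsilon I]$ already form a dyadic isometry, since $h^2U^{\mathsf T}U+(\beta^2+\gamma^2+\epsilon^2)I=I$. Such an isometry can be completed to a dyadic orthogonal matrix by Gram--Schmidt-free Householder elimination over $\mathbb Z[1/2]$, possibly after adjoining one more ancilla to keep every denominator a power of $2$.

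\textbf{Synthesis and cost.} Every entry of $G$ lies in $2^{-k}\mathbb Z$, and $G$ has constant dimension $4\cdot2^{\text{arity}}$, which is a power of two. Exact synthesis over $\Gamma$ with a constant number of extra qubits then uses $O(k)=O(b)$ gates, since the denominator exponent decreases by a constant at each reduction step. The synthesis is a deterministic column-by-column reduction, so it runs in $O(b)$ space. The clean block, with the two quaternion ancillas and any synthesis ancillas starting and ending in zero, is exactly $hU$, as required by Lemma~\ref{lem:cm-scalar-compiler}.
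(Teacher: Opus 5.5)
\textbf{The gap is in the choice of $m$ and $k$.} The three-square representation you need never exists unless $U$ is already dyadic. For odd $x$ and $k\ge2$ one has $4^k-x^2\equiv7\pmod 8$. Write $mN=2^jo$ with $o$ odd. Then $n=4^k-(mN)^2=4^j(4^{k-j}-o^2)$, which has the excluded form $4^a(8c+7)$ whenever $k-j\ge2$. The remaining cases force $mN\in\{2^{k-1},2^k\}$, that is, $h\in\{1/2,1\}$.

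So your claim that the odd values of $n$ cannot all be $\equiv7\pmod8$ is backwards: they all are. The even values reduce to the same case after dividing out powers of $4$, so lowering $m$ or moving powers of two into $k$ changes nothing. Equivalently, by Jacobi's formula, $4^k$ has only the $24$ representations $(\pm2^k,0,0,0)$ and $(\pm2^{k-1},\pm2^{k-1},\pm2^{k-1},\pm2^{k-1})$ as a sum of four squares. Hence $h^2+\beta^2+\gamma^2+\epsilon^2=1$ with all four numbers in $2^{-k}\mathbb{Z}$ forces $|h|\in\{0,1/2,1\}$. For $U=R_5$ there is no admissible $m\ge1$ for any $k$. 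Your fallback isometry $[hU;\beta I;\gamma I;\epsilon I]$ uses the same scalars, so it fails for the same reason.

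Separately, the quaternion pattern cannot be repaired for non-symmetric $U$. With diagonal blocks $X_i$ and scalar off-diagonal entries $g_{ij}$, the $(i,j)$ block of $G^{\mathsf T}G$ is $g_{ij}(X_i^{\mathsf T}-X_j)$. This forces $X_1=X_1^{\mathsf T}$, and in your arrangement the $(1,4)$ block is $h\epsilon(U-U^{\mathsf T})$.

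\textbf{A possible repair, and how it compares with the paper.} Your idea can be rescued with one more square. By Lagrange's theorem, $4^k-(mN)^2=s_1^2+\cdots+s_4^2$. With $m=\lfloor 2^k/N\rfloor$ and $\beta_i=s_i/2^k$, the block column $[hU;\beta_1I;\ldots;\beta_4I]$ is then a dyadic isometry with $h\ge1-N/2^k$. Completing it needs no special pattern. Reduce its columns one after another to basis vectors using the Giles--Selinger two-level $H$, $T$, $X$ steps over $\mathbb{Z}[1/\sqrt2,i]$; each later column stays orthogonal to the reduced ones, so it is supported on the remaining coordinates. Inverting gives an exact $\Gamma$ circuit with $O(k)$ gates and $O(b)$-space generation. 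That would be a genuinely different, number-theoretic proof with real $h$.

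The paper avoids number theory and exact-synthesis characterizations entirely. A linear combination of signed matrix-unit blocks gives the exact block $sU$. Tensoring with one approximately synthesized rotation $R$ gives the exact block $szU$ with $z=\bra{0}R\ket{0}$. Oblivious amplitude amplification then yields exactly $hU$, because every singular value of $szU$ equals $s|z|$. The synthesis error only perturbs the complex scalar $h$.
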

\begin{proof}
Choose a positive integer $q$ with $A=qU$ integral, and a power of two
$N\ge\max\{2q,\sum_{ij}|A_{ij}|\}$. These are constants.
Each matrix unit $\ket i\bra j$ is a clean block of a fixed reversible
basis circuit: flag every input label other than $j$, then apply a basis
permutation sending $j$ to $i$. Signs are exact phase gates; a zero block
is obtained by flipping a clean flag.
Using $|A_{ij}|$ signed copies of the appropriate matrix units and
padding with zero blocks gives $N$ such blocks.
Prepare a uniform selector, apply the selected unitary, and unprepare
the selector. Hadamards and reversible Boolean circuits over $\Gamma$
give a fixed unitary $W$ with clean block
\[
 sU,\qquad s=q/N\in(0,1/2].
\]
All temporary scratch registers must be zero in the selected block.

Choose a fixed integer $\ell\ge1$ such that
$\sin(\pi/(4\ell+2))\le s/2$, and put
\[
 a=\frac{\sin(\pi/(4\ell+2))}{s}\in(0,1/2],\qquad
 R_a=\begin{pmatrix}a&-\sqrt{1-a^2}\\\sqrt{1-a^2}&a\end{pmatrix}.
\]
Phase-sensitive one-qubit synthesis from
Appendix~\ref{app:aux-synthesis}, using
\cite[Theorem~4.5]{MelkebeekWatson12}, supplies a
$\Gamma$ circuit $R$ with
$\norm{R-R_a}\le\epsilon=\delta/(2(2\ell+1))$.
Writing $z=\bra0R\ket0$, the unitary $V=R\otimes W$ has clean
block \emph{exactly} $szU$, including any complex phase of $z$.

Let $P_0$ be the clean-scratch projector tensored with the identity on
the data, and use oblivious amplitude amplification with $V$ and its inverse:
\[
 \mathcal A=-V(I-2P_0)V^\dagger(I-2P_0).
\]
The standard two-dimensional amplitude-amplification calculation
\cite{BrassardHoyerMoscaTapp02}, applied to each singular vector, gives
\begin{equation}\label{eq:cm-exact-scalar}
 \operatorname{blk}(\mathcal A^\ell V)=hU,\qquad
 h=\frac z{|z|}\sin\bigl((2\ell+1)\arcsin(s|z|)\bigr),
\end{equation}
with $h=0$ if $z=0$. All singular values of $szU$ are equal, so the
same scalar multiplies every input direction. The reflections and the
inverse of the implemented circuit are exact over $\Gamma$.

For the ideal rotation $R_a$, the amplified clean block is exactly $U$.
There are $2\ell+1$ calls to $V$ or $V^\dagger$ in the amplified circuit.
Unitary telescoping therefore gives
\[
 |h-1|\le(2\ell+1)\epsilon=\delta/2,
 \qquad |h|^2\ge(1-\delta/2)^2\ge1-\delta.
\]
The error estimate is used only to bound the scalar magnitude;
the proportionality in~\eqref{eq:cm-exact-scalar} holds exactly for the
actual synthesized circuit. Since $s$ and $\ell$ are constants, the
claimed resource bounds follow from the one-qubit synthesis bound.
\end{proof}

\begin{proof}[Proof of Theorem~\ref{thm:rql-cm}]
Take a machine for a promise problem in $\RQL_{\mathcal G}$.
Replacing each primitive $U$ by $\lambda_U U$ changes each complete
classically selected history by a phase and leaves all outcome
probabilities unchanged. Apply Lemma~\ref{lem:cm-rational-lift} to the
resulting field-valued gate set. It gives an intermediate rational-gate
simulation with acceptance zero on no-instances and at least $2/(3m)$
on yes-instances, where $m=[F:\mathbb Q]$.

Let $N_0(n)\ge1$ be a polynomial upper bound on the number of rational
gate applications in one such trial, including preparation of the field
register. Apply Lemma~\ref{lem:cm-scalar-compiler} to each of the finitely
many rational gates with
$b=\lceil\log_2(2N_0(n))\rceil$ and
$\delta=2^{-b}\le1/(2N_0(n))$.
After each compiled gate, test whether all its scratch qubits are zero.
Reject if they are not; otherwise reuse them.
This is an ordinary measured computation over $\Gamma$.
The successful block is exactly proportional to the intended rational
gate on every input, including inputs entangled with the rest of the
workspace. Therefore an accepting history's probability is multiplied
by a product of at most $N_0(n)$ factors, each at least $1-\delta$.
Consequently its total acceptance satisfies
\[
 p_\Gamma\ge(1-\delta)^{N_0(n)}\widetilde p
       \ge\tfrac12\widetilde p,
 \qquad \widetilde p=0\Longrightarrow p_\Gamma=0.
\]
On yes-instances this is at least $1/(3m)$.
Run $6m$ independent trials with reused measured workspace and accept
if any trial accepts. Their success is at least
$1-(1-1/(3m))^{6m}\ge1-e^{-2}>2/3$, while no-instance success is zero.
The field and compiler scratch require only constant extra quantum
space. Their finite gate generators and repetition counters use
$O(\log n)$ classical space, and the running time is polynomial.
This proves $\RQL_{\mathcal G}\subseteq\RQL_\Gamma$ using only the stated gate assumptions.

By Theorem~\ref{thm:rql-standard},
$\RQL_\Gamma=\RQUL_\Gamma$.
Finally take the completed unitary $\Gamma$ circuit, expand every
control and inverse into an elementary $\Gamma$ circuit, and only then
substitute the gate implementations from condition (2).
The product of their phases is a single global phase, so this substitution
preserves acceptance exactly and gives
$\RQUL_\Gamma\subseteq\RQUL_{\mathcal G}$.
Together with the defining reverse inclusion, the chain is
\[
 \RQUL_{\mathcal G}\subseteq\RQL_{\mathcal G}
 \subseteq\RQL_\Gamma=\RQUL_\Gamma
 \subseteq\RQUL_{\mathcal G},
\]
which proves~\eqref{eq:cm-class-equality}.
\end{proof}

\begin{corollary}[Algebraic extensions and alternative standard gates]
\label{cor:cm-examples}
The equality~\eqref{eq:cm-class-equality} holds for $\Gamma$ augmented
by any fixed finite collection of one- and two-qubit unitary gates over totally real or CM
fields. In particular, it holds for finite collections of root-of-unity
phase gates or rational orthogonal gates, without adjoining their
inverses or controls.
It also holds for $\{H,P_M,\mathrm{CNOT}\}$ when $8\mid M$, where
$P_M=\operatorname{diag}(1,e^{2\pi i/M})$, and for
$\{H,R_z(\pi/4),\mathrm{CNOT}\}$.
\end{corollary}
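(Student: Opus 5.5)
The plan is to verify the two hypotheses of Theorem~\ref{thm:rql-cm} for each listed gate set and then invoke that theorem. Hypothesis (2) is automatic whenever $\Gamma\subseteq\mathcal G$. So for ``$\Gamma$ augmented by'' a finite collection, only hypothesis (1) needs work, and there I will take $\lambda_g=1$ for $g\in\Gamma$.

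\textbf{Choosing one CM field.} Every entry of $H$, $T$ and $\mathrm{CNOT}$ lies in $\mathbb Q(\zeta_8)=\mathbb Q(\sqrt2,i)$, which is CM. Each added gate has, after its own phase, entries in some totally real or CM field $F_U$. If $F_U$ is totally real, I replace it by $F_U(i)$, which is CM because $-1$ is totally negative. By the cited closure of CM fields under finite composita (Corollary~1.5 of Milne), the compositum $F$ of $\mathbb Q(\zeta_8)$ and these finitely many fields is a single fixed CM field, and it contains all the entries. The specific cases then follow directly. A root-of-unity phase gate $\operatorname{diag}(1,\zeta_M)$ has entries in the cyclotomic field $\mathbb Q(\zeta_M)$, which is CM. A rational orthogonal gate has entries in $\mathbb Q$, which is totally real. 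Neither requires its inverse or controls, since Theorem~\ref{thm:rql-cm} does not ask for them.

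\textbf{Alternative standard gate sets.} For $\{H,P_M,\mathrm{CNOT}\}$ with $8\mid M$, hypothesis (1) holds with the CM field $\mathbb Q(\zeta_M)\ni\sqrt2=\zeta_8+\zeta_8^{-1}$, using $\lambda=1$. For hypothesis (2), note that $T=P_M^{M/8}$ exactly, with $\mu_T=1$, while $H$ and $\mathrm{CNOT}$ are primitive. For $\{H,R_z(\pi/4),\mathrm{CNOT}\}$, write $R_z(\pi/4)=\operatorname{diag}(e^{-i\pi/8},e^{i\pi/8})=e^{-i\pi/8}T$. Then $\lambda=e^{i\pi/8}$ places its entries in $\mathbb Q(\zeta_8)$, and the primitive itself implements $T$ up to the phase $\mu_T=e^{-i\pi/8}$, which condition (2) permits.

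\textbf{Main obstacle.} The delicate point is the field hypothesis, specifically producing a single CM field. Totally real fields are not CM as stated, so they must be adjoined with $i$, and I must check that the quoted compositum result covers this. The phase conventions also need care: $R_z(\pi/4)$ is exactly the case where the allowed global phase in condition (2) is essential.
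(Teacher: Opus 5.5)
Your proposal is correct and follows the paper's own proof. The paper forms the compositum of $\mathbb Q(e^{i\pi/4})$ with the fields of the added gates, first adjoining $i$ to any totally real ones. It then invokes closure of CM fields under finite composita and checks hypothesis (2) via $T=P_M^{M/8}$ and $R_z(\pi/4)=e^{-i\pi/8}T$. Your argument does the same, with the phases $\lambda$ and $\mu_T$ made more explicit.
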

\begin{proof}
Take the compositum of the fields of the extra gates with
$\mathbb Q(e^{i\pi/4})$. It is CM: totally real fields can first be
adjoined to $\mathbb Q(i)$, and finite composita of CM fields are CM.
For the displayed alternative gate sets,
$T=P_M^{M/8}$ and $R_z(\pi/4)=e^{-i\pi/8}T$, respectively.
Thus both hypotheses of Theorem~\ref{thm:rql-cm} hold.
\end{proof}

For an example whose entries cannot all lie in a cyclotomic field
(a field generated by a root of unity), let $t$ be a root of
$x^3-4x+1$ and adjoin
\[
 U(t)=\frac1{1+t^2}
 \begin{pmatrix}1-t^2&-2t\\2t&1-t^2\end{pmatrix}
\]
to $\Gamma$. The cubic is irreducible by the rational-root test and
has positive nonsquare discriminant $229$. Hence $\mathbb Q(t)$ is
totally real and non-Galois, and cannot be a subfield of a cyclotomic
field, whose subfields are abelian Galois over $\mathbb Q$.
Moreover, $t=-U(t)_{12}/(1+U(t)_{11})$, so the gate entries themselves
cannot all lie in a cyclotomic field.
Nevertheless Corollary~\ref{cor:cm-examples} applies.

\begin{remark}[Scope of the field hypothesis]
Theorem~\ref{thm:rql-cm} is an equality for fixed gate sets, not a
uniform algorithm over input-dependent number fields.
Its use of primitive global phases is compatible with classical control;
relative phases inside a controlled gate supplied as a primitive must
still satisfy the field hypothesis for that full matrix.
For arbitrary algebraic fields, conjugate gates need not be unitary and
the positive trace form used above need not be available. Thus this proof
does not establish the same conclusion for every algebraic gate set.
\end{remark}

%% file: result-qmal.tex
\section{Perfect completeness with quantum witnesses}
\label{sec:qmal-one}
For two-sided bounded error, the known equality
$\mathsf{QMAL}=\mathsf{Q_UMAL}=\BQL$ already eliminates both
intermediate measurements and logarithmic-size witnesses~\cite[Theorem~1]{FeffermanRemscrim21}.
We ask whether measurements can be eliminated while preserving perfect
completeness: some witness must still be accepted with probability
exactly one on each yes-instance.
Doubling provides such a construction under the gate assumptions below. Here we propagate the final
rejection observable backwards and ask Merlin to supply a state encoding
its history. Checking the history avoids preparing reference states or
implementing matrix inverse square roots.
We first state the gate assumptions, illustrate the construction for
Clifford+$T$, and then give the general construction and its soundness proof.

\subsection{Model and theorem}
\label{sec:qmal-model}
We use the witness-size convention of Fefferman and
Remscrim~\cite[Definition~8]{FeffermanRemscrim21}: the witness and all
stored workspace together have $O(\log n)$ qubits, including classical
memory represented in the computational basis. The verifier is a
logspace-uniform, polynomial-time computation over a fixed finite gate
set $\mathcal G$, with computational-basis measurements, resets, and
classical control allowed. It is absolutely halting: every branch, on
every input and for every witness, halts within a common polynomial
bound. The classical input is read-only and does not count towards the space bound.
The witness may be any state; no circuit-size or gate-set restriction is
imposed on Merlin's preparation.

Write $\mathsf{QMAL}_{1,\mathcal G}$ for the resulting promise class
with acceptance probability one for some witness on each yes-instance
and at most $1/3$ for every witness on each no-instance.
The class $\mathsf{Q_UMAL}_{1,\mathcal G}$ restricts the verifier to
unitary operations until one final computational-basis measurement.
In this section we also allow primitive gates of any fixed constant
arity. This explicitly extends the one- and two-qubit convention used
earlier; the theorem includes that convention as a special case.
We retain the paper's convention that gate entries can be computed to
error $2^{-b}$ in deterministic space $O(b)$ and time $2^{O(b)}$.
This fixes the computational model; it is not an approximation step in
this proof. The primitive gates are applied exactly, and the
construction below never evaluates their entries. In particular, the
exact circuit construction in Lemma~\ref{lem:qmal-exact-block} needs only
symbolic access to the gate names and wire indices.

\begin{theorem}[Measurement elimination with perfect completeness]
\label{thm:qmal-one}
Let $\mathcal G$ be a fixed finite set of constant-arity unitaries whose
entries satisfy the preceding time and space bounds. Suppose that:
\begin{enumerate}
 \item Fixed circuits over $\mathcal G$ implement $H$, $X$, and Toffoli
 exactly, with a constant number of extra qubits that start and end in
 zero for every input state.
 \item For every $U\in\mathcal G$, a fixed circuit over $\mathcal G$
 implements $e^{i\phi_U}U^\dagger$ for some fixed real $\phi_U$, again
 returning its constant number of extra qubits to zero.
\end{enumerate}
Then
\[
 \boxed{\mathsf{QMAL}_{1,\mathcal G}
       =\mathsf{Q_UMAL}_{1,\mathcal G}.}
\]
No exact controlled, conjugated, or transposed versions of the additional
gates are required, and their entries need not be algebraic.
\end{theorem}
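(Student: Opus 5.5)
The easy inclusion $\mathsf{Q_UMAL}_{1,\mathcal G}\subseteq\mathsf{QMAL}_{1,\mathcal G}$ holds by definition. For the converse, the plan is to replace the adaptive verifier by a unitary check of a \emph{history certificate} supplied by Merlin. Fix a verifier and an input. As in Section~\ref{sec:doubling-framework}, I would include the classical configuration in a $\log d=O(\log n)$-qubit data register. I would realize each step by an exact dilation $D_t$ over $\mathcal G$, built as in the proof of Theorem~\ref{thm:rql-adjoints}: dummy registers and Boolean routing, so that no controlled source gate is needed, with Toffoli, $X$ and $H$ supplied by assumption~1. This gives trace-preserving channels $\Phi_t$ with Kraus operators $K_{t,a}$.

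Instead of propagating the state forwards, I would propagate the rejection observable backwards:
\[
 O_T=I-\Pi,\qquad O_{t-1}=\Phi_t^\dagger(O_t)=\sum_a K_{t,a}^\dagger O_tK_{t,a}.
\]
Since each $\Phi_t^\dagger$ is unital and completely positive, $0\preceq O_t\preceq I$ for every $t$. This is the key normalization advantage over the forward doubling: no reweighting is needed, because $\norm{\vecf(O_t)}\le\sqrt d$ uniformly. The rejection probability on witness $\xi$ is $\Tr(O_0(\xi\otimes\proj0))$. Perfect completeness therefore means that some $\xi$ lies in the kernel of $O_0$ restricted to the initial workspace.

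Merlin supplies the witness $\xi$ together with a clock-indexed history state, roughly
\[
 \frac1{\sqrt{T+1}}\sum_{t}\ket t\otimes\frac{\vecf(O_t)}{\norm{\vecf(O_t)}},
\]
on $O(\log n)$ qubits. A more robust variant encodes the kernel projectors of the $O_t$, or purifications of the reachable states $\rho_t$ together with $O_t$; I would settle the exact encoding once the soundness calculation below is done.

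The verifier, coherently over a uniformly chosen $t$, runs a local transition test. It applies $D_t$ to one half of the doubled register and $D_t^\dagger$ to the other. The second application uses assumption~2, so that inverses of primitives exist up to the phase $e^{i\phi_U}$, together with the transpose trick $(U\otimes I)\ket\Omega=(I\otimes U^{\mathsf T})\ket\Omega$ on a maximally entangled pair. Together these realize $\sum_a K_{t,a}^\dagger\otimes K_{t,a}^{\mathsf T}$ as an exact clean block. Because $U$ and its inverse are always used in pairs, the phases $e^{\pm i\phi_U}$ cancel exactly. The verifier then checks consistency of clock $t$ with clock $t-1$ by a swap- or Hadamard-type test with exact Boolean reflections. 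It also runs boundary tests: at $t=T$ the register must encode $I-\Pi$, and at $t=0$ the component must annihilate $\xi\otimes\ket0$.

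The honest history passes every test with probability exactly one, since all relations hold exactly and all gates are exact. This gives perfect completeness with no approximation anywhere; entries of the gates are never computed, which is why transcendental entries are allowed.

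The main obstacle is soundness. Suppose the total passing probability is at least $1-\gamma$ for some inverse polynomial $\gamma$. I would show that the certificate is then close to a consistent history, and that consistency forces $\Tr(O_0(\xi\otimes\proj0))$ to be small, contradicting rejection at least $2/3$ on no-instances. The first route I would try is a Kitaev-style telescoping argument. Each local test failing with probability $\gamma_t$ bounds the error $\norm{v_{t-1}-M_t^\dagger v_t}$. The contraction $\norm{\Phi_t^\dagger}_{\infty\to\infty}\le1$, combined with $\norm{\vecf(O_t)}\le\sqrt d$, converts these local errors into a bound of order $T\sqrt d\,\sum_t\sqrt{\gamma_t}$ on the deviation of the $t=0$ endpoint. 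This loses only polynomial factors. The delicate points are that $\vecf$ is not a purification, so norms must be tracked explicitly, and that a cheating prover may entangle the certificate with $\xi$.

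The resulting gap is only inverse polynomial. I would close it with in-place Marriott--Watrous amplification, which is unitary apart from the final measurement, uses the actual circuit and its inverse, needs only logarithmically many extra qubits, and keeps acceptance exactly one on accepted witnesses. Running it to error $1/3$ gives a $\mathsf{Q_UMAL}_{1,\mathcal G}$ verifier, and the space and time bounds follow as in Theorem~\ref{thm:rql-standard}.
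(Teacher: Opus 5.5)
\textbf{Gap: the consistency tests, as described, do not give perfect completeness.} You treat as routine the step that the honest history passes every consistency test with probability exactly one, and this fails. The backward steps are not norm-preserving. For a qubit reset, $\Phi^*(Y)=\bra0Y\ket0\,I$ sends $\vecf(\proj0)$, of norm $1$, to $\vecf(I)$, of norm $\sqrt2$. The many-to-one classical updates inside your $D_t$ behave the same way. Hence $\sum_aK_{t,a}^\dagger\otimes K_{t,a}^{\mathsf T}$ can have operator norm larger than one. It is then not an exact clean block of a unitary, only a scaled block $M_t/c$ with $c\ge\norm{M_t}$.

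A test that \emph{accepts when the clock-$t$ and clock-$(t-1)$ branches agree} is exact only if $M_t/c$ maps the honest $t$-component exactly onto the $(t-1)$-component. This applies to a Hadamard-type interference test; a swap test would need two copies of the certificate. With your per-clock normalization, $\hat o_{t-1}=(\norm{\vecf(O_t)}/\norm{\vecf(O_{t-1})})\,M_t\hat o_t$, so $c$ would have to equal an input-dependent ratio that the verifier does not know. Even for an unnormalized history, $c$ would have to equal $1$, which is impossible when $\norm{M_t}>1$. Otherwise the honest certificate leaks amplitude to nonzero scratch or to the ``disagree'' outcome, and perfect completeness is lost.

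The phase claim fails for a related reason. The available circuit gives $e^{i\phi_U}U^\dagger$, while $U$ itself carries no phase, so nothing cancels. If $M_t$ acts only on the clock-$t$ branch, $e^{i\phi_U}$ becomes a relative phase between clock branches, which again spoils exact agreement.

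\textbf{Missing idea: test a kernel condition and reject on the clean block.} Keep the history unnormalized except globally; your bound $\norm{\vecf(O_t)}\le\sqrt d$ is what makes this harmless. As the paper does, make the witness its own direct summand, $h(v)=(v,b_0\otimes v,\ldots,b_\tau\otimes v)$, so that the endpoint condition $z_0=b_0\otimes v$, with $b_0=\vecf(\Pi_{\rm rej})$, is homogeneous linear. All constraints then form one square matrix $F$ with $Fh(v)=(0,\ldots,0,Rv)$. The paper block-encodes $F\oplus I$ exactly as $e^{i\theta}A/N$ and rejects exactly when the workspace returns to zero. Kernel vectors are then never rejected, and the scale $N$ affects only soundness. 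Your telescoping argument, stated for every vector in the direct sum, becomes $\sigma_{\min}(F)\ge\mu$. That bound also settles your concern about certificates entangled with the witness.

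For the phase, every entry circuit runs $e^{i\phi_U}U^\dagger$ and $U$ at every address. At unselected addresses a swap lets $U$ undo $e^{i\phi_U}U^\dagger$ on the same register, so every branch carries the same scalar.

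A Kitaev-style variant would also work: use local residual tests for $z_{t-1}-M_tz_t$, each block-encoded with a known scale and rejecting on its clean block. This is a row-by-row form of the same test. Your amplification step is fine: with perfect completeness, it suffices to accept only if all alternating tests pass, recorded coherently in a first-failure label.
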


The proof below shows why the phase allowed in condition~2 remains
global even when a circuit chooses between different matrix entries.
The reverse containment is immediate; we prove the forward containment.

\subsection{The \texorpdfstring{Clifford+$T$}{Clifford+T} case: checking a rejection history}
\label{sec:qmal-standard}
For $\Gamma=\{H,T,\mathrm{CNOT}\}$, the theorem gives
\[
 \mathsf{QMAL}_{1,\Gamma}=\mathsf{Q_UMAL}_{1,\Gamma}.
\]
The gate assumptions hold exactly: $T^\dagger=T^7$, $H$ and CNOT are
self-inverse, $X=HT^4H$, and Toffoli has an exact Clifford+$T$ circuit
(Section~\ref{sec:rql-standard}). Here is how the unitary verifier works.

\paragraph{1. Describe rejection.}
Let $R$ be the original verifier's rejection operator on its witness
space: a normalized witness $v$ is rejected with probability
$\langle v,Rv\rangle$. Since $R$ is positive semidefinite, perfect
completeness means that some nonzero $v$ satisfies $Rv=0$.

\paragraph{2. Ask Merlin for a history.}
Work backwards from the final rejection projector. Write $B_t$ for
the observable after $t$ backward steps, and $b_t=\vecf(B_t)$.
For $\tau$ elementary channels, Merlin supplies a normalized version of
\[
 h(v)=(v,b_0\otimes v,b_1\otimes v,\ldots,b_\tau\otimes v).
\]
The parentheses denote a direct sum: these are amplitudes in distinct
labelled parts of one state, not separately supplied copies of $v$.
The history has polynomial dimension, so it fits in $O(\log n)$ qubits.
The verifier does not prepare it. Section~\ref{sec:qmal-history} constructs
a matrix $F$ that checks the final rejection projector, every backward
propagation equation, and zero rejection on $v$. In particular,
$Fh(v)=(0,\ldots,0,Rv)$.

\paragraph{3. Test all the equations exactly.}
Pad $F$ by an identity matrix to a power-of-two dimension $N$, obtaining
$A=F\oplus I$. Lemma~\ref{lem:qmal-exact-block} constructs an exact
Clifford+$T$ unitary $V$ whose matrix block between zero-workspace states
is $A/N$. The preliminary test rejects exactly when the verifier workspace
is all zero after $V$. For a normalized certificate $z$, this probability is
\[
 p_{\rm rej}(z)=\frac{\|Az\|^2}{N^2}.
\]
The circuit applies the full unitary $V$ and retains every component;
it does not postselect. An honest history, extended by zero in the
padding coordinates, is never rejected. On a no-instance,
Lemma~\ref{lem:qmal-history-gap} gives an inverse-polynomial lower bound
on this rejection probability for every certificate, including
certificates that do not describe a valid history.

\paragraph{4. Amplify using the same witness.}
Alternate coherent tests of acceptance and initialized workspace,
using $V$ and its actual inverse. A logarithmic-size register records
the first failed test. Section~\ref{sec:qmal-amplification} shows that
polynomially many tests reduce soundness to at most $1/3$, while an
honest certificate passes every test with certainty. Only the final
failure indicator is measured. This uses one supplied certificate and
no approximation of the exact rejection test.

\paragraph{A one-qubit example.}
Apply $H$ to the witness, measure in the computational basis and discard
the outcome, then measure again and accept outcome zero. The backward
observables are
\[
 B_0=B_1=\proj1,\qquad B_2=H\proj1 H=\proj{-}.
\]
Thus $v=\ket+$ is accepted with certainty, and its normalized history
is $h(\ket+)/2$. The exact history test never rejects this certificate.
The first measurement is redundant in this small example; it illustrates
the encoding of an honest history. The general no-instance bound is
proved below. Unlike the RQL simulation, this construction needs no
reference-state preparation or reweighting: Merlin supplies the history,
and the verifier checks it exactly.

\subsection{Encoding the rejection observables in a history}
\label{sec:qmal-history}
We now give the construction for any gate set $\mathcal G$ satisfying
Theorem~\ref{thm:qmal-one}.
We describe rejection by an operator at each stage of the computation.
Starting with the final rejection projector, we propagate this operator
backwards through the verifier. Its expectation at any stage equals the
probability that the rest of the original computation rejects.
Merlin supplies one state encoding a candidate witness together with
the vectorized rejection operators at all stages.
The verifier checks equations fixing the final rejection projector,
relating consecutive operators, and requiring zero rejection on the
candidate witness. An honest certificate satisfies all equations exactly.
On a no-instance, the bound proved below forces every normalized
certificate to violate the combined equations by an inverse-polynomial amount.

\paragraph{Elementary channels and adaptive control.}
Fix a source verifier. It can be represented by a logspace-uniform
sequence $\Phi_1,\ldots,\Phi_\tau$ of polynomially many elementary
channels on $s=O(\log n)$ qubits, each either a primitive unitary gate,
single-qubit dephasing, or a single-qubit reset. This representation
preserves the acceptance probability on every witness.
Here is why it needs no controlled-$U$ assumption. Store classical
configurations in basis registers, copy a measured bit to the appropriate
classical register after dephasing it, and implement deterministic updates
using reversible Boolean gates followed by resets of discarded registers.
For a classically conditional $U$, use a zero dummy register of its arity.
On the inactive classical branch, swap the target with the dummy;
apply $U$ unconditionally; undo the swap; then reset the dummy.
The active branch applies $U$ to the target, and the inactive branch
leaves the target unchanged. Classical configurations are diagonal, so
discarding the dummy gives precisely the desired source channel.

Equality tests, wire routing, and Boolean updates use $X$ and Toffoli
(a fixed helper in state $\ket1$ also supplies CNOT).
Enumerating the polynomially many configurations gives logspace-uniform
circuits of polynomial length and logarithmic width; an update can retain
the old configuration until the new one has been computed and then reset
the old register. Expand the fixed Boolean-gate circuits into primitives
of $\mathcal G$, and include their helper qubits in the total register.
Absolute halting permits padding the source evolution to a fixed length
with an absorbing classical output. The final rejection projector
$\Pi_{\rm rej}$ is diagonal and its diagonal entries are logspace
computable. Thus every chosen elementary map is a channel on the full
register, even on states that are not reachable in the source computation.


Let $D=2^s$ be the total register dimension after this conversion,
$d\leq D$ the original witness dimension, and
$E:\mathbb C^d\longrightarrow\mathbb C^D$ the basis embedding
$E\ket v=\ket v\ket0$. Define
\[
 B_0=\Pi_{\rm rej},\qquad
 B_t=\Phi_{\tau-t+1}^*(B_{t-1})\quad(1\leq t\leq\tau),
 \qquad R=E^\dagger B_\tau E.
\]
Here $\Phi^*$ is the Hermitian conjugate with respect to the trace inner
product. It propagates observables backwards, with
$\Tr(\Phi^*(B)\rho)=\Tr(B\Phi(\rho))$.
For every channel, this backward map is unital and completely positive, so
$0\preceq B_t\preceq I_D$ and $0\preceq R\preceq I_d$.
The original rejection probability on $v$ is $\langle v,Rv\rangle$.
Consequently the promise gives
\begin{equation}
 x\in L_{\rm yes}\ \Longrightarrow\ \ker R\ne\{0\},
 \qquad
 x\in L_{\rm no}\ \Longrightarrow\ R\succeq\delta I_d,
 \label{eq:qmal-rejection-promise}
\end{equation}
where one may take $\delta=2/3$. The proof also works for any known
inverse-polynomial rejection lower bound $0<\delta\leq1$.
Positivity is essential to the first implication:
$\langle v,Rv\rangle=0$ holds exactly when $Rv=0$.

Use row vectorization $\vecf(B)=\sum_{a,b}B_{ab}\ket{a,b}$, and set
$b_t=\vecf(B_t)$. If $M_t$ is the matrix of
$\Phi_{\tau-t+1}^*$ in this representation, then
\[
 b_t=M_tb_{t-1},\qquad
 M_t=\sum_j K_j^\dagger\otimes K_j^{\mathsf T}
 \quad\text{when }\Phi_{\tau-t+1}(X)=\sum_jK_jXK_j^\dagger.
\]
Unless stated otherwise, matrix norms below are Euclidean operator
norms; $\|X\|_F$ is the Frobenius norm and $\|X\|_\infty$ the operator
norm of an observable. A unital completely positive map $\Psi^*$ is
contractive in the latter norm: the Schwarz inequality gives
$\Psi^*(X)^\dagger\Psi^*(X)\preceq\Psi^*(X^\dagger X)
\preceq\|X\|_\infty^2 I$. Therefore
\[
 \|\Psi^*(X)\|_F\leq\sqrt D\|X\|_\infty\leq\sqrt D\|X\|_F.
\]
Every consecutive product of the $M_t$ represents such a map. Hence
\begin{equation}
 \|M_j\cdots M_i\|\leq\sqrt D\quad(1\leq i\leq j\leq\tau),
 \qquad \|b_t\|\leq\sqrt D.
 \label{eq:qmal-product-bound}
\end{equation}
The bound on whole products, rather than a separate normalization at
each step, prevents an exponential loss in time.

Define a linear map
$C:\mathbb C^{D^2}\otimes\mathbb C^d\longrightarrow\mathbb C^d$ by
\[
 C=\sum_{i,j=0}^{d-1}\ket i\bra{Ei}\otimes\bra{Ej}\otimes\bra j.
\]
Then $C(\vecf(B)\otimes v)=E^\dagger BEv$,
$CC^\dagger=dI_d$, and $\|C\|=\sqrt d$.
In particular, this map is not a contraction when $d>1$.
On the direct sum
\[
 \mathcal H=\mathbb C^d\oplus
       \bigoplus_{t=0}^{\tau}(\mathbb C^{D^2}\otimes\mathbb C^d),
 \qquad N_0=d+(\tau+1)D^2d,
\]
write a vector as $Z=(v,z_0,\ldots,z_\tau)$ and define a square matrix
$F$ by its output blocks:
\begin{equation}
 FZ=\bigl(z_0-b_0\otimes v,
          z_1-(M_1\otimes I_d)z_0,\ldots,
          z_\tau-(M_\tau\otimes I_d)z_{\tau-1},Cz_\tau\bigr).
 \label{eq:qmal-constraints}
\end{equation}
Domain and codomain have the same dimension, although the $d$-dimensional
block comes first in the domain and last in the codomain.
Set $h(v)=(v,b_0\otimes v,\ldots,b_\tau\otimes v)$.
Then $Fh(v)=(0,\ldots,0,Rv)$, and the first $\tau+1$ zero constraints
force every kernel vector of $F$ to have this form. Thus
\begin{equation}
 \ker F\ne\{0\}\quad\Longleftrightarrow\quad\ker R\ne\{0\}.
 \label{eq:qmal-kernel}
\end{equation}
Although $h(v)$ encodes a polynomial number of coefficients, its
normalization is a state on only $\lceil\log_2N_0\rceil=O(\log n)$
qubits. Merlin supplies this state; the verifier need not prepare it.
Dishonest witnesses need not factor into history and witness registers.
The following estimate applies to every vector $Z$.

Figure~\ref{fig:qmal-history} aligns the backward observables with
the original computation and summarizes the equations checked on the
certificate.
\input{figure-qmal-history}

\begin{lemma}[History constraint gap]
\label{lem:qmal-history-gap}
On a no-instance, the smallest singular value of $F$ satisfies
\[
 \sigma_{\min}(F)\geq\mu,
 \qquad \mu=\frac{\delta}{3D\sqrt d\,(\tau+1)}.
\]
\end{lemma}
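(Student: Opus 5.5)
We need to show that $\|FZ\|\ge\mu\|Z\|$ for every $Z=(v,z_0,\ldots,z_\tau)$. Put $FZ=(e_0,e_1,\ldots,e_\tau,w)$, so that $e_0=z_0-b_0\otimes v$, $e_t=z_t-(M_t\otimes I_d)z_{t-1}$ and $w=Cz_\tau$. Let $\epsilon=\|FZ\|$. The plan is to solve the triangular recursion explicitly and write every block of $Z$ as the honest history of $v$ plus an error built from the $e_t$. Unrolling gives
\[
 z_t=b_t\otimes v+\sum_{s=0}^{t}(M_t\cdots M_{s+1}\otimes I_d)e_s,
\]
where the empty product for $s=t$ is the identity. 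By the product bound \eqref{eq:qmal-product-bound}, every partial product has norm at most $\sqrt D$, and so does the identity. Hence $\|z_t-b_t\otimes v\|\le\sqrt D\sum_s\|e_s\|\le\sqrt D\sqrt{t+1}\,\epsilon$ by Cauchy--Schwarz, and this error does not grow exponentially in $t$.

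Next we use the last block together with the no-instance promise \eqref{eq:qmal-rejection-promise}. Since $C(b_\tau\otimes v)=Rv$, we have $w=Rv+C(z_\tau-b_\tau\otimes v)$. Using $\|C\|=\sqrt d$ this gives
\[
 \delta\|v\|\le\|Rv\|\le\|w\|+\sqrt d\sqrt D\sqrt{\tau+1}\,\epsilon\le\bigl(1+\sqrt{dD(\tau+1)}\bigr)\epsilon .
\]
So $\|v\|\le 2\sqrt{dD(\tau+1)}\,\epsilon/\delta$, using $\delta\le1$ to absorb the constant term.

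Finally we bound the remaining blocks. From $\|b_t\|\le\sqrt D$ we get $\|z_t\|\le\sqrt D\|v\|+\sqrt{D(\tau+1)}\,\epsilon$. Summing squares over the $\tau+1$ blocks and adding $\|v\|^2$ gives
\[
 \|Z\|\lesssim\sqrt{D(\tau+2)}\,\|v\|+\sqrt D(\tau+1)\,\epsilon
 \lesssim\frac{D\sqrt d\,(\tau+1)}{\delta}\,\epsilon .
\]
Rearranging yields $\sigma_{\min}(F)\ge c\,\delta/(D\sqrt d(\tau+1))$. The main work, and the likely obstacle, is bookkeeping the constants so that they land exactly on the factor $3$. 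The plan is to group terms as $A\|v\|+B\epsilon$, apply $1\le\sqrt{D}$, $1\le\sqrt d$ and $\delta\le1$ to dominate each piece by $\epsilon D\sqrt d(\tau+1)/\delta$, and check that the total count of such pieces is at most $3$. If that count comes out slightly too large, the fallback is to bound $\|Z\|^2$ directly through squared norms rather than the triangle inequality, which gives tighter constants. No separate treatment is needed for certificates that do not factor as a history tensored with a witness, because the bounds above are linear in $Z$ and hold for every vector of $\mathcal H$.
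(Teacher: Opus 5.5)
Your route is the same as the paper's. The deviations $r_t=z_t-b_t\otimes v$ satisfy $r_t=(M_t\otimes I_d)r_{t-1}+e_t$, and the interval-product bound \eqref{eq:qmal-product-bound} controls them without exponential growth. The last block gives $w=Rv+Cr_\tau$ with $R\succeq\delta I_d$, and one then bounds $Z=h(v)+(0,r_0,\ldots,r_\tau)$. The gap is that your estimates do not reach the stated constant $3$, and the loss is in one step.

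Write $q=\tau+1$. You bound $\|w\|$ and $\|e\|$ separately by $\epsilon$, which gives $\delta\|v\|\le(1+\sqrt{dDq})\,\epsilon\le2\sqrt{dDq}\,\epsilon$. That absorption uses $dDq\ge1$, not $\delta\le1$. Carried through your final estimate, this yields $\|Z\|\le(2\sqrt2+1)D\sqrt d\,q\,\epsilon/\delta$. At $D=d=q=\delta=1$ your chain gives exactly $2\sqrt2+1\approx3.83$. So as written you prove the lemma with $2\sqrt2+1$ in place of $3$.

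Your fallback of using squared norms is the right idea, but it belongs in the bound on $\|v\|$, not on $\|Z\|^2$. The paper keeps the plain triangle inequality $\|Z\|\le\|h(v)\|+\|(0,r_0,\ldots,r_\tau)\|$; that step is not where the constant is lost. Since $\|w\|^2+\|e\|^2=\|FZ\|^2$, Cauchy--Schwarz gives $\|w\|+\sqrt{dDq}\,\|e\|\le\sqrt{1+dDq}\,\|FZ\|\le\sqrt{2dDq}\,\|FZ\|$.

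Combine this with $\|h(v)\|\le\sqrt{1+Dq}\,\|v\|\le\sqrt{2Dq}\,\|v\|$. Your $\sqrt{D(\tau+2)}$ also works here, because $D(q+1)\le2Dq$. The honest part then contributes at most $2D\sqrt d\,q\,\|FZ\|/\delta$. The deviations contribute at most $\sqrt D\,q\,\|FZ\|\le D\sqrt d\,q\,\|FZ\|/\delta$. The total is exactly the factor $3$.

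The constant is not purely cosmetic. The paper uses the explicit $\gamma=\mu^2/N^2$ to fix the number $k$ of amplification rounds, although any fixed constant would do after adjusting $k$.
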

\begin{proof}
Put $q=\tau+1$ and write the output in
\eqref{eq:qmal-constraints} as $(e_0,\ldots,e_\tau,f)$.
Let $\eta=\|FZ\|$ and $w_t=z_t-b_t\otimes v$.
The recurrence for $w_t$ and \eqref{eq:qmal-product-bound}, also using
the identity for an empty product, give
\[
 \|w_\tau\|\leq\sqrt{Dq}\|e\|,\qquad
 \left(\sum_{t=0}^{\tau}\|w_t\|^2\right)^{1/2}
       \leq\sqrt D\,q\|e\|,
 \quad \|e\|^2=\sum_t\|e_t\|^2.
\]
Since $f=Rv+Cw_\tau$ and $R\succeq\delta I$,
\[
 \delta\|v\|\leq\|f\|+\sqrt{dDq}\|e\|
                 \leq\sqrt{1+dDq}\,\eta.
\]
Also $\|h(v)\|\leq\sqrt{1+Dq}\|v\|$. Combining the bounds yields
\[
 \|Z\|\leq
 \left(\frac{\sqrt{(1+Dq)(1+dDq)}}{\delta}+\sqrt D\,q\right)\eta
 \leq\frac{3D\sqrt d\,q}{\delta}\eta.
\]
The last inequality uses $D,d,q\geq1$ and $\delta\leq1$.
\end{proof}

\subsection{An exact test using the original gates}
\label{sec:qmal-coefficients}
Pad by the identity to dimension $N=2^\ell$, the least power of two
at least $N_0$, and put $A=F\oplus I_{N-N_0}$.
Identity padding preserves the kernel and the lower bound $\mu\leq1$;
zero padding would introduce spurious accepting witnesses.
\begin{lemma}[Exact matrix test from paired gate entries]
\label{lem:qmal-exact-block}
Assume the two exact gate-implementation conditions in
Theorem~\ref{thm:qmal-one}. Let $N=2^\ell=\poly(n)$, $N\geq2$,
and suppose a deterministic logspace procedure identifies each entry
of an $N$-by-$N$ matrix $A$ as $0$, $1$, $-1$, or
\[
 \pm\overline{U_{c,a}}U_{d,b}
\]
for a named gate $U\in\mathcal G$ and local basis indices $a,b,c,d$.
Then there is a logspace-uniform circuit $V$ over $\mathcal G$, using
$O(\log N)$ additional qubits and $O(N^2\log N)$ gates, such that
\begin{equation}
 (I_N\otimes\bra{0^a})V(I_N\otimes\ket{0^a})
       =e^{i\theta}\frac{A}{N}.
 \label{eq:qmal-block}
\end{equation}
The phase $\theta$ may depend on the input circuit but is common to
all entries. The constants in the resource bounds depend only on the
fixed gate set and its fixed implementation circuits. No controlled
version of $U$, no entrywise conjugate gate, and no numerical evaluation
of a matrix entry are required.
\end{lemma}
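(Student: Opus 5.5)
The plan is a linear-combination-of-unitaries block encoding in which each entry is produced by an exact \emph{entry gadget}, with the Hadamard sums supplying the $1/N$ normalization. On data register $\ket c$ with an $\ell$-qubit index register $J$ in $\ket{0^\ell}$, I would do the following.
\begin{enumerate}
\item Apply $H^{\otimes\ell}$ to $J$, giving $N^{-1/2}\sum_r\ket c\ket r$.
\item Apply a unitary $G$ that uses fresh scratch $S$ and acts as
\[
 \ket c\ket r\ket{0}_S\mapsto e^{i\theta}A_{rc}\ket c\ket r\ket0_S+(\text{terms with }S\neq0).
\]
\item Swap the data register with $J$.
\item Apply $H^{\otimes\ell}$ to $J$ again.
\end{enumerate}
The block between all-zero ancillas ($J$ and $S$) is then $\bra r(\cdot)\ket c=e^{i\theta}A_{rc}/N$, which is \eqref{eq:qmal-block}. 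The two Hadamard layers, the swap, and all Boolean logic use only $H$, $X$ and Toffoli, so they are exact by condition~1 of Theorem~\ref{thm:qmal-one}. It remains to build $G$ exactly, with a phase $\theta$ that does not depend on $(r,c)$.

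\emph{Entry gadget.} For $A_{rc}=\pm\overline{U_{c,a}}U_{d,b}$ I use $\overline{U_{c,a}}=\bra aU^\dagger\ket c$. Let $k$ be the arity of $U$, and take a $k$-qubit port register.
\begin{enumerate}
\item Write $\ket c$ into the port using $X$ gates.
\item Apply the condition-2 circuit for $e^{i\phi_U}U^\dagger$.
\item Apply the basis permutation sending $a\mapsto b$, and set a flag whenever the port is not in state $a$. I implement this as a reversible equality test followed by a permutation, so every component other than $\ket a$ lands in a nonzero-flag subspace.
\item Apply $U$.
\item Test that the port equals $d$ (otherwise set a flag), then map $d\mapsto0$.
\end{enumerate}
The clean amplitude is exactly $e^{i\phi_U}\overline{U_{c,a}}U_{d,b}$. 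The sign $\pm1$ is a conditional $Z$ on a helper held in $\ket1$, realized as $H\cdot\mathrm{CNOT}\cdot H$ using Toffoli. An entry $0$ sets the flag, and an entry $\pm1$ skips the gate calls. All branching on $(r,c)$ is coherent reversible logic: the classical labels $(U,a,b,c,d,\pm)$ are computed from $(r,c)$ into scratch by the logspace procedure, compiled as Toffoli circuits enumerating all $N^2$ index pairs with $O(\log N)$-size equality tests. Hence the gate count is $O(N^2\log N)$. Afterwards the labels are uncomputed.

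\emph{Common phase and no controlled gates.} I expect this step to be the main obstacle. Different entries invoke different gates, or none, so a naive gadget would attach different phases $e^{i\phi_U}$ to different entries. To avoid this, I apply for \emph{every} $U\in\mathcal G$ the same sequence in every branch: the $e^{i\phi_U}U^\dagger$ circuit, then the intermediate Boolean test-and-permute, then $U$. These act unconditionally on a fixed port. As in Theorem~\ref{thm:rql-adjoints}, reversible routing controlled by the computed labels decides what sits in the port. In the active branch it is the working register. In inactive branches it is a zero dummy of arity $k$, and the Boolean steps between the two calls are made trivial on that branch. The dummy then receives $UU^\dagger\ket0=\ket0$ up to the phase $e^{i\phi_U}$, so it returns exactly to zero.

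Thus every branch picks up the same factor $e^{i\theta}=\prod_{U}e^{i\phi_U}$, and the extra qubits of each fixed implementation circuit return to zero in every branch. Only swaps, Toffolis and $X$ gates are ever controlled, so no controlled, conjugated or numerically evaluated version of any $U$ is needed. The remaining checks are routine. I must verify that the routing and uncomputation of the labels leave only the flagged components outside the zero block. The space count is $O(\log N)$ for $J$, the labels and the flags, plus a constant for the dummies and helpers.
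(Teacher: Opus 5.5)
Your proposal is correct and is essentially the paper's proof. The conjugate factor comes from the available inverse via $\overline{U_{c,a}}=\bra{a}U^\dagger\ket{c}$. Each paired $U^\dagger$/$U$ call is executed on every branch, so inactive branches contribute the same scalar $e^{i\phi_U}$ and the phase is common to all entries. A Hadamard layer, an address-preserving entry unitary, a swap and a second Hadamard layer then turn the entries into the block $e^{i\theta}A/N$.

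The differences are only in circuit bookkeeping. The paper uses two coefficient registers $R_0,S_0$, so no intermediate flag test is needed, and on inactive branches it swaps them so that $U$ undoes $U^\dagger$. It runs one paired circuit per address, controlled by an address-equality bit. You instead use a single port with an intermediate test-and-permute, a routed zero dummy, and a coherently computed label register. As a result only $|\mathcal G|$ paired calls occur, and your phase $\theta=\sum_U\phi_U$ does not even depend on $A$.
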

The matrix $F\oplus I$ satisfies the symbolic-entry hypothesis, as we
now check before proving the circuit assertion.

\paragraph{Which coefficients occur?}
The blocks in \eqref{eq:qmal-constraints} occupy disjoint matrix positions.
For a primitive unitary $U$ the backwards step is
$M_U=U^\dagger\otimes U^{\mathsf T}$, with local entries
\begin{equation}
 (M_U)_{(a,b),(c,d)}=\overline{U_{c,a}}U_{d,b}
     =\langle a,d|(U^\dagger\otimes U)|c,b\rangle.
 \label{eq:qmal-paired-entry}
\end{equation}
Untouched wires contribute only Kronecker deltas. Dephasing has a
diagonal matrix with entries in $\{0,1\}$ in the vectorized representation; the backward map for qubit reset sends
$B$ to $B_{00}I$ locally and also has only $0$--$1$ entries.
The entries of $b_0$, $C$, and the padding are $0$ or $1$.
Thus every entry of $A$ is either $0$, $1$, $-1$, or a signed coefficient
in \eqref{eq:qmal-paired-entry}. Given row and column indices, a
logspace machine identifies this form and its indices by inspecting
the source circuit. It need not decide whether a gate coefficient is
numerically zero or compute its value.

\begin{proof}[Proof of Lemma~\ref{lem:qmal-exact-block}]
We first select an entry coherently, then combine the entries into the matrix block.
\paragraph{Selecting a coefficient without a controlled source gate.}
For one such coefficient use registers $R_0,S_0$, each of the arity of
$U$, and a bit $q$ specifying whether this coefficient is selected.
Perform the following operations in order:
\begin{enumerate}
 \item If $q=1$, XOR $c$ into $R_0$ and $b$ into $S_0$.
 \item Apply the available inverse of $U$, up to global phase, to $R_0$.
 \item If $q=0$, swap $R_0$ and $S_0$.
 \item Apply $U$ to $S_0$.
 \item If $q=0$, swap $R_0$ and $S_0$ again.
 \item If $q=1$, XOR $a$ into $R_0$ and $d$ into $S_0$.
\end{enumerate}
Only reversible Boolean operations are controlled. For $q=1$, the
amplitude from zero scratch to zero scratch is
$e^{i\phi_U}\overline{U_{c,a}}U_{d,b}$. For $q=0$, writing $S$ for
the swap, the entire scratch operation is
\[
 S(I\otimes U)S(e^{i\phi_U}U^\dagger\otimes I)
       =e^{i\phi_U}I.
\]
Thus the inactive branch is a scalar identity on all coefficient
scratch states, and the active branch has exactly the same scalar.
This remains true with implementation helpers: keep those helpers
separate from $R_0,S_0$, initialize them to zero, and return them to zero
after each fixed gate circuit. Their zero subspace is invariant, even
when the coefficient scratch is entangled with the address registers.

Enumerate all $N^2$ row--column pairs. At each pair compute its address
equality bit $q$, apply the appropriate circuit, then uncompute $q$.
Use shared coefficient scratch of constant width. A zero entry flips
a separate scratch flag on its selected address; $1$ does nothing,
and a minus sign uses $Z=HXH$ on $q$.
At any fixed address exactly one circuit is active. All the other
circuits are scalar identities on the coefficient scratch, including
any nonzero scratch left by the active circuit. Moreover, each paired
gate circuit is executed on every address and contributes its phase
on both branches. Their product is therefore a common phase independent
of the address, not a relative phase between entries.
The resulting address-preserving unitary $W$ satisfies
\[
 \langle0^r|W_{ij}|0^r\rangle=e^{i\theta}A_{ij}.
\]
Each address test uses $O(\log N)$ Boolean gates and $O(\log N)$
additional qubits returned to zero. There are $N^2$ tests, and the
coefficient circuit at each address has constant size. Thus the gate
count is $O(N^2\log N)$. The circuit generator enumerates addresses
and computes their symbolic entry descriptions in logarithmic space;
it never stores or approximates a complex matrix entry.

\paragraph{From entries to a matrix block.}
Figure~\ref{fig:qmal-verifier}(a) shows the circuit described here.
Start with a certificate in a column register $J$ and a zero row
register $I$, both of $\ell$ qubits. Apply $H^{\otimes\ell}$ to $I$,
apply $W$ controlled by the two addresses as just constructed, apply
$H^{\otimes\ell}$ to $J$, and swap $I,J$.
Selecting zero coefficient scratch and zero old-column register gives
\[
 \frac{e^{i\theta}}{N}\sum_{i,j}A_{ij}\ket i\bra j,
\]
which proves \eqref{eq:qmal-block}. Here ``selecting'' specifies a
subspace, not an intermediate measurement or postselection. The
Hadamard layers and final swap add $O(\log N)$ gates.
\end{proof}
Define rejection to be the event that all this workspace is zero.
For a normalized certificate $h$, the rejection probability is exactly
\begin{equation}
 p_{\rm rej}(h)=\frac{\|Ah\|^2}{N^2}.
 \label{eq:qmal-test-probability}
\end{equation}
A reversible Boolean computation can put this zero test in a final
output bit if desired. Equations~\eqref{eq:qmal-kernel} and
\eqref{eq:qmal-test-probability} give perfect completeness and,
on no-instances, rejection probability at least
\[
 \gamma=\frac{\mu^2}{N^2}
        =\frac{\delta^2}{9D^2d(\tau+1)^2N^2}
        \geq\frac1{\poly(n)}.
\]

\subsection{Unitary amplification with one witness}
\label{sec:qmal-amplification}
We give the amplification explicitly to retain the exact gate
conditions and perfect completeness. Space-efficient amplification
for unitary computations with quantum witnesses was developed by
Fefferman et al.~\cite{FeffermanKLMN16}.
Let $P=I_N\otimes\proj{0^a}$ project onto initialized verifier
workspace, let $\Pi_{\rm acc}=I-P$ be the preliminary test's accepting
projector at the output of $V$, and put
$Q=V^\dagger\Pi_{\rm acc}V$.
On the image of $P$, \eqref{eq:qmal-block} gives
\begin{equation}
 PQP\simeq I_N-\frac{A^\dagger A}{N^2}.
 \label{eq:qmal-projectors}
\end{equation}
An initialized kernel witness is fixed by both $P$ and $Q$.
On a no-instance, $0\preceq PQP\preceq(1-\gamma)P$.
Alternating a $Q$ test and a $P$ test for $k$ pairs, accepting only if
every test passes, would therefore have acceptance probability at most
\[
 \|(PQP)^k(h\otimes\ket{0^a})\|^2
       \leq(1-\gamma)^{2k}\leq e^{-2k\gamma}.
\]

These tests can be performed coherently in logarithmic space.
Figure~\ref{fig:qmal-verifier}(b) shows one pair of tests and the final
measurement.
Introduce a failure-label register with labels $0,1,\ldots,2k$,
initially zero; any unused binary labels are left fixed.
At test $j$, transpose labels $0$ and $j$ if that test fails.
For a $Q$ test apply $V$, condition the label transposition on
$\Pi_{\rm acc}$ failing, and apply $V^\dagger$.
For a $P$ test condition it on nonzero verifier workspace.
These are Boolean conditions in the computational basis between the
unconditional calls to $V$ and its inverse. Each predicate is computed
and uncomputed using separate scratch.
Before test $j$, label $j$ has no amplitude: earlier tests touched only
smaller labels. Thus an old failure never returns to label zero, and
the label-zero component is exactly the product of the passing
projectors applied so far. This proves the displayed probability bound
for the fully unitary implementation and for arbitrary, possibly
entangled certificates. Mixed witnesses follow by linearity.

Every gate of $V$ has an inverse over $\mathcal G$ up to a global phase.
Reversing its circuit therefore implements $V^\dagger$ up to a common
phase, which does not affect either projector test.
Extra helpers used to reverse primitive gates are returned to zero
after each inverse circuit. No controlled-$V$ operation is used.
Take $k=\lceil1/\gamma\rceil$ using the explicit rational bound above
(with $\delta=2/3$). This is a logspace-computable polynomial bound.
The final verifier accepts exactly when the failure label is zero,
with perfect completeness and soundness at most $e^{-2}<1/3$.
It uses $O(\log N+\log k)=O(\log n)$ witness and workspace qubits,
polynomially many gates, and a fixed schedule on every input.
This completes the proof of Theorem~\ref{thm:qmal-one}.

\input{circuit-qmal-verifier}

\subsection{Gate-set examples and comparison with reweighted doubling}
\label{sec:qmal-comparison}
Section~\ref{sec:qmal-standard} verifies the conditions for
$\Gamma=\{H,T,\mathrm{CNOT}\}$; see also the exact-synthesis
characterization of Giles and Selinger~\cite{GS13}.
Further examples are:
\begin{enumerate}
 \item For any fixed finite collection $\mathcal U$ of constant-arity
 unitaries whose entries satisfy the stated time and space bounds,
 $\{H,X,\mathrm{Toffoli}\}\cup\mathcal U\cup\mathcal U^\dagger$
 satisfies the theorem. Clifford+$T$ universality is not needed.
 \item For any fixed finite set $\Theta$ of angles such that each
 $e^{i\theta}$ can be computed within those bounds,
 $\Gamma\cup\{P_\theta:\theta\in\Theta\}$ works, where
 $P_\theta=\operatorname{diag}(1,e^{i\theta})$, since
 $XP_\theta X=e^{i\theta}P_\theta^\dagger$.
 The angles need not be rational multiples of $\pi$.
 \item The set $\Gamma\cup\{R_5\}$ works for
 \[
 R_5=\begin{pmatrix}3/5&-4/5\\4/5&3/5\end{pmatrix},
 \qquad ZR_5Z=R_5^\dagger.
 \]
 Its entries are outside $\mathbb Z[1/\sqrt2,i]$, so this specified
 matrix has no exact Clifford+$T$ implementation with extra qubits
 returned to zero~\cite{GS13}. The construction uses $R_5$ directly.
\end{enumerate}
These are equalities within each qualifying gate set, not a claim that
different such gate sets define the same perfectly complete class.

Both this construction and the RQL construction use vectorization and
preserve an exact zero, but they simulate different operators. Section~\ref{sec:doubling-framework} propagates
a density operator forwards, reweights its doubled evolution, and
keeps the left Kraus factors exact while approximating the right Kraus factors.
Here we propagate a rejection observable backwards. Merlin supplies a
history certificate, the verifier tests every history equation
through one exact matrix block, and a bound on products of backward
maps on observables supplies the inverse-polynomial gap.
There are no reference states, square-root or inverse-square-root
approximations, or invocations of $\BQL=\BQUL$ in this proof.
The first-failure label serves the same purpose as the earlier failure
counter, but now enforces a sequence of alternating projector tests.

The exact condition also differs: RQL has perfect soundness on
no-instances, whereas the present theorem preserves the existence of a
perfectly accepted witness on yes-instances. It does not remove that
witness or make completeness perfect when it was originally
less than one. In particular, it does not establish
$\mathsf{QMAL}=\mathsf{QMAL}_1$ or
$\RQL=\mathsf{coQ_UMAL}_1$, the distinct questions discussed by
Fefferman and Remscrim~\cite[Open Problems~1--2]{FeffermanRemscrim21}.

\paragraph{Comparison with perfect completeness for QMA.}
Grewal and Rudolph prove $\mathsf{QMA}=\mathsf{QMA}_1$ over
$\{X,\mathrm{CNOT},\mathrm{Toffoli},H\otimes H\}$ using an exact
history constraint~\cite[Sections~3.1--3.3]{GrewalRudolph26}.
They first arrange that the acceptance operator $E$ has dyadic rational
entries with a known denominator. Merlin supplies an exact rational
value $\Gamma_z=\langle z|(2E)^\ell|z\rangle$ and a history whose
final vector is mapped back to its first vector using $\Gamma_z^{-1}$.
An exact matrix test rejects in proportion to the squared norm of the
constraint residual. This is closely related to our use of a history
certificate and an exact kernel test.

The space requirement separates the two results. Their certificate
includes the numerator and denominator of $\Gamma_z$, with
$O(h\ell)$ bits when $2^hE$ is integral, and their implementation uses
an index register of that order. Even for a logarithmic-size original
witness, $h$ can grow with the polynomial circuit length; the resulting
space bound need not be logarithmic. Our construction instead uses the
original perfect-completeness promise to obtain $\ker R\ne\{0\}$ and
checks its backward history without a supplied rational scalar. Thus
their theorem does not directly establish
$\mathsf{QMAL}=\mathsf{QMAL}_1$, and our theorem only removes
measurements from verifiers that already have perfect completeness.

\subsection{Connection to span programs}
\label{sec:qmal-span-programs}
Span programs, introduced by Karchmer and Wigderson~\cite{KarchmerWigderson93},
decide whether a fixed target vector $t$ lies in the span of vectors
made available by the input bits. In linear-map notation, a fixed map
$L:\mathcal U\to\mathcal V$ and an available subspace $\mathcal U(x)$
accept $x$ when $Lw=t$ for some $w\in\mathcal U(x)$.
Such a $w$ is a \emph{positive witness}; a \emph{negative witness}
is a vector $\omega$ with $\langle\omega,t\rangle=1$ and
$\langle\omega,Lu\rangle=0$ for every $u\in\mathcal U(x)$.
Witness norms underlie the connection to quantum query
complexity~\cite{ReichardtSpalek12,Reichardt09Span}.

Our history test also describes a vector separating a target from a
linear subspace. Writing $F_x$
for the matrix in~\eqref{eq:qmal-constraints} and $e_j$ for the
standard basis of its domain, finite-dimensional linear algebra gives
\[
 \ker F_x=(\operatorname{im}F_x^\dagger)^\perp,
 \qquad
 \ker F_x\ne\{0\}
 \ \Longleftrightarrow\
 \ \exists j:\ e_j\notin\operatorname{im}F_x^\dagger.
\]
Indeed, any nonzero $Z\in\ker F_x$ has a nonzero coordinate and can
be rescaled to separate the corresponding $e_j$ from every column of
$F_x^\dagger$. Thus a perfectly accepted history certificate supplies
the same kind of separating vector as a negative span-program witness.
On a no-instance, Lemma~\ref{lem:qmal-history-gap} instead makes
$F_x^\dagger$ invertible, and every equation $F_x^\dagger w=e_j$
has a solution of norm at most $1/\mu$.

There is also a spectral connection. Reichardt and \v{S}palek use
zero-eigenvalue witnesses in span-program formula
evaluation~\cite{ReichardtSpalek12}; Reichardt's reflection formulation
gives query-optimal algorithms~\cite{reichardt2010reflections}.
Belovs and Reichardt obtain logarithmic-space algorithms for
$st$-connectivity~\cite{BelovsReichardt12}, while Ito and Jeffery develop
approximate witnesses and their relation to spectral
gaps~\cite{ItoJeffery19}.
In our test,~\eqref{eq:qmal-projectors} identifies the perfectly
accepted certificates with the common fixed vectors of $P$ and $Q$;
the singular-value bound for $F_x$ gives the gap used in amplification.

The resource connection is particularly close to Jeffery's bounds
relating quantum space to the logarithm of span-program size, including
approximate span programs for two-sided error~\cite{Jeffery22SpanSpace}.
Cornelissen, Jeffery, Ozols and Piedrafita also give conversions between
quantum algorithms and span programs that preserve time up to
polylogarithmic factors~\cite{CornelisenJefferyOzolsPiedrafita20}.
Here polynomial matrix dimension likewise permits a logarithmic-size
quantum certificate. However, $F_x$ depends on the full input, whereas
a standard span program fixes its map and lets input bits select
available vectors. The displayed identity is therefore a geometric
connection, not by itself a construction of such a span program or an
algorithm without a witness. Our theorem additionally requires Merlin
to supply the certificate and the exact gate implementation in
Section~\ref{sec:qmal-coefficients} to preserve perfect completeness.

%% file: figure-qmal-history.tex
\begin{figure}[tbp]
  \centering
  \begin{tikzpicture}[x=1cm,y=1cm,
    history arrow/.style={-{Stealth[length=4.5pt]},circuit wire}]
    \node[circuit label,anchor=west,font=\small\bfseries] at (-.1,1)
      {The original verifier evolves states to the right};
    \foreach \x/\lab in {1/0,4.3/1,9.1/{\tau-1},12.4/\tau}{
      \node[circuit gate,minimum width=1.35cm,minimum height=.65cm]
        at (\x,0) {$\rho_{\lab}$};
    }
    \draw[history arrow] (1.75,0)--node[circuit label,above] {$\Phi_1$} (3.55,0);
    \draw[history arrow] (9.85,0)--node[circuit label,above] {$\Phi_\tau$} (11.65,0);
    \node[circuit label] at (6.7,0) {$\cdots$};
    \node[circuit label,anchor=west,font=\small\bfseries] at (-.1,-1.05)
      {Rejection observables propagate to the left};
    \foreach \x/\lab in {1/\tau,4.3/{\tau-1},9.1/1,12.4/0}{
      \node[circuit auxiliary,minimum width=1.35cm,minimum height=.65cm]
        at (\x,-2.1) {$B_{\lab}$};
    }
    \draw[history arrow] (3.55,-2.1)--node[circuit label,above] {$\Phi_1^*$} (1.75,-2.1);
    \draw[history arrow] (11.65,-2.1)--node[circuit label,above] {$\Phi_\tau^*$} (9.85,-2.1);
    \node[circuit label] at (6.7,-2.1) {$\cdots$};
    \node[circuit label] at (1.3,-2.85) {$R=E^\dagger B_\tau E$};
    \node[circuit label,text=circuitClean] at (11.75,-2.85)
      {Start with $B_0=\Pi_{\rm rej}$};
    \node[circuit tag,align=center,text width=13.5cm,minimum height=1.1cm]
      at (6.7,-4.2)
      {An honest certificate is the single state $h(v)/\|h(v)\|$\\[2pt]
       $h(v)=(v,b_0\otimes v,\ldots,b_\tau\otimes v),
        \qquad b_t=\vecf(B_t)$};
    \foreach \x in {1.65,6.7,11.75}
      \draw[history arrow] (\x,-4.82)--(\x,-5.35);
    \node[circuit gate,text width=3.85cm,minimum height=1.15cm,font=\small]
      at (1.65,-6.02)
      {\textbf{Fix the endpoint}\\[3pt] $z_0=b_0\otimes v$};
    \node[circuit gate,text width=4.25cm,minimum height=1.15cm,font=\small]
      at (6.7,-6.02)
      {\textbf{Check consecutive steps}\\[3pt]
       $z_t=(M_t\otimes I_d)z_{t-1}$};
    \node[circuit gate,text width=3.85cm,minimum height=1.15cm,font=\small]
      at (11.75,-6.02)
      {\textbf{Require zero rejection}\\[3pt] $Cz_\tau=0$};
  \end{tikzpicture}
  \caption{The backward history for $\mathsf{QMAL}_1$.
    For a candidate witness $v$, the original verifier starts in
    $\rho_0=\ket{Ev}\bra{Ev}$ and applies
    $\rho_j=\Phi_j(\rho_{j-1})$.
    The aligned observable is $B_{\tau-j}$, so
    $\operatorname{Tr}(B_{\tau-j}\rho_j)$ is the same rejection
    probability at every time $j$.
    The tuple $h(v)$ denotes an unnormalized vector in a direct sum;
    its normalized encoding uses $O(\log n)$ qubits, not a separate
    copy of every operator. For a general certificate
    $Z=(v,z_0,\ldots,z_\tau)$, the three sets of equations are precisely
    $FZ=0$. They force $Z=h(v)$ and $Rv=0$.
    An honest witness on a yes-instance satisfies them exactly; the
    gap bound of Lemma~\ref{lem:qmal-history-gap} controls their combined violation for every
    normalized certificate on a no-instance.}
  \label{fig:qmal-history}
\end{figure}

%% file: circuit-qmal-verifier.tex
\begin{figure}[tbp]
  \centering
  {\small\textbf{(a) The exact history test $V$}}\par\smallskip
  \begin{tikzpicture}[x=1cm,y=.8cm]
    \foreach \y in {0,1.2,2.4}{\draw[circuit wire] (0,\y)--(11.5,\y);}
    \node[circuit label,anchor=east] at (-.15,2.4) {$J:\ket h$};
    \node[circuit label,anchor=east] at (-.15,1.2) {$I:\ket{0^\ell}$};
    \node[circuit label,anchor=east] at (-.15,0) {$S:\ket0$};
    \node[circuit gate,minimum width=1.1cm,minimum height=.65cm]
      at (1.4,1.2) {$H^{\otimes\ell}$};
    \node[circuit gate,minimum width=1.65cm,minimum height=2.65cm]
      at (4.1,1.2) {$W$};
    \node[circuit gate,minimum width=1.1cm,minimum height=.65cm]
      at (6.8,2.4) {$H^{\otimes\ell}$};
    \draw[circuit wire] (9.2,1.2)--(9.2,2.4);
    \foreach \y in {1.2,2.4}{
      \draw[circuit wire] (9.08,\y-.12)--(9.32,\y+.12);
      \draw[circuit wire] (9.08,\y+.12)--(9.32,\y-.12);
    }
    \node[circuit label] at (9.2,2.98) {SWAP};
    \node[circuit label,anchor=west] at (11.65,2.4) {$J$};
    \node[circuit label,anchor=west] at (11.65,1.2) {$I$};
    \node[circuit label,anchor=west] at (11.65,0) {$S$};
  \end{tikzpicture}
  \[
    \bra0_w V\ket0_w=e^{i\theta}A/N,
    \qquad A=F\oplus I,\qquad \ell=\log_2N.
  \]
  {\small\textbf{(b) Amplification with one certificate and one final measurement}}\par\smallskip
  \begin{tikzpicture}[x=1cm,y=.8cm]
    \foreach \y in {0,1.2,2.4}{\draw[circuit wire] (0,\y)--(11.5,\y);}
    \node[circuit label,anchor=east] at (-.15,2.4) {$J:\ket h$};
    \node[circuit label,anchor=east] at (-.15,1.2) {$w:\ket{0^a}$};
    \node[circuit label,anchor=east] at (-.15,0) {$L:\ket0$};
    \node[circuit gate,minimum width=1.1cm,minimum height=1.55cm]
      at (1.25,1.8) {$V$};
    \draw[circuit wire] (3.5,0)--(3.5,1.2);
    \draw[draw=circuitInk,fill=white,line width=.65pt] (3.5,1.2) circle (.055);
    \node[circuit label] at (3.5,1.78) {$w=0$};
    \node[circuit gate,minimum width=1.85cm,minimum height=.65cm]
      at (3.5,0) {$0\leftrightarrow 2j-1$};
    \node[circuit gate,minimum width=1.1cm,minimum height=1.55cm]
      at (5.8,1.8) {$V^{-1}$};
    \draw[circuit wire] (8.05,0)--(8.05,1.2);
    \fill[circuitInk] (8.05,1.2) circle (.055);
    \node[circuit label] at (8.05,1.78) {$w\ne0$};
    \node[circuit gate,minimum width=1.65cm,minimum height=.65cm]
      at (8.05,0) {$0\leftrightarrow 2j$};
    \draw[draw=circuitInk,line width=.5pt]
      (.55,3.12)--(.55,3.32)--(9,3.32)--(9,3.12);
    \node[circuit label] at (4.775,3.7) {Repeat for $j=1,\ldots,k$};
    \node[circuit gate,minimum width=.8cm,minimum height=.65cm]
      at (10.45,0) {$\mathrm M$};
    \node[circuit label,align=center] at (10.45,-.87)
      {measure $L$\\accept iff $L=0$};
  \end{tikzpicture}
  \caption{The unitary verifier constructed from the original
    $\mathsf{QMAL}_1$ verifier.
    Merlin supplies one normalized history certificate $\ket h$ in $J$.
    In (a), $W$ implements the matrix entries of $A$ exactly up to a common
    phase; $S$ groups coefficient scratch and temporary helpers, and
    $w$ comprises $I$ and $S$. The zero-workspace output of $V$ has
    probability $\|Ah\|^2/N^2$ and is the \emph{rejecting} outcome.
    In (b), each controlled box swaps the displayed labels of $L$ and
    fixes all other labels. The first records failure of the $Q$ test;
    after $V^{-1}$, the second records failure of the $P$ test.
    Conditions concern the whole workspace $w$, and their temporary
    helpers are returned to zero. The inverse of $V$ is implemented up
    to global phase, which does not affect these tests.
    A fresh nonzero label is used at every test, so a recorded failure
    never returns to label zero. An honest certificate in $\ker A$
    passes every test with certainty; all operations before the final
    measurement are unitary.}
  \label{fig:qmal-verifier}
\end{figure}

%% file: conclusion.tex
\section{Conclusion}
Under the gate assumptions of Theorem~\ref{thm:rql-adjoints},
$\RQL_{\mathcal G}=\RQUL_{\mathcal G}$. Keeping the left Kraus factors
exact preserves zero acceptance, while reference-state reweighting
retains an inverse-polynomial accepting probability on yes-instances.
Theorem~\ref{thm:rql-cm} also identifies this class with the standard
Clifford+$T$ class under its CM-field assumptions.

Theorem~\ref{thm:qmal-one} proves
$\mathsf{QMAL}_{1,\mathcal G}=\mathsf{Q_UMAL}_{1,\mathcal G}$
under its stated gate assumptions. Here Merlin supplies a history
certificate, and exact checks preserve perfect completeness.
This construction uses neither reference-state reweighting nor the
two-sided measurement-elimination theorem used by the RQL construction.

Several questions follow from these constructions.

\paragraph{Which gate sets support the simulation?}
Can the exact inverse assumption in Theorem~\ref{thm:rql-adjoints} be
removed while retaining its bounds on computing gate entries?
Can the requirement to implement Clifford+$T$ in
Theorems~\ref{thm:rql-adjoints} and~\ref{thm:rql-cm} be removed?
Can the gate-set independence theorem be extended beyond fixed CM-field gate sets?
Approximate universality does not settle this issue: the standard gate-conversion argument preserves a bounded gap but can destroy an exact zero~\cite{MelkebeekWatson12,LeGallLiuWang23}.

\paragraph{Can both acceptance and rejection be exact?}
Define $\mathsf{EQL}_{\mathcal G}$ and $\mathsf{EQ_UL}_{\mathcal G}$
using the same absolutely halting, polynomial-time, logarithmic-space
models, with a correct answer of probability one on every promised input.
Whether $\mathsf{EQL}_{\mathcal G}=\mathsf{EQ_UL}_{\mathcal G}$ remains
open here. The missing requirement is an exact implementation of the
operations used by an exact simulation.

One can remove a normalization obstacle at the level of ideal operators.
For the pure initial state, take $\tau_0=\rho_0$ and
$\tau_t=\Phi_t(\tau_{t-1})+3I/(dT)$. The same doubled construction,
with no inverse of $\tau_0$, gives
$\vecf(\rho_T\tau_T^{-1/2})$.
Since $\Tr\tau_T=4$, pairing with
$\vecf(\tau_T^{1/2})/2$ after copying the output gives amplitudes
$p_b/2$ for the original output probabilities $p_b$.
For an exact source computation this would give success probability
$1/4$, and one ordinary Grover iteration would suffice. Exact efficient
implementations of the steps and final pairing over a fixed gate set
are not supplied by our approximation arguments. Establishing such an
implementation, possibly with different reweighting factors, is the
open problem; no EQL equality follows from this observation.
\paragraph{What is the remaining role of a quantum witness?}
Theorem~\ref{thm:qmal-one} removes intermediate measurements from
perfectly complete verifiers but retains a quantum witness. The further equality
$\RQL=\mathsf{coQ_UMAL}_1$ posed by Fefferman and Remscrim is not automatic~\cite[Open Problem~1]{FeffermanRemscrim21}.
Can the argument that preserves exact rejection help eliminate witnesses
in this perfect-completeness setting?
Any such comparison must use compatible gate-set conventions for the verifier and the simulation.
The separate question $\mathsf{QMAL}=\mathsf{QMAL}_1$ also remains:
the history construction preserves an existing exact kernel and does
not create one from a merely high-acceptance witness.

\paragraph{Can the reference-state computation be made cheaper?}
The reference states bound the norms of the doubled steps, but
implementing their square roots and inverse square roots can dominate
the cost of the simulation.
Can structure in the source channels make it cheaper to access the
reference states and implement their square roots and inverse square roots?
Can these operations be implemented without using two-sided measurement
elimination?
For superlogarithmic space, the relevant target is an improved simultaneous time--space bound, not just polynomial time in the Hilbert-space dimension.
The time--space trade-offs of Girish and Raz, together with Zhandry's
black-box lower bounds, provide benchmarks for this question~\cite{GirishRaz22,Zhandry24}.

%% file: appendix-doubling.tex
\section{Implementing doubling}
\label{app:implementing-doubling}

This appendix collects the circuit constructions for the reweighted doubled operators defined in the main text.
The ideal clean-block construction is shared by the two simulations; the compiled construction makes explicit which factors remain exact in the one-sided case.

\subsection{Implementation with exact controls and conjugation}
\label{app:controlled-interface}
The following gate assumptions give a direct generalization of
Section~\ref{sec:rql-standard}. They also specify the interface used by
Lemma~\ref{lem:rql-auxiliary}. Section~\ref{sec:rql-adjoints} removes the
requirements for exact controlled gates and complex conjugates.
\input{result-rql}

\clearpage
\subsection{Doubling for \texorpdfstring{$\BQL$}{BQL}}

\label{app:doubling-bql}

Figure~\ref{fig:circuit-Ct} expresses the ideal contraction $C_t$ as a clean block of a unitary circuit.
The unitary extension of $\overline{W_t}$ in this diagram is an abstract operator, so the diagram alone does not assert exact synthesis over a fixed gate set.
For bounded error, this operator may be approximated to the accuracy required by the full simulation.
The compiled form in Appendix~\ref{app:doubling-rql} can also be used for $\BQL$.
Appendix~\ref{sec:doubling-bql}
 proves the acceptance bounds after reducing the original computation's
error. It also explains why implementing the right Kraus factors uses
the known measurement-elimination theorem.

\input{circuit-Ct}

\clearpage
\subsection{Doubling for \texorpdfstring{$\RQL$}{RQL}}
\label{app:doubling-rql}

For the one-sided simulation, Figure~\ref{fig:circuit-Chat-t} replaces the ideal unitary for the right Kraus factors by a compiled circuit $Q_t$ while retaining the exact unitary dilation $D_t$ of the original channel.
Thus the clean block $\widehat C_t$ retains the exact Kraus factors on the first data register, as required by the perfect-soundness argument in the main text.
The gate-set implementation hypotheses stated there still apply.

\input{circuit-Chat-t}

\clearpage
\input{appendix-auxiliary}

%% file: result-rql.tex
We impose the following sufficient conditions on the fixed finite gate set
$\mathcal G$. An \emph{exact implementation} is a fixed finite circuit over
$\mathcal G$, with a constant number of extra qubits initialized and
returned to zero on every input. The circuit must implement the specified
matrix exactly, including its phase.
\begin{enumerate}
  \item \textbf{Computing gate entries.} For every $U\in\mathcal G$, its entries
  can be approximated to additive error $2^{-b}$ in deterministic space
  $O(b)$ and time $2^{O(b)}$. Only $b=O(\log n)$ is used below.
  \item \textbf{Exact standard gates.} There are exact implementations of
  $H$, $T=\operatorname{diag}(1,e^{i\pi/4})$, and $\mathrm{CNOT}$.
  In particular, reversible Boolean gates, computational-basis predicates,
  and their phase reflections can be implemented exactly.
  \item \textbf{Exact inverses and complex conjugates.} For every
  $U\in\mathcal G$, there are exact implementations of $U^\dagger$ and
  $\overline U$, where conjugation is in the computational basis.
  \item \textbf{Exact coherent control.} For every $U\in\mathcal G$,
  there is an exact implementation of
  $\ket0\bra0\otimes I+\ket1\bra1\otimes U$.
  Further controls are obtained by computing their conjunction into clean
  scratch. In particular, this condition applies to the actual finite
  circuits implementing the right Kraus factors, not just their ideal targets.
\end{enumerate}
These fixed circuits are independent of input length. Approximate scalar
rotations are compiled using the exact implementations of Clifford+$T$
and space-efficient
synthesis~\cite{MelkebeekWatson12}; the phase-sensitive convention needed
for controlled rotations is explained in Appendix~\ref{app:aux-synthesis}.
No exact synthesis of $\sigma_t^{1/2}$ or $\sigma_t^{-1/2}$ is assumed.

\begin{theorem}[Same-gate-set measurement elimination]
\label{thm:rql-same-gates}
Under conditions (1)--(4), the polynomial-time, logarithmic-space classes
defined above satisfy
\[
  \RQL_{\mathcal G}=\RQUL_{\mathcal G}.
\]
\end{theorem}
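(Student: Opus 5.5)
The plan follows the five steps of Theorem~\ref{thm:rql-standard}, with $\Gamma$ replaced by $\mathcal G$ throughout. Conditions (2)--(4) replace the Clifford+$T$ identities used there. The inclusion $\RQUL_{\mathcal G}\subseteq\RQL_{\mathcal G}$ is definitional, so the work is the converse.

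\textbf{Exact left dilations.} Fix an $\RQL_{\mathcal G}$ machine and an input. Build exact dilations $D_t$ over $\mathcal G$ as in step~1 of that proof: measurements and resets copy bits into $E$, and classical updates are reversible Boolean circuits. These are exact by condition~(2), since Toffoli has an exact Clifford+$T$ circuit and hence an exact circuit over $\mathcal G$. A classically selected gate $U$ becomes a coherently controlled $U$ driven by an equality predicate on the copied configuration. That controlled gate is exact by condition~(4), with further controls handled by an AND into clean scratch. This fixes the channels $\Phi_t$, the reference states~\eqref{eq:doubling-reference}, and the target isometries $\overline W_t$.

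\textbf{Approximate right factors and the signal.} Lemma~\ref{lem:rql-auxiliary} explicitly allows any fixed gate set satisfying (1)--(4). I will therefore invoke it with $\varepsilon=1/(12Td)$ to obtain logspace-uniform circuits $Q_t$ over $\mathcal G$ with $\norm{H_t-\overline W_t}\le\varepsilon$. The step unitary $V_t$ is $D_t$ on $A,E$ followed by the actual reversed circuit $Q_t^{-1}$ on $B,E,Z$. This reversed circuit is exact by condition~(3): replace each gate with its exact inverse circuit in reverse order, with helpers returned to zero. Its clean block is then $\widehat C_t=\sum_a K_{t,a}\otimes H_{t,a}^\dagger$, which satisfies~\eqref{eq:rql-actual-step}.

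\textbf{Soundness, signal and amplification.} Perfect soundness~\eqref{eq:rql-perfect-soundness} follows history by history, because the left factors are exactly the $K_{t,a}$. On yes-instances the bound $q_*\ge p^2/(ed)$ from~\eqref{eq:doubling-signal-bounds}, combined with telescoping, gives squared accepting norm at least $1/(16d)$. The failure counter~\eqref{eq:adjoint-counter-step} uses only Boolean predicates and increments, which are exact by condition~(2). The Grover stage~\eqref{eq:adjoint-controlled-grover} needs the exact inverse $A^\dagger$ of the preparation, again from condition~(3), and controlled reflections, which are Boolean and exact. It uses no controlled preparation, although condition~(4) would also permit one. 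Four coherent copies and a reversible OR then give acceptance at least $175/256$ on yes-instances and exactly zero on no-instances. Logspace uniformity, $O(\log n)$ width and absolute halting follow as before.

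\textbf{Main obstacle.} The substantive step is the appeal to Lemma~\ref{lem:rql-auxiliary} over $\mathcal G$. This amounts to approximating $\sigma_t^{\pm1/2}$ and the conjugated dilation $\overline{D_t}^{\,\dagger}$ in logspace. Condition~(3) supplies exact conjugates and inverses for $\overline{D_t}$, condition~(1) supplies numerical entries, and condition~(4) supplies controls for block-encoding or singular-value-transformation steps. The reference-state data comes from the two-sided theorem, as in Appendix~\ref{app:aux-synthesis}. I will take this lemma as given, since it is stated earlier. The one point to check is that nothing in steps 3--5 needs exactness of $Q_t$ itself, only exactness of its implemented inverse; that holds because the soundness argument never uses the right factors.
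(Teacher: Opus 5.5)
Your proposal is correct and follows the paper's proof essentially step for step. You build exact dilations $D_t$ over $\mathcal G$ via step~1 of Theorem~\ref{thm:rql-standard}, with condition~(4) supplying the controlled gates, and obtain the right-factor circuits from Lemma~\ref{lem:rql-auxiliary} over $\mathcal G$. Steps 3--5 then go through because conditions (2)--(4) make every inverse, control and Boolean reflection exact, including the exact reversal of the actual circuit $Q_t$, which is the same point the paper relies on.
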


\begin{proof}
Under conditions (1)--(4), the construction in step 1 of
Theorem~\ref{thm:rql-standard} gives exact dilations $D_t$ over
$\mathcal G$. Lemma~\ref{lem:rql-auxiliary} supplies the right-factor
circuits $Q_t$ over the same gate set. Conditions (2)--(4) allow every
inverse, control and Boolean reflection used there to be implemented
exactly. The clean-block identity~\eqref{eq:rql-actual-step} and
steps 3--5 of that proof therefore give an absolutely halting,
polynomial-time unitary simulation in logarithmic space, with zero
acceptance on no-instances and acceptance greater than $2/3$ on
yes-instances. The reverse inclusion follows from the definitions.
\end{proof}
\begin{corollary}[Examples]\label{cor:rql-gate-examples}
The theorem applies to $\Gamma=\{H,T,\mathrm{CNOT}\}$ and to
\[
 \mathcal G_\theta=\Gamma\cup\{P_\theta,P_{-\theta}\},
 \qquad P_\theta=\operatorname{diag}(1,e^{i\theta}),
\]
whenever the added entries satisfy condition (1), including $\theta=1$.
\end{corollary}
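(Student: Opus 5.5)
The statement to prove is Corollary~\ref{cor:rql-gate-examples}. The plan is to verify conditions (1)--(4) of Theorem~\ref{thm:rql-same-gates} separately for $\Gamma$ and for $\mathcal G_\theta$, and then invoke that theorem. Condition (1) is immediate for $\Gamma$: its entries lie in $\{0,\pm1/\sqrt2,1,e^{i\pi/4}\}$, and these can be approximated to $2^{-b}$ in space $O(b)$ by computing $\sqrt2$ with integer arithmetic. For $\mathcal G_\theta$, the added entries are $1$ and $e^{\pm i\theta}$, which satisfy condition (1) by hypothesis. For $\theta=1$, I would compute $\cos 1$ and $\sin 1$ by truncated Taylor series: $O(b)$ terms with $O(b)$-bit fixed-point arithmetic suffice, well within space $O(b)$ and time $2^{O(b)}$.

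Conditions (2) and (3) are short checks. For $\Gamma$, condition (2) holds trivially. Inverses are exact because $H^{-1}=H$, $\mathrm{CNOT}^{-1}=\mathrm{CNOT}$ and $T^{-1}=T^7$. Conjugates are exact because $\overline H=H$, $\overline{\mathrm{CNOT}}=\mathrm{CNOT}$ and $\overline T=T^7$. For $\mathcal G_\theta$, the extra gates satisfy $P_\theta^\dagger=\overline{P_\theta}=P_{-\theta}$, and symmetrically for $P_{-\theta}$, so both inverses and conjugates are primitives. This is exactly why $P_{-\theta}$ is adjoined.

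Condition (4), exact coherent control, is the only step with content. For the standard gates, I reuse step~1 of the proof of Theorem~\ref{thm:rql-standard}. There, controlled-CNOT is Toffoli, built exactly from the CCZ parity identity with $T$ gates. Controlled-$T$ is obtained by computing the AND into clean scratch, applying $T$, and uncomputing. Controlled-$H$ uses $W=SHTHS^{-1}$ with $WZW^{-1}=H$ together with controlled-$Z$. Every extra qubit in these circuits returns to zero.

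For controlled-$P_{\pm\theta}$, I would compute $c\wedge x$ into a clean scratch qubit with Toffoli, apply $P_{\pm\theta}$ to that scratch, and uncompute. Since $P_\theta$ is diagonal with phase $e^{i\theta}$ only on $\ket1$, this applies the phase exactly when $c=x=1$. The result is exactly $\ket0\bra0\otimes I+\ket1\bra1\otimes P_\theta$, including its phase, and the scratch returns to zero.

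The main obstacle is ensuring that these controls are exact \emph{including phase}, not merely up to a global phase. In the Clifford+$T$ constructions the global phases are absent, and in the diagonal trick above no phase appears at all. With all four conditions checked, Theorem~\ref{thm:rql-same-gates} gives the corollary for both gate sets.
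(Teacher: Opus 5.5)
Your proposal is correct and follows the paper's proof. Like the paper, you verify conditions (1)--(4) of Theorem~\ref{thm:rql-same-gates}: for $\Gamma$ you cite step~1 of Theorem~\ref{thm:rql-standard}, and for the added gates you use that $P_\theta$ and $P_{-\theta}$ are each other's inverses and conjugates, with exact controls from the AND--phase--uncompute construction; your Taylor-series argument for $\theta=1$ just makes explicit the paper's one-line remark that the entries meet condition (1).
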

\begin{proof}
The exact inverse, conjugation and controlled circuits for $\Gamma$
were given in step 1 of Theorem~\ref{thm:rql-standard}.
The two added phase gates are each other's inverses and conjugates; the
same AND--phase--uncompute construction controls each of them exactly.
All numerical entries in these examples meet condition (1).
\end{proof}
The equality for $\mathcal G_\theta$ is over that same gate set on both
sides. It does not assert gate-set independence for arbitrary finite
gate sets, nor exact simulation of arbitrary channel matrices supplied
without unitary dilations.

%% file: circuit-Ct.tex
\begin{figure}[htbp]
  \centering
  \begin{tikzpicture}[x=1cm,y=.82cm]
    \foreach \y in {0,1.25,2.5}{\draw[circuit wire] (0,\y)--(9.8,\y);}
    \node[circuit label,anchor=east] at (-.15,2.5) {$A$};
    \node[circuit label,anchor=east] at (-.15,1.25) {$E:\ket{0}$};
    \node[circuit label,anchor=east] at (-.15,0) {$B$};
    \node[circuit gate,minimum width=1.95cm,minimum height=1.7cm]
      at (1.8,1.875) {$D_t$\\[-1pt]\scriptsize unitary dilation};
    \node[circuit auxiliary,minimum width=2.7cm,minimum height=1.7cm]
      at (5.9,.625) {$U_{\overline{W}_t}^{\dagger}$\\[1pt]
      \scriptsize ideal circuit for\\[-1pt]\scriptsize right Kraus factors};
    \node[circuit tag] at (8.75,1.25) {$E=0$};
    \node[circuit label,anchor=west] at (9.95,2.5) {$A$};
    \node[circuit label,anchor=west] at (9.95,0) {$B$};
  \end{tikzpicture}
  \begin{gather*}
    D_t\bigl(\ket{v}_A\ket{0}_E\bigr) = \sum_a K_{t,a}\ket{v}_A\ket{a}_E,
    \\
    U_{\overline{W}_t}\bigl(\ket{v}_B\ket{0}_E\bigr) = \overline{W}_t\ket{v}, \\
    C_t = \bra{0}_E \bigl(U^{\dagger}_{\overline{W}_t}\bigr)_{BE} \bigl(D_t\bigr)_{AE} \ket{0}_E
        = \sum_a K_{t,a} \otimes \overline{J^{\dagger}_{t,a}}.
  \end{gather*}
  \caption{The ideal doubled step $C_t$. Gates run left to right on a
    possibly entangled input on $A,B$. The blue dilation writes the Kraus
    label into $E$; the purple circuit acts on $B,E$ to supply the right
    Kraus factors $\overline{J_{t,a}^{\dagger}}$.
    The green $E=0$ tag selects the clean block, not a measurement.
    $U_{\overline{W}_t}$ is an exact abstract unitary extension of
    $\overline{W}_t$, with no assumed exact synthesis over the gate set.}
  \label{fig:circuit-Ct}
\end{figure}

%% file: circuit-Chat-t.tex
\begin{figure}[htbp]
  \centering
  \begin{tikzpicture}[x=1cm,y=.82cm]
    \foreach \y in {0,1.15,2.3,3.45}{\draw[circuit wire] (0,\y)--(9.8,\y);}
    \node[circuit label,anchor=east] at (-.15,3.45) {$A$};
    \node[circuit label,anchor=east] at (-.15,2.3) {$E:\ket{0}$};
    \node[circuit label,anchor=east] at (-.15,1.15) {$B$};
    \node[circuit label,anchor=east] at (-.15,0) {$Z:\ket{0}$};
    \node[circuit gate,minimum width=1.95cm,minimum height=1.65cm]
      at (1.8,2.875) {$D_t$\\[-1pt]\scriptsize unitary dilation};
    \node[circuit auxiliary,minimum width=2.7cm,minimum height=2.65cm]
      at (5.9,1.15) {$Q_t^\dagger$\\[2pt]
      \scriptsize inverse of the\\[-1pt]\scriptsize implemented circuit};
    \node[circuit tag] at (8.75,2.3) {$E=0$};
    \node[circuit tag] at (8.75,0) {$Z=0$};
    \node[circuit label,anchor=west] at (9.95,3.45) {$A$};
    \node[circuit label,anchor=west] at (9.95,1.15) {$B$};
  \end{tikzpicture}
  \begin{gather*}
    \widehat{C}_t = \bra{0_E 0_Z} \bigl(Q_t^{\dagger}\bigr)_{BEZ} \bigl(D_t\bigr)_{AE} \ket{0_E 0_Z}
        = \sum_a K_{t,a} \otimes H^{\dagger}_{t,a},
    \\
    H_{t,a} = \bra{a_E 0_Z} Q_t \ket{0_E 0_Z},
    \qquad H_t = \sum_a H_{t,a} \otimes \ket{a}, \\
    \norm{H_t - \overline{W}_t} \le \varepsilon
    \quad\Longrightarrow\quad
    \norm{\widehat{C}_t - C_t} \le \varepsilon,
    \qquad
    \norm{\widehat{C}_t} \le 1.
  \end{gather*}
  \caption{The compiled doubled step $\widehat{C}_t \approx C_t$.
    The purple box replaces the ideal inverse in \cref{fig:circuit-Ct}
    by the inverse of the implemented circuit $Q_t$.
    The left Kraus factors $K_{t,a}$ remain exact; the right Kraus
    factors $H_{t,a}^{\dagger}$ act on $B$.
    Label sums include $*$ and padding labels, with zero left Kraus factors.
    Green tags are clean-block conditions, not measurements: the physical
    output is $\widehat{C}_t\ket{\psi}_{AB}\ket{00}_{EZ}$ plus a component
    with nonzero $EZ$, retained coherently.}
  \label{fig:circuit-Chat-t}
\end{figure}

%% file: appendix-auxiliary.tex
\subsection{Uniform implementation of the right Kraus factors}
\label{app:aux-synthesis}

We prove Lemma~\ref{lem:rql-auxiliary}. A clean block is the compression
of an actual unitary between specified input and output subspaces, with
ancillary registers fixed to zero; the input and output subspaces may
have different dimensions. Compression is only notation: every register
is retained coherently. Throughout this proof, dimensions and running times are polynomial
in $n$, and the required error tolerances are inverse polynomial.

\paragraph{Approximating rotations, including their phases.}
All variable one-qubit rotations can be taken in $\mathrm{SU}(2)$.
There is a fixed, finite, inverse-closed set of determinant-one
Clifford+$T$ circuits whose generated subgroup is dense in $\mathrm{SU}(2)$.
Indeed, the determinant map
on the group generated by $H,T$ has finite image, so its kernel is
finitely generated and has finite index. Projective universality of
$H,T$~\cite{BoykinEtAl99} implies that the projective closure of this
kernel is all of the connected group $\mathrm{PU}(2)$; its closure in
$\mathrm{SU}(2)$ is therefore $\mathrm{SU}(2)$.
Apply the phase-sensitive version of space-efficient Solovay--Kitaev,
\cite[Theorem~4.5]{MelkebeekWatson12}, to this fixed set.
It approximates the specified matrix in operator norm, including its phase, with
$O(\log(1/\xi))$ generation space for error $\xi$.
Scalar phases, when needed, can be realized on the zero state of a
helper qubit using a determinant-one diagonal rotation; this helper is
reused and required to be zero in the selected block.
It is never reset: leakage from approximate rotations is included in
the telescoping error bound for the full circuit relative to the ideal diagonal rotations.
This also supplies phase-correct implementations of the controlled
phase gates in a Fourier transform. All controls and inverses are formed
\emph{after} choosing the finite circuit that implements the approximation, using the
exact gate implementations required by the theorem.
The polynomially many gates used to implement the right Kraus factors
can be approximated with a smaller inverse-polynomial error per gate,
by the unitary telescoping bound.

\paragraph{Reference entries as two-sided computations.}
To implement the recurrence for $\sigma_t$, start in
$I/d$. Apply each source channel, then replace the state by
$I/d$ with probability $1/(T+1)$. For this reference-state experiment, the coin
and all gates may be approximated sufficiently accurately.
Their accumulated trace-distance error is bounded by the sum of the
per-step errors.
For a diagonal entry, or the real or imaginary part of an off-diagonal
entry, measure the corresponding one- or two-coordinate Hermitian
observable. Its norm is at most one, and it is implementable using
reversible basis permutations and constant-dimensional rotations.
Polynomially many independent trials with reused workspace and a
classical sum estimate the entry to any inverse-polynomial additive
accuracy, with constant error probability.

Thus, for an entry component $y\in[-1,1]$, the promise problem
$y\ge z+h$ versus $y\le z-h$ is in $\BQL$ for inverse-polynomial $h$.
The established $\BQL=\BQUL$ theorem~\cite[Theorem~1]{FeffermanRemscrim21},
followed by space-efficient unitary error reduction
\cite[Theorem~1.1 and Corollary~1.2]{FeffermanKLMN16} and
two-sided gate approximation, supplies a uniform Clifford+$T$ comparator
with error at most any inverse-polynomial $\zeta$ on promised inputs.
It is a valid unitary circuit on every input, including thresholds in the
promise gap; no correctness assertion there is needed.
This is a quantum estimation argument, not an assertion that the entries
of $\sigma_t$ are classically computable in logarithmic space.

\paragraph{Representing the reference matrix in a unitary circuit.}
Let $b_z$ be the actual acceptance probability of this comparator.
Compute, apply the sign $+1$ on acceptance and $-1$ on rejection, and
uncompute with its actual inverse. The all-zero work block is exactly
$2b_z-1$. For a power-of-two grid of $N$ midpoint thresholds in $[-1,1]$,
prepare and unprepare a uniform grid register. The resulting scalar block
$a(y)$ obeys
\begin{equation}\label{eq:aux-threshold-bound}
 |a(y)-y|\le 2h+\frac4N+2\zeta.
\end{equation}
Indeed, the integral of $\operatorname{sgn}(y-z)$ against the uniform
measure is $y$; the grid contributes at most $2/N$, the thresholds within
$h$ of $y$ contribute at most $2h+2/N$, and comparator errors outside
that interval contribute at most $2\zeta$.

These scalar circuits can be addressed coherently without quantum access
to a classical circuit generator. Enumerate all polynomially many tuples
$(t,i,j,\text{component},z)$ and emit the corresponding fixed-input
comparison circuit controlled on equality with the address registers.
The address is unchanged. Polynomial enumeration and recomputation use
$O(\log n)$ space.

An equal superposition of the real and imaginary constructions, with an
exact phase $i$ on the latter, gives a scalar entry block $B_{ij}/2$.
Prepare a uniform row register, apply the scalar block controlled on row
and input column, swap these two registers, and unprepare the old column.
Its clean matrix block is $B/(2d)$: on column $j$ the clean output is
$(2d)^{-1}\sum_i B_{ij}\ket i$.
A selector qubit in $\ket+$ chooses between this circuit and its
inverse. Selecting the $\ket+$ output component gives the
exact Hermitian block
\[
 A_t=\frac{S_t}{2d},\qquad
 S_t=\frac{B+B^\dagger}{2},\qquad
 \norm{S_t-\sigma_t}\le\mu.
\]
For example, take $h=\zeta=\mu/(32d)$ and a power-of-two
$N\ge64d/\mu$ in~\eqref{eq:aux-threshold-bound}.
The component errors are at most $3\mu/(16d)$, so the complex entry
errors and the bound $\norm{B-\sigma_t}\le d\max_{ij}|B_{ij}-(\sigma_t)_{ij}|$
give the displayed inequality. Choose $\mu\le\lambda/2$, where
$\lambda=1/(d(T+1))$. Then
$\operatorname{spec}(S_t)\subseteq[\lambda/2,2]$.

\paragraph{Square roots and inverse square roots without storing the matrix.}
Let $U$ be the actual unitary with Hermitian clean block $A_t$, and let
$P$ be its zero-scratch projector. Define
\[
 R=\begin{pmatrix}0&U\\U^\dagger&0\end{pmatrix},\qquad
 P'=\ket+\bra+\otimes P,\qquad Q=(2P'-I)R.
\]
$R$ is a Hermitian unitary and its compression to $P'$ is $A_t$.
For an eigenvalue $a$ of $A_t$, the two-reflection walk $Q$ has phases
$\pm\arccos a$ on the associated invariant plane.
Perform coherent phase estimation of $Q$, rotate a flag qubit so that its $\ket0$ amplitude is
$f(\cos\theta)$, and uncompute the phase estimation.
If $|f|\le1$ and $f$ has Lipschitz constant $L$, an estimate accurate
to $\delta$ except with probability $\nu$ gives clean-block error at most
$L\delta+2\nu$. Both phases in the invariant plane have the same
cosine. With inverse-polynomial $\delta,\nu$, ordinary phase estimation
uses $O(\log n)$ qubits and polynomially many controlled powers of $Q$.
The estimates are retained and uncomputed, never measured.
Rotations depending on an estimate are implemented by enumerating its
polynomially many bit strings and controlling each fixed rotation circuit on
equality. The circuit generator therefore needs only classical addresses.
For each flag rotation the entire $\mathrm{SU}(2)$ matrix, including
$\sqrt{1-f^2}$, is approximated to the prescribed gate accuracy.
Quadratically finer scalar precision suffices near $f=\pm1$ and still
uses only $O(\log n)$ bits.

Put $c(x)=\min\{2,\max\{\lambda/2,2dx\}\}$ and choose
\[
 f_+(x)=\sqrt{c(x)/2},\qquad
 f_-(x)=\sqrt{\lambda/2}\,/\sqrt{c(x)}.
\]
These functions are bounded, have polynomially bounded Lipschitz constants,
and have values computable to inverse-polynomial accuracy in logarithmic space.
Their clean blocks approximate $S_t^{1/2}/\sqrt2$ and
$\sqrt{\lambda/2}\,S_t^{-1/2}$, respectively.
For $\norm{S_t-\sigma_t}\le\mu\le\lambda/2$,
\begin{equation}\label{eq:aux-root-stability}
 \norm{S_t^{1/2}-\sigma_t^{1/2}}\le\frac\mu{\sqrt\lambda},
 \qquad
 \norm{S_t^{-1/2}-\sigma_t^{-1/2}}
       \le\frac{2\mu}{\lambda^{3/2}}.
\end{equation}
For the first bound, the difference of roots solves the Sylvester equation
with coefficients $S_t^{1/2},\sigma_t^{1/2}$; its integral solution has
norm at most $\mu/(\sqrt{\lambda/2}+\sqrt\lambda)$.
The second follows by multiplying this root difference on the left and
right by the inverse roots. Thus both unscaled root errors can meet
any inverse-polynomial error tolerance. No eigenvectors are computed or
stored.

\paragraph{Combining the factors to approximate $W_t$.}
The construction below combines a square root, the inverse of the
unitary dilation $D_t$, and an inverse square root, as in the Petz-isometry implementation of
Gily\'en et al.~\cite[Eq.~(9) and Theorem~1]{GilyenLloydMarvianQuekWilde22}.
Section~\ref{sec:doubling-framework} identifies the matching Kraus
factors and the different completion used here. Our preceding reference
construction supplies logarithmic-space uniform access; the exact
left Kraus factors in the main simulation preserve exact rejection.

Pad the number of source Kraus labels to a power of two $r$.
Prepare a uniform label $L$, copy it into the environment $E$, apply
$D_t^\dagger$ to data and $E$, and require $E=0$ in the output block.
The retained label gives exactly
\[
 \frac{F_t}{\sqrt r},\qquad
 F_t=\sum_a K_{t,a}^\dagger\otimes\ket a,
 \qquad \norm{F_t}\le\sqrt r.
\]
Compose this block with the root blocks before and after it, and add a
separate branch for $J_{t,*}\otimes\ket *$.
For a sum, prepare a selector with amplitudes proportional to the square
roots of the coefficient magnitudes, select the component unitary, and
unprepare the selector. Compression yields the sum divided by the sum
of those magnitudes. There are only constantly many branches here.
Products of the constant number of constituent blocks use distinct clean
scratch, or the failure-counter construction in the main proof.
The resulting block is
\[
 s_0(W_t+\Delta),\qquad s_0\ge n^{-O(1)}.
\]
All normalization factors are known scalar functions of $d,T,r$.
For instance the ideal main-branch coefficient is
$2\sqrt{r/(\kappa\lambda)}$ and the star-branch coefficient is
$\sqrt2$, using $\eta/(\kappa d)=\lambda$.
The common scale is their sum's reciprocal.
Errors in roots and scalar rotations change the unscaled block by at
most a polynomial factor times their errors: this follows by a
constant-length product telescoping bound and
$\norm{F_t}\le\sqrt r$, $\norm{\sigma_t^{-1/2}}\le\lambda^{-1/2}$.
Consequently any specified inverse-polynomial bound on $\norm\Delta$
is attainable. Errors in preparing the selector are also bounded after division by
the known normalization factor $s_0$.

\paragraph{Removing the normalization factor.}
Choose a polynomially bounded integer $m\ge1$ for which
$s=\sin(\pi/(4m+2))\le\min\{s_0,1/2\}$ and reduce the selected block's amplitude
by a factor $s/s_0$. First analyze an ideal rotation giving this factor.
Rectangular oblivious amplitude amplification, using the input and
output clean-subspace reflections, transforms the block
$s(W_t+\Delta)$ as follows. Write $W_t+\Delta=VH$ in polar form.
If $\norm\Delta\le\gamma\le1/4$, then
\[
 \norm{V-W_t}\le2\gamma,\qquad
 \operatorname{spec}(H)\subseteq[1-\gamma,1+\gamma].
\]
After $m$ iterations the block is, with its known sign corrected,
\[
 V f(H),\qquad
 f(x)=\sin\bigl((2m+1)\arcsin(sx)\bigr).
\]
Here $f(1)=1$ and $|f'(x)|<3$ on that interval, since
$(2m+1)s\le\pi/2$ and $sx\le5/8$.
Thus $\norm{Vf(H)-W_t}\le5\gamma$.
Use $\gamma\le\varepsilon/20$. Approximate this extra rotation so
that the complete amplified circuit changes by at most $\varepsilon/4$;
per-call error $O(\varepsilon/(m+1))$ suffices by unitary telescoping.
The preceding approximation errors can be included in $\Delta$ before
amplification. The resulting block is within $\varepsilon$ of $W_t$,
with all Kraus labels retained and no additional normalization factor.
Thus this implementation avoids a fixed loss at every source step.

Finally conjugate this actual circuit gate by gate. For $\Gamma$, use its exact
complex-conjugation circuits; more generally, use condition (3) of
Theorem~\ref{thm:rql-same-gates},
to obtain $Q_t$ and its labelled block within $\varepsilon$ of
$\overline W_t$. The initial states and projectors selecting the block have real entries, so
conjugation commutes with taking this block.

\paragraph{An explicit precision budget.}
The following choices make the error estimates quantitative. Let
$0<\varepsilon\leq1$ be the desired error in the labelled block of
Lemma~\ref{lem:rql-auxiliary}, and set
\[
 \lambda=\frac1{d(T+1)},\qquad
 K=2\sqrt{\frac r{\kappa\lambda}}+\sqrt2=\frac1{s_0},
 \qquad \gamma=\frac{\varepsilon}{20}.
\]
Here $r$ is the padded number of source Kraus labels. All tolerances
in Table~\ref{tab:aux-precision} may be rounded down to dyadic numbers
within a factor of two of the stated bound. The grid size is rounded up
to a power of two. These choices require only logarithmically many bits.

\begin{table}[tbp]
\centering
\small
\setlength{\tabcolsep}{5pt}
\renewcommand{\arraystretch}{1.22}
\begin{tabularx}{\textwidth}{@{}p{0.25\textwidth}p{0.33\textwidth}X@{}}
\toprule
Quantity & Sufficient choice & Resulting error or cost \\
\midrule
Reference-matrix error
 & $\mu\leq\gamma\lambda/(16K)$
 & $\norm{S_t-\sigma_t}\leq\mu$ \\
Threshold comparison
 & $h=\zeta=\mu/(32d)$; $N\geq64d/\mu$
 & Gives the preceding matrix bound by~\eqref{eq:aux-threshold-bound} \\
Reference experiment
 & Trace-norm error at most $h/4$ per trial; $\lceil128/h^2\rceil$ trials
 & Constant-error comparison with threshold gap $h$ \\
Root phase estimation
 & $\delta\leq\gamma\lambda/(32dK)$; $\nu\leq\gamma/(32K)$
 & Functional-calculus error at most $\gamma/(8K)$ per normalized root \\
Polar amplification
 & Integer $K\leq m\leq2K+2$; $s=\sin(\pi/(4m+2))$
 & Error at most $5\gamma=\varepsilon/4$ before final gate synthesis \\
Variable rotations
 & Operator-norm error $\xi\leq\varepsilon/(4M)$ per occurrence
 & At most $\varepsilon/4$ additional error in the full amplified circuit \\
\bottomrule
\end{tabularx}
\caption{One sufficient error budget for a single right Kraus isometry.
$M\geq1$ bounds the number of variable-rotation occurrences in the
complete amplified circuit before those rotations are synthesized.
Repeated occurrences, including those in inverses and controlled
circuits, are counted separately.}
\label{tab:aux-precision}
\end{table}

For the reference experiment, a trace-norm error of $h/4$ changes the
expectation of a norm-one observable by at most $h/4$. Hoeffding's
inequality gives empirical error at most $h/4$, except with probability
at most $2e^{-4}<1/3$, using the stated number of trials. Thus a promised
distance of at least $h$ from the threshold leaves a constant-probability
correct comparison. The required trace-norm bound follows by telescoping
the implemented reference channels: for at most $T$ channels, an error
at most $h/(4T)$ in diamond norm per channel suffices. For a reference trial with $g$ approximated unitary gate occurrences,
per-gate operator-norm error at most $h/(8g)$ suffices; the intervening
channels contract trace distance.
Applying \cite[Theorem~1]{FeffermanRemscrim21} gives a unitary comparator
with constant gap. Then \cite[Corollary~1.2]{FeffermanKLMN16}, with
$p=\lceil\log_2(2/\zeta)\rceil$, reduces its error to at most $\zeta/2$.
A further two-sided synthesis error of at most $\zeta/2$ in its acceptance
probability gives the actual comparator error $\zeta$ used above.

To check the root budget, both $f_+$ and $f_-$ have Lipschitz constant
at most $2d/\lambda$: on the unclipped interval their derivatives are
bounded respectively by $d/\sqrt\lambda$ and $2d/\lambda$, and clipping
does not increase these bounds. Hence $L\delta+2\nu\leq\gamma/(8K)$.
Equation~\eqref{eq:aux-root-stability} changes either normalized root by
at most $2\mu/\lambda\leq\gamma/(8K)$. Each implemented root block
therefore differs from its block defined using $\sigma_t$ by at most
\[
 \varepsilon_{\rm root}=\frac{\gamma}{4K}.
\]
Write $a=2\sqrt{r/(\kappa\lambda)}$ and $b=\sqrt2$, so $a+b=K$.
The main branch is a product of two normalized root blocks and the
contraction $F_t/\sqrt r$; the additional branch uses one normalized
inverse root. Every factor is a contraction. Product telescoping
therefore bounds the error in the unscaled sum by
\[
 \norm\Delta\leq (2a+b)\varepsilon_{\rm root}\leq2K\varepsilon_{\rm root}=\gamma/2.
\]
The polar-amplification bound proved above consequently applies.
The chosen $m=O(K)$ guarantees $s\leq s_0$ and $s\leq1/2$.

In this calculation the comparator circuits are already finite
Clifford+$T$ circuits, while the remaining variable rotations are first
analyzed as exact matrices. The number $M$ of their occurrences is
fixed by the preceding circuit construction and does not depend on
their eventual synthesis accuracy. Approximate all of them, including
the selector, Fourier-transform, root-flag and attenuation rotations,
with the final tolerance $\xi$. The unitary telescoping bound adds at
most $M\xi\leq\varepsilon/4$ to the amplified circuit's error.
Thus the total labelled-block error is at most $\varepsilon/2$, leaving
slack within the required $\varepsilon$.
In particular, the main simulation's choice
$\varepsilon=1/(12Td)$ makes the accumulated doubled-step error at most
$T\varepsilon=1/(12d)$.

The table specifies tolerances rather than claiming a particular
polynomial degree for the imported two-sided simulator. The reference
trials, address enumeration, phase-estimation powers and $O(K)$
amplification calls each have length polynomial in $n,d,T,r$ and
$1/\varepsilon$. For example, ordinary phase estimation can attain
$(\delta,\nu)$ with $O(1/(\delta\nu))$ controlled walk calls and
$O(\log(1/(\delta\nu)))$ estimation qubits. The imported comparator has
polynomial length and logarithmic space by the cited theorems.
Consequently $M$ and the fully synthesized circuit length are
polynomial, while every address, counter, precision register and
rotation generator uses $O(\log n+\log(1/\varepsilon))$ space.
There are only constantly many nested constructions. For the
inverse-polynomial tolerances used here, the total space is $O(\log n)$.

\paragraph{Time and space bounds.}
All lower bounds on eigenvalues and all error tolerances above are
inverse polynomial.
The number of address tuples, phase-estimation calls, amplification
iterations, and gates is polynomial. Scalar calculations keep only a
constant number of $O(\log n)$-bit registers; loop bodies and gate sequences
are regenerated when needed.
Only a constant number of constructions are nested: the two-sided
comparison circuit, reference block, root block, construction of $W_t$, and
amplification. Each level uses logarithmic workspace, hence their sum
does also. Repeated powers are powers of full unitaries on the same
registers and do not allocate new workspace on each call.
The reference estimator uses the original measured computation; the
two-sided measurement-elimination theorem is used as an established result.
There is no recursive invocation of the one-sided simulator and no
assumption of a classically accessible reference-state matrix.
This proves Lemma~\ref{lem:rql-auxiliary}.

%% file: result-bql.tex
\clearpage
\section{The conditional \texorpdfstring{$\BQL$}{BQL} application}

\label{sec:doubling-bql}

Here $\BQL$ and $\BQUL$ refer to the measured and unitary
polynomial-time, logarithmic-space models over
$\Gamma=\{H,T,\mathrm{CNOT}\}$, with acceptance at least $2/3$ on
yes-instances and at most $1/3$ on no-instances. We separate the
two-sided correctness argument from the implementation of its clean
blocks.
The implementation of the right Kraus factors in
Appendix~\ref{app:aux-synthesis} uses the
established equality $\BQL=\BQUL$; the result below is a conditional
doubling simulation, rather than an independent proof of that equality.

\begin{proposition}[Conditional bounded-error simulation]
\label{prop:doubling-bql}
Suppose that, for every polynomial-time, logarithmic-space source
computation over $\Gamma$ and every
inverse-polynomial $\varepsilon>0$, there are polynomial-size,
logspace-uniform unitary circuits $V_t$ over $\Gamma$, on $O(\log n)$
qubits, whose zero-scratch blocks $\widehat C_t$ satisfy
\[
 \norm{\widehat C_t-C_t}\le\varepsilon.
\]
Then the reweighted doubling construction, preceded by measured source
error reduction and followed by unitary amplification, simulates each
$\BQL$ computation in $\BQUL$.
\end{proposition}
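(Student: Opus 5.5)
Starting from a $\BQL$ machine with acceptance probability $p$, I would first reduce its error by running it $k=O(\log d')$ times independently with reused workspace and taking a majority vote. Here $d'$ is the dimension of the source register, which grows by only $O(\log\log n)$ bits. The result is still a measured logarithmic-space computation, with $p\ge 1-\epsilon_0$ on yes-instances and $p\le\epsilon_0$ on no-instances. The point of this step is that the doubling signal scales like $p^2/d$ on one side and at most $p$ on the other. The gap can therefore survive only if $\epsilon_0$ is small compared with $1/d$. The difficulty is circular: amplifying changes $d$ itself. I would break the circularity by noting that the repetition adds only $O(\log k)$ qubits for counters, so $d$ grows by a $\poly(k)$ factor while $\epsilon_0$ falls like $2^{-\Omega(k)}$. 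A choice $k=c\log n$ with $c$ large enough then gives $\epsilon_0\le 1/(100 d^{2})$ for the final $d$.

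Next I would apply the hypothesis with $\varepsilon=1/(12Td)$ to obtain $\widehat C_t$. I would compose these exactly as in steps 4--5 of the proof of Theorem~\ref{thm:rql-standard}: the failure counter $F$ of modulus $N>T$ selects exactly $\widehat C_T\cdots\widehat C_1$ with no intermediate measurement. The telescoping bound gives $\norm{\widehat C_T\cdots\widehat C_1-C_T\cdots C_1}\le 1/(12d)$. Writing $\widehat q$ for the selected accepting probability and using \eqref{eq:doubling-signal-bounds}, yes-instances give $\sqrt{\widehat q}\ge \sqrt{q_*}-1/(12d)\ge (1-\epsilon_0)/\sqrt{ed}-1/(12d)$, which is at least $1/(3\sqrt d)$ say. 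No-instances give $\sqrt{\widehat q}\le\sqrt{\epsilon_0}+1/(12d)$. For this to separate, the error must be $o(1/\sqrt d)$, which $1/(12d)$ is. Thus $\widehat q\ge 1/(9d)$ against $\widehat q\le 1/(25d)$, roughly. I would tighten the constants by choosing $\varepsilon=1/(100Td)$ and $\epsilon_0\le 1/(10^4 d)$; these choices only change polynomial factors.

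The remaining step is to turn an inverse-polynomial gap of the form $a/d$ versus $b/d$, with $a>b$ constants, into $2/3$ versus $1/3$ unitarily. I would not use Grover iteration here, because its result depends on the exact angle. Instead, \textbf{one route} is to note that $\widehat q$ is the acceptance probability of a logarithmic-space unitary circuit, and then invoke space-efficient unitary error reduction for $\BQUL$-type gaps~\cite{FeffermanKLMN16}. That result handles any inverse-polynomial gap with known thresholds and preserves logarithmic space. The alternative is to apply the randomized-iteration-count amplitude amplification of step 5. It gives success $\frac12-\frac{\sin(4M\theta)}{4M\sin 2\theta}$, and I would choose $M$ so that this average separates the two regimes. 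That is less clean, so I would go with the error-reduction citation. The main obstacle, I expect, is the first paragraph's balancing of $\epsilon_0$ against the enlarged $d$: I need to check that the squared lower bound $p^2/(ed)$ dominates the no-instance bound $\epsilon_0+O(\varepsilon T)$ once $d$ includes the majority-vote counters. Uniformity, absolute halting and the space bound then follow as in Theorem~\ref{thm:rql-standard}.
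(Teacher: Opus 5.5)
Your proposal is correct, but it departs from the paper's proof in two places.

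\textbf{Error reduction.} The paper runs $2n+1$ repetitions. Their majority error $e^{-(2n+1)/18}$ beats $1/(10^6d^2)$ for all large $n$, because $d$ is only polynomial in $n$. The repetition count can therefore be fixed before $d$ is defined, and no balancing is needed. Your $c\log n$ repetitions, with the $\poly(k)$-versus-$2^{-\Omega(k)}$ trade-off, also work and are cheaper, at the cost of that bookkeeping.

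\textbf{Amplification.} You import space-efficient unitary error reduction~\cite{FeffermanKLMN16} for the gap $1/(9d)$ versus $1/(25d)$. The paper instead stays self-contained: a Hadamard coin, a retained uniform iteration count $j<M=4d$ of Grover iterates built from exact reflections and the actual inverse, and four coherent copies joined by a reversible OR.

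Your stated reason for avoiding Grover is mistaken. Averaging over $j$ removes any dependence on the exact angle, since $\frac12-\frac{\sin(4M\theta)}{4M\sin 2\theta}\ge\frac14$ needs only $\sin 2\theta\ge1/(4\sqrt d)$. What the averaged iterate really requires is a no-instance bound $\widehat q=O(1/d^2)$, because soundness uses $\sin^2((2j+1)\theta)\le(2j+1)^2q$ with $2j+1<8d$. That is why the paper takes $\delta=1/(10^6d^2)$ and $\varepsilon=1/(1000Td)$, giving $\widehat q\le(\sqrt\delta+\Delta)^2=4/(10^6d^2)$.

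\textbf{What each route buys.} Yours tolerates the weaker $\widehat q=O(1/d)$ but rests on an additional imported theorem. That is harmless for two-sided error, and the paper already relies on the same result for its reference-entry comparators. The paper's route keeps every amplification step exact over $\Gamma$ with explicit constants.

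\textbf{Two small slips.} First, your later ``tightened'' choice $\epsilon_0\le1/(10^4d)$ is actually weaker than the earlier $1/(100d^2)$, though either suffices for your route. Second, the closing comparison of $p^2/(ed)$ with $\epsilon_0+O(\varepsilon T)$ is too crude, since $T\varepsilon=\Theta(1/d)$ is of the same order as the signal. Your earlier amplitude-level comparison of $p/\sqrt{ed}-T\varepsilon$ with $\sqrt{\epsilon_0}+T\varepsilon$ already settles that ``main obstacle.''
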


\begin{proof}
Fix a source machine and an input of length $n$.

\paragraph{Reduce source error before doubling.}
Run $2n+1$ independent repetitions of the original machine, resetting
and reusing its workspace, and take their majority with a classical
counter. Hoeffding's bound gives error at most $e^{-(2n+1)/18}$.
The counter uses $O(\log n)$ bits, and the running time remains
polynomial. Let $d$ be the padded power-of-two configuration dimension
of this amplified source, and $T\ge1$ a polynomial bound on its running
time. Both are polynomial in $n$. Therefore, for all sufficiently large
$n$, its acceptance probability satisfies
\begin{equation}\label{eq:bql-source-error}
 p\ge2/3\quad\text{on yes-instances},\qquad
 p\le\delta:=\frac1{10^6d^2}\quad\text{on no-instances}.
\end{equation}
The finitely many smaller input lengths can be handled directly.
The repetition count is fixed before defining $d$, so this argument does
not choose an accuracy circularly through the workspace it requires.

\paragraph{Compose the approximate contractions.}
Use $\varepsilon=1/(1000Td)$. Since every $\widehat C_t$ is a block
of an actual unitary, both $C_t$ and $\widehat C_t$ are contractions.
Consequently
\[
 \norm{\widehat C_T\cdots\widehat C_1-C_T\cdots C_1}
 \le T\varepsilon=:\Delta=\frac1{1000d}.
\]
Give every step the same number of scratch qubits by adding unused qubits. To realize the product
coherently, reuse this scratch and add a
counter modulo a power of two larger than $T$. Apply every $V_t$
unconditionally, then increment the counter if the scratch is nonzero.
Expand the product using the projectors onto zero and nonzero scratch.
Final counter zero selects exactly those terms with zero scratch after
every step: the counter can increase at most $T$ times and cannot wrap
to zero.
This is the
construction proved in~\eqref{eq:adjoint-counter-step}.

The good subspace requires counter zero, scratch zero, and acceptance
on the first data register. If its probability is $\widehat q$, then
\[
 \left|\sqrt{\widehat q}-\sqrt{q_*}\right|\le\Delta.
\]
By~\eqref{eq:doubling-signal-bounds} and~\eqref{eq:bql-source-error},
\begin{align*}
 \sqrt{\widehat q}&\ge\frac1{3\sqrt d}-\frac1{1000d}
                 \ge\frac1{4\sqrt d}
 &&\text{on yes-instances},\\
 \widehat q&\le(\sqrt\delta+\Delta)^2
             =\frac4{10^6d^2}
 &&\text{on no-instances}.
\end{align*}
The first line uses $p\ge2/3$ and $e<4$.

\paragraph{Amplify with both errors controlled.}
Append a Hadamard coin and require coin one in the good predicate.
The resulting preparation $A$ has success $q=\widehat q/2\le1/2$.
Thus $q\ge1/(32d)$ on yes-instances and
$q\le2/(10^6d^2)$ on no-instances. All zero-state and good-subspace
reflections, controls, and actual inverses are exactly implementable
over $\Gamma$.

Prepare and retain a uniform choice $j\in\{0,\ldots,M-1\}$, where
$M=4d$, and apply $j$ Grover iterations to $A\ket0$
\cite{BrassardHoyerMoscaTapp02}.
For $q=\sin^2\theta>0$, the averaged success is
\[
 \frac1M\sum_{j=0}^{M-1}\sin^2((2j+1)\theta)
 =\frac12-\frac{\sin(4M\theta)}{4M\sin(2\theta)}.
\]
On yes-instances, $q\in[1/(32d),1/2]$ gives
$\sin(2\theta)\ge1/(4\sqrt d)$, so success is at least $1/4$.
On no-instances, the bound
$|\sin(k\theta)|\le k|\sin\theta|$ for integer $k\ge1$ gives
\[
 \sin^2((2j+1)\theta)\le(2j+1)^2q
 <(8d)^2\frac2{10^6d^2}=\frac{128}{10^6}.
\]
This also holds when $q=0$. Four independent coherent copies followed
by a reversible OR of their good predicates therefore accept with
probability at least $1-(3/4)^4=175/256>2/3$ on yes-instances, and
less than $512/10^6<1/3$ on no-instances.
Only the final output qubit is measured.

The reused scratch, counter, choice register, and a constant number of
copies require $O(\log n)$ space. All repetition counts and circuit
lengths are polynomial, and the loops and inverse circuits can be
generated in logarithmic space.
\end{proof}

\paragraph{Implementation and logical dependency.}
Lemma~\ref{lem:rql-auxiliary} and the clean-block construction
in~\eqref{eq:rql-actual-step} supply the premise of
Proposition~\ref{prop:doubling-bql} for $\Gamma$.
Their reference-entry comparators use the established two-sided
measurement-elimination theorem of Fefferman and Remscrim
\cite{FeffermanRemscrim21}, as explained in
Appendix~\ref{app:aux-synthesis}. Thus the present implementation uses
$\BQL=\BQUL$ as an input. An independent doubling proof of that
equality would require computing the reference entries and implementing
the right Kraus factors without using that theorem. This dependence does not affect the
one-sided theorems in the main text, which use the established two-sided
result as a premise.

%% file: appendix-background.tex
\clearpage
\section{Further quantum-space background}
\label{app:space-background}
This appendix expands the quantum-space background of
Section~\ref{sec:intro-background}. Figure~\ref{fig:logspace-map}
in that section summarizes the established containments.

\paragraph{The quantum space landscape.}
Watrous established the foundations of space-bounded quantum complexity and its classical simulation~\cite{watrous1999space,Watrous03}.
For standard finite algebraic gate sets, these results place bounded-error quantum logspace $\BQL$ inside classical polynomial time and deterministic $O(\log^2 n)$ space; the corresponding unbounded-error quantum and probabilistic space classes coincide.
These upper bounds leave room for quantum advantages over classical logarithmic space.
One early example is Watrous' unitary simulation of random walks, which placed undirected connectivity in $\RQUL\cap\mathsf{co}\RQUL$~\cite{Watrous01}.
A different characterization was obtained by Jozsa, Kraus, Miyake and Watrous: polynomial-size, logspace-uniform nearest-neighbour matchgate circuits have the same computational power as polynomial-time unitary quantum logspace~\cite{JozsaKrausMiyakeWatrous10}.

Linear algebra provides another useful description of this power.
Ta-Shma gave quantum logspace algorithms for approximating inverses of polynomially well-conditioned matrices~\cite{TaShma13}.
Fefferman and Lin obtained unitary algorithms and matching completeness results for suitably normalized matrix-inversion and minimum-eigenvalue problems~\cite{FeffermanLin18}.
The normalization and condition-number promises are essential: these results do not put unrestricted matrix inversion in quantum logspace.
Space-efficient amplification is a further ingredient.
Fefferman, Kobayashi, Lin, Morimae and Nishimura showed how to reduce error to $2^{-p}$ using only $O(\log(p/\Delta))$ additional qubits, where $\Delta$ is the original completeness--soundness gap~\cite{FeffermanKLMN16}.
As a more recent application of quantum linear algebra, Apers and Edenhofer gave a quantum logspace procedure for counting directed $s$--$t$ paths under polynomial bounds on the number of paths starting at $s$ and ending at $t$~\cite{ApersEdenhofer25}.
The promise concerns these larger families of paths, not merely the number of $s$--$t$ paths.

\paragraph{Removing measurements with two-sided error.}
Fefferman and Remscrim proved $\BQL=\BQUL$, and the analogous equality for space-constructible bounds $S\geq\log n$~\cite[Theorem~1]{FeffermanRemscrim21}.
Their result covers general computations, including classical control conditioned on measurement outcomes.
The proof relates their acceptance probabilities to well-conditioned linear-algebraic problems that admit unitary space-efficient algorithms.
Independently, Girish, Raz and Zhan gave a unitary quantum logspace algorithm for powering contraction matrices~\cite{GirishRazZhan21}.
They used it to simulate \emph{unital} quantum computations, including circuits with unitary gates and intermediate measurements but without a separate classical memory that stores and acts on the outcomes~\cite[Section~5]{GirishRazZhan21}.
These are distinct model assumptions: a reset channel, for example, is not unital.

For $S=O(\log n)$ and $T=\poly(n)$, these simulations have polynomial running time; their dependence on a general space bound is exponential in $S$.
Girish and Raz obtained a different trade-off: for circuits with unitary operations and intermediate measurements, time $T$ and space $S\geq\log T$ can be replaced by time $T\,\poly(S)$ and space $O(S\log T)$~\cite{GirishRaz22}.
Zhandry proved lower bounds for black-box measurement-elimination procedures, showing limitations on simultaneously preserving time and space in that setting~\cite{Zhandry24}.
Postselection gives a related comparison: Le Gall, Nishimura and Yakary\i lmaz proved $\mathsf{PostBQL}=\PL$, and Tani showed that intermediate measurements and intermediate postselections can be deferred in the space-bounded, bounded-error postselection model~\cite{LeGallNishimuraYakaryilmaz21,Tani22}.
These results concern their respective computational models; postselection is not an operation available to an ordinary $\RQUL$ machine.

\paragraph{Why one-sided error is different.}
An additive approximation to an acceptance probability preserves a constant two-sided gap, but need not preserve perfect soundness.
Fefferman and Remscrim explicitly isolated the remaining one-sided bounded-error question as Open Problem~1~\cite[Section~6]{FeffermanRemscrim21}.
Their partial characterization includes
\[
  \RQUL\subseteq\RQL\subseteq\mathsf{coQ_UMAL}_1,
\]
where the last class is the complement of unitary quantum Merlin--Arthur logspace with perfect completeness~\cite[Theorem~32 and Lemma~36]{FeffermanRemscrim21}.
They also established equality between general and unitary quantum logspace with one-sided \emph{unbounded} error~\cite[Theorem~41]{FeffermanRemscrim21}; allowing an arbitrarily small nonzero accepting probability is a different requirement from retaining an inverse-polynomial signal.

Le Gall, Liu and Wang subsequently found natural complete promise problems for $\mathsf{co}\RQUL$: given descriptions of logarithmic-space state-preparation circuits, distinguish equality of the output states from inverse-polynomial separation in trace distance or squared Hilbert--Schmidt distance~\cite[Theorem~1.1]{LeGallLiuWang23}.
Their algorithms use space-efficient quantum singular value transformation.
Their gate-set convention is relevant here: they allow single-qubit gates with space-computable amplitudes and explicitly discuss the difficulty of changing gate sets under one-sided error~\cite[Remark~2.7]{LeGallLiuWang23}.
For two-sided bounded error, the space-efficient Solovay--Kitaev theorem makes the class independent of the gate set under the usual universality
and computability assumptions, with each gate's inverse available~\cite{MelkebeekWatson12}.
It does not establish gate-set independence when zero acceptance must
be preserved exactly.
A same-gate-set statement must therefore specify which operations are synthesized exactly and which are approximated.

%% file: appendix-formal.tex
\clearpage
\section{Machine-checked mathematical lemmas}
\label{app:formal-verification}
The Lean project verifies the supporting lemmas in
Table~\ref{tab:formal-coverage}, with the displayed assumptions. It does
not formalize the complexity-class equalities.

\begin{table}[htbp]
\centering
\small
\setlength{\tabcolsep}{4pt}
\renewcommand{\arraystretch}{1.18}
\begin{tabular}{p{.22\linewidth}p{.34\linewidth}p{.37\linewidth}}
\hline
Paper component & Checked statements & Supplied assumptions or remaining steps \\
\hline
Reweighted doubling &
Exact zero preservation for arbitrary right factors; vectorization;
reweighted-step identity and telescoping. &
Finite history expansion and supplied matrices satisfying the inverse
identities. Matrix-root construction is not formalized. \\
Approximation and amplification &
Product error at most $T\varepsilon$; norm and probability perturbation;
averaged-Grover identities and scalar gap margins. &
Per-step errors and norm bounds; the reduction of the circuit to a
Grover rotation and its angle estimate. \\
Failure counter and recovery map &
Zero count selects exactly the all-zero-scratch histories;
normalization and reference recovery of the Kraus family. &
The circuit's history expansion; explicit root, channel-image and
trace-preservation equations. \\
Witness histories &
Kernel equivalence and the singular-value bound in
Lemma~\ref{lem:qmal-history-gap}, with Euclidean direct-sum norms. &
Bounds on interval products, the honest history and the boundary map;
the original rejection lower bound. \\
Exact matrix test &
Paired coefficients, inactive-branch swaps, common phase, padding and
uniform-address matrix identities from
Lemma~\ref{lem:qmal-exact-block}. &
The gate implementation and address-controlled circuit construction. \\
Coherent witness amplification &
Rejection geometry; the $PQP$ norm bound; exact completeness;
constant soundness and finite failure-register norm preservation. &
The exact unitary block, an isometric embedding of initialized inputs,
and the history matrix's lower norm bound. \\
\hline
\end{tabular}
\caption{Scope of the Lean verification. The supplement's
\texttt{formal/README.md} gives the module and theorem names for each row.}
\label{tab:formal-coverage}
\end{table}

\paragraph{Boundary of the verification.}
Quantum-machine semantics, the circuit compiler, the acceptance-probability
bounds~\eqref{eq:doubling-signal-bounds}, uniform reference access,
square-root synthesis, the CM-field encoding, and the complete uniformity
and resource analysis remain ordinary mathematical proofs. For the
witness result, positivity and the analytic bounds for backward channel
maps are also supplied as hypotheses. Passing the Lean check does not
certify these interfaces or the resulting complexity-class equalities.

\paragraph{Reproducing the check.}
The supplement \texttt{formal-verification-source.zip} contains the
\texttt{formal/} project; \texttt{VERIFY-MANIFEST.json} records its
source revision and file hashes. It pins Lean~4.30.0 and all library
dependencies. After installing \texttt{elan}, run from \texttt{formal/}:
\begin{verbatim}
lake exe cache get
python3 verify.py
\end{verbatim}
The checker rebuilds proofs, rejects placeholders and audits foundational
axioms; its report records source hashes. The README gives setup details
and verification boundaries. Preserve the supplied dependency manifest
and rerun the check after edits.